\DeclarePairedDelimiter\abs{\lvert}{\rvert}%
\DeclarePairedDelimiter\norm{\lVert}{\rVert}%
\numberwithin{equation}{section}
\newtheorem{Theorem}{Theorem}[section]
\newtheorem{Lemma}{Lemma}[section]
\newtheorem{Corollary}{Corollary}[section]
\newtheorem{Proposition}{Proposition}[section]
\newtheorem{Assumption}{Assumption}[section]
\newcounter{defcounter}
\newenvironment{Mequation}{%
\addtocounter{equation}{-1}
\refstepcounter{defcounter}

\begin{equation}}
{\end{equation}}
\newcounter{defcounterr}
\newcounter{defcounterrr}
\newcounter{defcounterrrr}
\newcounter{defcounterrrrr}
\newcommand{\bW}{\mathbf{W}}
\newcommand{\bSig}{\mathbf{\Sigma}}
 \newcommand \ag{\alpha}
\renewcommand{\baselinestretch}{1.05}
\begin{document}

\title{Detection of periodicity  in functional time series}
\author{Siegfried H\"ormann$^1$\thanks{Corresponding author.
Email: shormann@ulb.ac.be}
\and Piotr Kokoszka$^2$
\and Gilles Nisol$^{3}$}
\date{\today}
\maketitle
\vspace{-1cm}
$^{1}$ Department of Mathematics, Universit\'e libre de Bruxelles
CP210, Bd.\ du Triomphe, B-1050 Brussels, Belgium.

$^2$ Department of  Statistics,
Colorado State University, Fort Collins, CO 80523-1877, USA.

$^{3}$ ECARES, Universit\'e libre de Bruxelles, 
50 Avenue Franklin Roosevelt, B-1050 Brussels, Belgium.
\begin{abstract}
We derive several tests for the presence of a periodic component
in a time series of functions.  We consider both the traditional
setting in which the periodic functional signal is contaminated by
functional white noise, and a more general setting of a contaminating
process which is weakly dependent. Several forms of the periodic
component are considered. 
Our tests are motivated by the likelihood principle and  fall into two broad categories, which
we term multivariate and fully functional. Overall, for the functional
series that motivate this research, the fully functional tests exhibit a
superior balance of size and power. Asymptotic null distributions of
all tests are derived and their consistency is established. Their
finite sample performance is examined and compared by
numerical studies and application to pollution data.
\end{abstract}

\noindent
\emph{MSC 2010 subject classifications:} Primary 62M15, 62G10; secondary 60G15, 62G20
\\[1ex]
\emph{Keywords:} functional data, time series data, periodicity, spectral analysis, testing, asymptotics.

\footnotetext{
This research was supported by the
Communaut\'e fran\c{c}aise de Belgique---Actions de Recherche
Concert\'ees (2010--2015), Interuniversity Attraction Poles Programme (IAP-network P7/06) of the Belgian Science Policy Office, 
by  United States NSF grant DMS--1462067 and by the Fonds de la Recherche Scientifique - FNRS under Grant MCF/FC 24535233.}

\section{Introduction}\label{s:i}
Periodicity is one of the most important characteristics of time
series, and tests for periodicity go back to the very origins of the
field, e.g.\ \citetext{schuster:1898}, \citetext{walker:1914},
\citetext{fisher:1929}, \citetext{jenkins:priestley:1957},
\citetext{hannan:1961}, among many others.
An excellent account of these early developments is given in Chapter
10 of \citetext{brockwell:davis:1991}.

We respond to the need to develop periodicity tests for time series of
functions---short
\emph{functional time series (FTS's)}.  Examples of
FTS's include annual temperature or smoothed precipitation curves,
e.g. \citetext{gromenko:kokoszka:reimherr:2016}, daily pollution level
curves, e.g. \citetext{aue:dubartNorinho:hormann:2015}, various daily
curves derived from high frequency asset price data,
e.g. \citetext{horvath:kokoszka:rice:2014}, yield curves,
e.g. \citetext{hays:shen:huang:2012}, daily vehicle traffic curves,
e.g. \citetext{klepsch:kluppelberg:wei:2016}.  More complex objects,
like sequences of 2D satellite images or 3D brain scans, have also
been considered, e.g. \citetext{jun:stein:2008} and
\citetext{sarty:2007}, but the FDA methodology for such time series
of complex data objects is still under development.
Our theory covers such series, but the numerical implementation
we developed currently applies only to functions defined on an interval.

This work is motivated both by the need to address a general
inferential problem and by specific data with which we have worked
over the past decade. We first discuss the general motivation, then we
illustrate it using the  data.

 Most inferential procedures for FTS's require that the series be
 stationary (several examples for such procedures can be found in
 \citetext{HKbook}). However, pollution levels, finance or traffic
 data may  exhibit periodic (e.g.\ weekly) patterns,  and then the stationarity
 assumption is violated. \citetext{horvath:kokoszka:rice:2014} propose
 several  testing procedures from the so-called KPSS family to test
 the  stationarity of an FTS. Their approach is based on functionals of
 a CUSUM process, which makes it powerful when testing against changes
 in the mean or against integration of order 1. However, it is not
 designed for testing against a periodic signal.
Finding periodicity in a data set is also of direct relevance
 for understanding the problem at hand as will be illustrated
in Section~\ref{s:realdata}. Exploiting the full
 information contained in the shapes of
functions is crucial. Tests of periodicity
for FTS's can be applied to the observed functions or to
residual functions obtained after fitting some model.
If periodicity is found in the residuals, it may  indicate an
inadequate model fit.

\begin{figure}
\includegraphics[width=7.5cm]{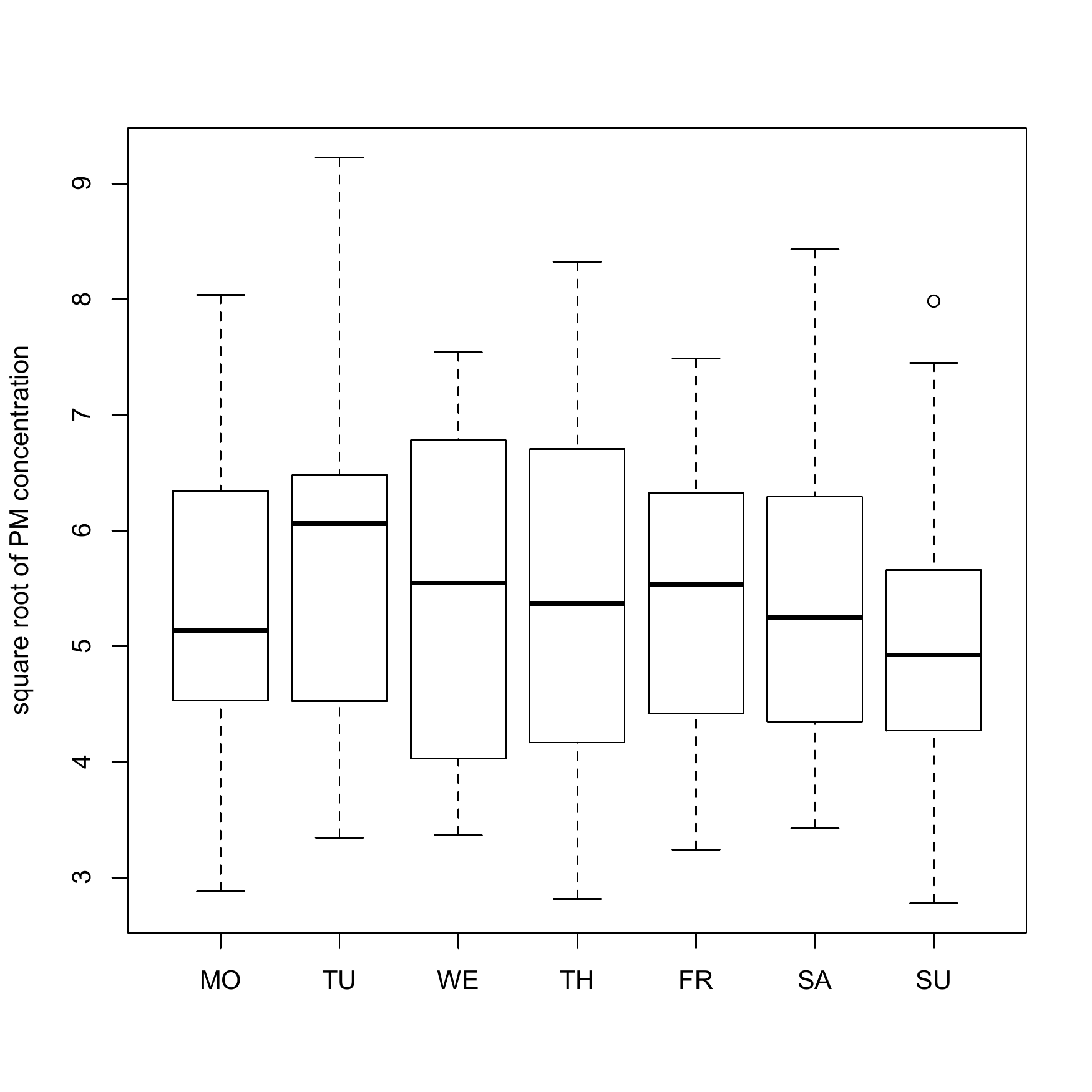}\hfill
\includegraphics[width=7.5cm]{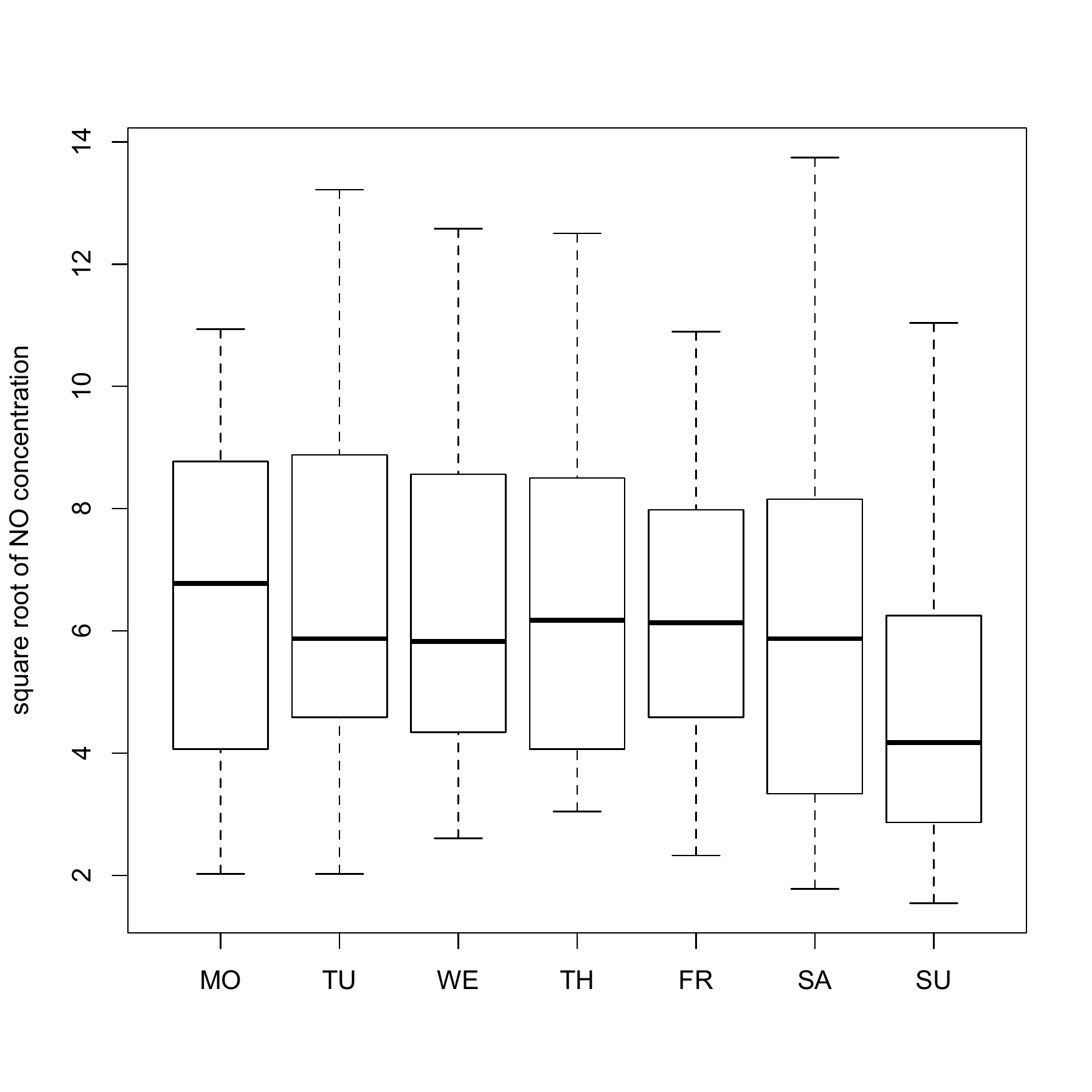}
\caption{Boxplots of {\tt PM10} (left) and {\tt NO} (right)
for each  to day of the week. The sample consists of
 $N$=167 days.\label{fig:boxplots} }
\end{figure}

The following motivating example, which is described in detail in
Section~\ref{s:realdata}, illustrates the need to develop new tests
that exploit the functional structure of the data.
Figure~\ref{fig:boxplots} shows boxplots of daily averages of the
pollutants {\tt PM10} (fine dust) and {\tt NO} (nitrogen monoxide)
measured in Graz, Austria during the winter season 2015/16. The
boxplots are grouped by weekdays and we want to infer if the
corresponding group means differ significantly.  Due to the traffic
exposure of the measuring device in the city center and the weekday
dependent traffic volumes reported in
\citetext{stadlober:pfeiler:2004}, significant differences between the
groups are expected. But although the boxplots indicate lower
concentrations on Sundays, the variation within the groups is
relatively large, and from a one-way ANOVA we do not  obtain evidence
against  the null hypothesis of equal weekday means. The
$p$-values are $0.75$~({\tt PM10}) and $0.27$~({\tt NO}),
respectively.  It needs to be stressed at this point, that formally
ANOVA is not theoretically justified since we are analyzing time
series data which are serially correlated.  Nevertheless, we will see
in Section~\ref{s:realdata} that for both data sets the conclusion
remains the same even after adjusting the test for dependence.  Now
let us look at this problem from a functional data perspective.
Figure~\ref{fig:daymean} shows intraday mean curves (our raw pollution
data are available up to half-hour resolution) of both pollutants
during the same winter season. The plot suggests that Saturday and
Sunday mean curves differ from those of working days. While they have
smaller peaks, they have higher lows (presumably due to lower commuter
traffic and higher nighttime activity on weekends). The methodology
developed in the subsequent sections, will allow us to judge whether the
differences in the functional means are significant. In this
particular example the answer is confirmative. Hence, in contrast to
daily averages, the intraday mean functions do significantly depend on
the day of the week.

\begin{figure}
\begin{center}
\includegraphics[scale=0.4]{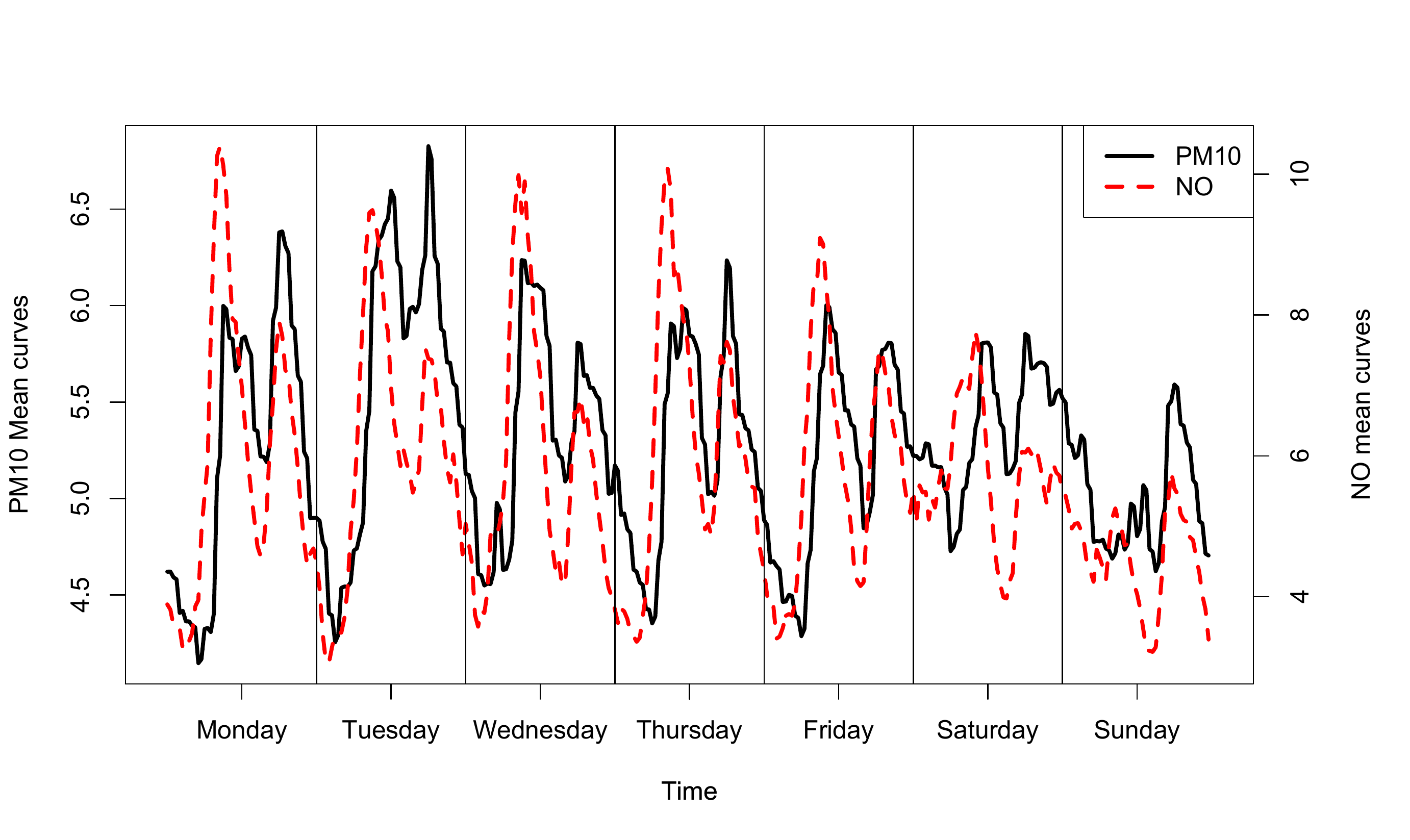}
\end{center}
\caption{Weekday means of {\tt PM10} (solid black) and {\tt NO}
(red dashed). \label{fig:daymean}}
\end{figure}

One of the important contributions of this paper is the development of
a \emph{fully functional ANOVA test for dependent data}.  Using a
frequency domain approach, we obtain the asymptotic null distribution
of the functional ANOVA statistic.  This result is formulated in
Corollary~\ref{c:anova}. The limiting distribution has an interesting
form and can be written as a sum of independent hypoexponential
variables whose parameters are eigenvalues of the spectral density
operator of $(Y_t)$. To the best of our knowledge, there exists no
comparable asymptotic result in FDA literature.

Adapting ANOVA for stationary time series is one way to conduct
periodicity analysis. It is suitable when the periodic component has
no particular form. If, however, the alternative is more specific,  then
we can construct simpler and more powerful tests. In Section~\ref{s:m},
we introduce three different models of  increasing complexity,  and in
Section~\ref{s:tests} we develop the appropriate test statistics. By considering specific local alternatives the
power advantage will be numerically illustrated in
Section~\ref{ss:localpower} and theoretical supported in the
supplemental material (Appendix~\ref{s:local}). General
consistency results are provided in Section~\ref{s:asy}.

We have emphasized so far fully functional testing procedures which
are theoretically elegant and appealing. A common approach to
inference for functional data is to project observations onto a low
dimensional basis system and then to apply a suitable multivariate
procedure to the vector of projections. This approach will be outlined
in Section~\ref{ss:fin}. Our multivariate results improve upon
\citetext{macneill:1974} in two ways: First, our tests are derived
from a (Gaussian) likelihood-ratio approach. As we will see,  this
provides a power advantage over  MacNeill's test. Second, we extend all
our tests in Section~\ref{s:dep} to a very general weak  dependence
setting, as opposed to linear processes studied in
\citetext{macneill:1974} and \citetext{hannan:1961}.

Our methodology and theory for
dependent FTS'  are  based on new developments in the Fourier methods
for such series. The work of Panaretos and Tavakoli
(\citeyear{panaretos:tavakoli:2013AS},
\citeyear{panaretos:tavakoli:2013SPA})
introduces the main concepts of this approach, such as the functional
periodogram and spectral density operators. This framework has been
recently extended and  used in other contexts, see
e.g. \citetext{hormann:kidzinski:hallin:2015} and
\citetext{zhang:2016}. \citetext{zamani:2016} use it in a setting
that falls between our models \eqref{e:test-s} and \eqref{e:test-w}
(iid Gaussian error functions), which also allows them to derive tests
for hidden periodicities; the climate data they study may exhibit some
a priori unspecified periods. For the data that motivate our work
(pollution, traffic, temperature, economic and and finance data), the
potential period is known (week, year, etc.),  and they generally exhibit
dependence under the null. This work therefore focuses on a fixed
known period and weakly dependent  functions.

The remainder of the paper is organized as follows. In Sections
\ref{s:m} and \ref{s:tests}, we consider models and tests under the
null of iid Gaussian functions. Section~\ref{s:dep} considers
dependent non--Gaussian functions. Consistency of the tests is
established in Section~\ref{s:asy}. Applications to pollution data and
a simulation study are presented, respectively, in Sections
\ref{s:realdata} and \ref{s:sim}. In Section~\ref{ss:localpower},  we
numerically assess asymptotic local power of the tests. The main
contributions and findings are summarized in Section~\ref{s:con}.  All
technical results and proofs that are not necessary to understand and
apply the new methodology are presented in the supplemental material.

\section{Models for periodic functional time series} \label{s:m}
The classical model,   \citetext{fisher:1929},
for a (scalar) periodic signal contaminated by noise is
\begin{equation}\label{e:scalar}
y_t=\mu+\alpha \cos(t\theta)+\beta\sin(t\theta) +z_t,
\end{equation}
where the $z_t$ are normal white noise, $\alpha$, $\beta$ and $\mu$ are
unknown constants and $\theta\in [-\pi,\pi]$ is a known frequency which
determines the period.
Model \eqref{e:scalar}  has been extended in several
directions, for example, by replacing a pure harmonic wave by an
arbitrary periodic component and/or by replacing the normal white
noise by a more general stationary time series, as well as by
considering multivariate series.

 In this section, we list extensions
to functional time series organizing them by increasing complexity.
Our theory is valid in an arbitrary separable Hilbert space $H$, in
which $\langle x,y\rangle $ denotes the inner product and
$\|x\|=\sqrt{\langle x,x\rangle}$ the corresponding norm, $x,y\in H$.
In most applications,  it is the space $L^2$ of square integrable
functions on a compact interval, in which case $\langle x,y\rangle=\int
x(u)y(u) du$. A comprehensive  exposition of  Hilbert space theory
for functional data is given in \citetext{hsing:eubank:2015}.

 We begin by stating the following (preliminary)
assumption on the functional noise process.

\begin{Assumption}\label{ass:gauss}
  The noise $(Z_t)$ is an i.i.d.\ sequence in $H$,  with
  each $Z_t$ being a Gaussian
  element in $H$ with zero mean and  covariance operator $\Gamma$.
\end{Assumption}

Recall that a random variable $Z$  in $H$ is
Gaussian, in short $Z\sim \mathcal{N}_{H}(\mu,\Gamma)$, if and only if
all projections $\langle Z,v\rangle$, $v\in H$,  are normally
distributed with mean $\langle \mu, v \rangle$ and variance $\langle
\Gamma(v),v\rangle$.  Working under Assumption~\ref{ass:gauss} is
convenient because we can motivate our tests proposed in
Section~\ref{s:tests} by a likelihood ratio approach and calculate
exact distributions. Nevertheless, this framework is too restrictive
for many applied problems. We devote Section~\ref{s:dep} to procedures
applicable in case of noise which is a general stationary functional
time series.  The testing problems remain the same, but the test
statistics and/or critical values change.

To make the exposition more specific and focused on the main ideas, we
introduce the following assumption.
\begin{Assumption}\label{ass:d}
The sample size $N$ is a multiple of the period, $N=d n$, where the
period $d>1$ is odd. We set $q=(d-1)/2$.
\end{Assumption}
 Appendix~\ref{s:gen-d} discusses modifications needed in case of even
 $d$. Assuming that the sample size $N$ is a multiple of $d$ is not
 really restrictive and can easily be achieved by trimming up to $d-1$
 data points.

The simplest extension of model \eqref{e:scalar} to a functional setting  is
\begin{Mequation} \label{e:model-Z}
Y_t (u) =
\mu(u) + [\alpha \cos(t\theta)+\beta\sin(t\theta)] w(u) +Z_t(u),
\quad \mu,\,
w\in H,\quad \alpha,\beta\in\mathbb{R}.
\end{Mequation}
If $\rho:=\sqrt{\alpha^2+\beta^2}=0$, then $(Y_t\colon t\geq 1)$ is
functional Gaussian white noise with a mean function $\mu$. If $\rho >
0$, then a periodic pattern is added, which varies along the direction
of a function $w$.  To ensure identifiability, we assume that
$\|w\|^2:=\int_0^1w^2(u)du=1$. The functions $\mu$ and $w$, as well as
the parameters $\alpha$ and $\beta$ are assumed to be unknown.
As explained in the Introduction,
the parameter $\theta$, which determines the period $d$, is
assumed to be a {\em known} positive fundamental frequency, i.e
\[
\theta \in \Theta_{N}:=\left \{\theta_{j}=2\pi j/N, \;
j=1, \ldots, m:=\left[(N-1)/2\right]\right \}.
\]
 The testing problem is
\begin{equation} \label{e:tp}
\mathcal{H}_0\colon \rho=0
\ \ \ {\rm vs.} \ \ \
\mathcal{H}_A\colon \rho> 0.
\end{equation}

A first extension of \eqref{e:model-Z} is to replace
$\alpha\cos(\theta t)+\beta\sin(\theta t)$ by some arbitrary
$d$--periodic sequence.  A more general model thus is
\begin{Mequation}\label{e:model-s}
Y_t(u)=\mu(u)+s_t w(u)+Z_t(u),\quad s_t=s_{t+d}, \ \ \ \mu, w\in H.
\end{Mequation}
We wish to test
\begin{equation}\label{e:test-s}
\mathcal{H}_0\colon s_1=s_2=\cdots=s_d=0
\quad\text{against}\quad
\mathcal{H}_A\colon \max_{1\leq t\leq d}|s_t| >  0.
\end{equation}
Here we impose the identifiability constraints $\|w\|=1$ and $\sum_{k=1}^d s_t=0$. The latter ensures that
the vector $(s_1,\ldots,s_d)^\prime$ is contained in the subspace
spanned by the orthogonal vectors
\[
\begin{pmatrix}
\cos(\theta_n)\\ \cos(2\theta_n)\\ \vdots\\ \cos(d\theta_n)
\end{pmatrix},
\begin{pmatrix}
\sin(\theta_n)\\ \sin(2\theta_n)\\ \vdots\\ \sin(d\theta_n)
\end{pmatrix},\ldots,
\begin{pmatrix}
\cos(\theta_{nq})\\ \cos(2\theta_{nq})\\ \vdots\\ \cos(d\theta_{nq})
\end{pmatrix},
\begin{pmatrix}
\sin(\theta_{nq})\\ \sin(2\theta_{nq})\\ \vdots\\ \sin(d\theta_{nq})
\end{pmatrix},
\]
cf. Assumption~\ref{ass:d}. With the convention
\begin{equation}\label{eq:vartheta}
\vartheta_k:=\theta_{nk}=2\pi k/d,
\end{equation}
model \eqref{e:model-s} can be
written as
\begin{equation} \label{e:rep-s}
Y_{t}(u)=\mu(u) +
\left(\sum_{k=1}^{q} \big(\alpha_{k} \cos(t \vartheta_k)
+ \beta_{k} \sin(t \vartheta_k)\big)\right)w(u)+Z_{t}(u),
\end{equation}
with some coefficients $\alpha_k$ and $\beta_k$.

Model \eqref{e:model-s} assumes that at any point of time, the
periodic functional component is proportional to a single function
$w$. A model which imposes periodicity in a very general
sense is
\begin{Mequation}\label{e:model-G}
Y_t(u)=\mu(u)+w_t(u)+Z_t(u),\quad \mu,w_t\in H,\quad w_t=w_{t+d},
\quad \sum_{t=1}^d w_t=0.
\end{Mequation}
In this context, we  test
\begin{equation}\label{e:test-w}
\mathcal{H}_0\colon w_1=w_2=\cdots=w_d=0\quad\text{against}\quad
\mathcal{H}_A\colon \max_{1\leq t\leq d} \|w_t\| > 0.
\end{equation}
Model \eqref{e:model-G} contains models \eqref{e:model-Z} and
\eqref{e:model-s} as special cases. Under $\mathcal{H}_0$ all three
models are identical. Test procedures presented in Section~\ref{s:tests}
are motivated by specific models as they point toward specific alternatives.
However, they can be applied to any data, and, as we demostrate in
Sections \ref{s:realdata} and \ref{s:sim},
tests motivated by simple models often perform very well for more
complex alternatives.

\section{Test procedures in presence of  Gaussian noise}
\label{s:tests}
In the following subsections, we present the basic form of periodicity
tests. Throughout this section, we work under
Assumptions~\ref{ass:gauss} and \ref{ass:d}. In Section~\ref{s:dep}
and Appendix~\ref{s:gen-d}, respectively, we show how to remove these
assumptions. Details of mathematical derivations are given in
Appendix~\ref{s:ta}.

Let us start by introducing the necessary notation and notational
conventions. Given a vector time series $(\boldsymbol{Y}_t\colon 1\leq
t\leq N)$ the discrete Fourier transform (DFT) is
$\boldsymbol{D}(\theta)=\frac{1}{\sqrt{N}}\sum_{k=1}^N
\boldsymbol{Y}_k e^{-\mathrm{i}k\theta}$, $\theta\in[-\pi,\pi]$. We
will use  the decomposition into real and complex parts:
$\boldsymbol{D}(\theta)=\boldsymbol{R}(\theta)+\mathrm{i}\boldsymbol{C}(\theta)$. At
some places we may add a subscript to indicate the dependence on the
sample size and/or a superscript to refer to the underlying
data. (E.g.\ $\boldsymbol{R}_N^{\boldsymbol{Y}}(\theta)$.) We proceed
analogously for a functional time series $(Y_t\colon 1\leq t\leq
N)$. Then the DFT is denoted by
$D(\theta)=R(\theta)+\mathrm{i}C(\theta)$.

Let us set
$
\boldsymbol{A}(\theta_{i_1},\ldots,\theta_{i_k})=\big[\boldsymbol{R}(\theta_{i_1}),\ldots,\boldsymbol{R}(\theta_{i_k}),\boldsymbol{C}(\theta_{i_1}),\ldots,\boldsymbol{C}(\theta_{i_k})\big]^\prime
$ and analogously $A(\theta_{i_1},\ldots,\theta_{i_k})$ be a
$2k$-vector of functions with components $R(\theta_{i_j})$ and
$C(\theta_{i_j})$. If $A=(A_1,\ldots,A_k)^\prime$ is any $k$-vector of
functions, then $A A^\prime$ is the $k\times k$ matrix of scalar
products $\langle A_i,A_j\rangle$. We use $\|M\|$ for the usual
(Euclidean) norm and $\|M\|_{\mathrm{tr}}$ for the trace norm of some
generic matrix $M$. Finally, $\bW_p(n)$ denotes the real $p\times p$
Wishart matrix with $n$ degrees of freedom and $q_{\alpha}(X)$ is the
$\alpha$-quantile of some variable $X$.

\subsection{Projection based approaches}
\label{ss:fin}
Typically functional data are represented in a smoothed form by finite
dimensional systems, such as B--splines, Fourier basis, wavelets,
etc. Additional dimension reduction can be achieved by functional
principal components or similar data--driven systems.
It is thus natural to search for a periodic
pattern within a lower dimensional approximation of the data.

In this section, we assume that $v_1, v_2, \ldots, v_p$ is a suitably
chosen set of linearly independent functions.  Setting
$\boldsymbol{Y}_t:=(\langle Y_t,v_1\rangle,\ldots,\langle
Y_t,v_p\rangle)^\prime$, we obtain vector observations. Under
$\mathcal{H}_0$,  the time series $(\boldsymbol{Y}_t)$ is i.i.d.\
Gaussian with covariance matrix
$\boldsymbol{\Sigma}=(\langle\Gamma(v_i),v_j\rangle\colon 1\leq i,
j\leq p)$.  Under $\mathcal{H}_A$ we can write the projected version
of model~\eqref{e:model-G} as
\begin{equation}\label{e:vector}
\boldsymbol{Y}_t=\boldsymbol{\mu}+\boldsymbol{w}_t
+\boldsymbol{Z}_t,
\end{equation}
with $\boldsymbol{\mu}=(\langle \mu,v_1\rangle\cdots \langle
\mu,v_p\rangle)'$, $\boldsymbol{w}_t=(\langle w_t,v_1\rangle\cdots
\langle w_t,v_p\rangle)'$ and the innovations $\boldsymbol{Z}_t=(\langle
Z_t,v_1\rangle\cdots \langle Z_t,v_p\rangle)'$. This in turn can be
specialized to projected versions of models~\eqref{e:model-Z} and
\eqref{e:model-s}. Provided the periodic component in the investigated
model is not orthogonal to \ $\mathrm{span}\{v_1, v_2, \ldots, v_p\}$,
we can formulate the corresponding multivariate testing problems. In
the following theorem we state   the  likelihood ratio
tests.  Recall the definition of the frequencies $\vartheta_k$ in
\eqref{eq:vartheta} and the notation $q=(d-1)/2$.

\begin{Theorem}\label{th:LR}
For a given positive definite $\boldsymbol{\Sigma}$,
 the likelihood-ratio tests for the
multivariate analogues of testing problems \eqref{e:tp},
\eqref{e:test-s} and \eqref{e:test-w} (related to the projected models
\eqref{e:model-Z}, \eqref{e:model-s} and \eqref{e:model-G},
respectively) are given as follows: Reject the null-hypothesis at
level $\alpha$ if
\begin{align*}
T^{{\tt MEV}_1}&:=\big\|\boldsymbol{A}(\vartheta_1)\boldsymbol{\Sigma}^{-1}\boldsymbol{A}^\prime(\vartheta_1)\big\|> q_{1-\ag}[\|\bW_p(2)\|/2];\\
T^{{\tt MEV}_2}&:=\big\|\boldsymbol{A}(\vartheta_1,\ldots,\vartheta_q)\boldsymbol{\Sigma}^{-1}\boldsymbol{A}^\prime(\vartheta_1,\ldots,\vartheta_q)\big\| > q_{1-\ag}[\|\bW_p(d-1)\|/2];\\
T^{{\tt MTR_2}}&:=\big\|\boldsymbol{A}(\vartheta_1,\ldots,\vartheta_q)\boldsymbol{\Sigma}^{-1}\boldsymbol{A}^\prime(\vartheta_1,\ldots,\vartheta_q)\big\|_\mathrm{tr}>q_{1-\ag}[ \mathrm{Erlang}(pq,1)].
\end{align*}
\end{Theorem}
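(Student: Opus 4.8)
The plan is to exploit that, since $\boldsymbol{\Sigma}$ is known, the Gaussian log-likelihood is, up to an additive constant, $-\tfrac12\sum_{t=1}^N(\boldsymbol{Y}_t-\boldsymbol{m}_t)'\boldsymbol{\Sigma}^{-1}(\boldsymbol{Y}_t-\boldsymbol{m}_t)$, so that $-2\log\Lambda$ is the difference between the $\boldsymbol{\Sigma}^{-1}$-weighted residual sum of squares minimized under $\mathcal{H}_0$ and under $\mathcal{H}_A$. The first step is to pass to the frequency domain: because the discrete Fourier transform is unitary on $\{1,\dots,N\}$ and commutes with the action of $\boldsymbol{\Sigma}^{-1}$ on the coordinates, Parseval's identity gives $\sum_t(\boldsymbol{Y}_t-\boldsymbol{m}_t)'\boldsymbol{\Sigma}^{-1}(\boldsymbol{Y}_t-\boldsymbol{m}_t)=\sum_{\omega}(\boldsymbol{D}^{\boldsymbol{Y}}(\omega)-\boldsymbol{D}^{\boldsymbol{m}}(\omega))^{*}\boldsymbol{\Sigma}^{-1}(\boldsymbol{D}^{\boldsymbol{Y}}(\omega)-\boldsymbol{D}^{\boldsymbol{m}}(\omega))$, the sum running over all Fourier frequencies. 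The crucial observation is that in each of the three models the mean $\boldsymbol{m}_t$ is supported, in frequency, only on $\omega=0$ and $\omega=\pm\vartheta_1,\dots,\pm\vartheta_q$: the constant $\boldsymbol{\mu}$ lives at $\omega=0$, and the $d$-periodic zero-sum signal lives at the harmonics $\pm\vartheta_k$. Hence at every other frequency the minimizing choice is $\boldsymbol{D}^{\boldsymbol{m}}(\omega)=\boldsymbol{D}^{\boldsymbol{Y}}(\omega)$ under both hypotheses, so those terms, and likewise $\omega=0$ (where $\boldsymbol{\mu}$ is free under both hypotheses), cancel in $-2\log\Lambda$. Everything reduces to $\vartheta_1,\dots,\vartheta_q$, where I write $\boldsymbol{d}_k:=\boldsymbol{D}^{\boldsymbol{Y}}(\vartheta_k)=\boldsymbol{R}(\vartheta_k)+\mathrm{i}\boldsymbol{C}(\vartheta_k)$, using $\boldsymbol{D}(-\vartheta_k)=\overline{\boldsymbol{D}(\vartheta_k)}$ to collapse $\pm\vartheta_k$ into a factor $2$.

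Next I would profile out the signal at these frequencies, which is exactly where the three tests diverge. Under model \eqref{e:model-G} the content $\boldsymbol{D}^{\boldsymbol{m}}(\vartheta_k)$ may be an arbitrary $p$-vector for each $k$, so the minimizer fits it perfectly and $-2\log\Lambda=2\sum_{k=1}^q\boldsymbol{d}_k^{*}\boldsymbol{\Sigma}^{-1}\boldsymbol{d}_k=2\sum_{k=1}^q\big(\boldsymbol{R}(\vartheta_k)'\boldsymbol{\Sigma}^{-1}\boldsymbol{R}(\vartheta_k)+\boldsymbol{C}(\vartheta_k)'\boldsymbol{\Sigma}^{-1}\boldsymbol{C}(\vartheta_k)\big)=2\,\mathrm{tr}(\boldsymbol{A}\boldsymbol{\Sigma}^{-1}\boldsymbol{A}')$; since $\boldsymbol{A}\boldsymbol{\Sigma}^{-1}\boldsymbol{A}'$ is positive semidefinite its trace norm is its trace, and rejecting for large likelihood ratio is equivalent to rejecting for large $T^{{\tt MTR}_2}$. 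Under models \eqref{e:model-Z} and \eqref{e:model-s} the signal is constrained to lie in the single unknown direction $\boldsymbol{w}$, i.e.\ $\boldsymbol{D}^{\boldsymbol{m}}(\vartheta_k)=\gamma_k\boldsymbol{w}$ for complex scalars $\gamma_k$; minimizing over $\gamma_k$ for fixed $\boldsymbol{w}$ is a projection whose reduction is the (scale-invariant) Rayleigh quotient $\sum_k|\boldsymbol{w}'\boldsymbol{\Sigma}^{-1}\boldsymbol{d}_k|^2/(\boldsymbol{w}'\boldsymbol{\Sigma}^{-1}\boldsymbol{w})=\boldsymbol{w}'\boldsymbol{\Sigma}^{-1}\boldsymbol{M}\boldsymbol{\Sigma}^{-1}\boldsymbol{w}/(\boldsymbol{w}'\boldsymbol{\Sigma}^{-1}\boldsymbol{w})$, where $\boldsymbol{M}:=\sum_{k=1}^q(\boldsymbol{R}(\vartheta_k)\boldsymbol{R}(\vartheta_k)'+\boldsymbol{C}(\vartheta_k)\boldsymbol{C}(\vartheta_k)')=\boldsymbol{A}'\boldsymbol{A}$, the imaginary cross terms cancelling because $\boldsymbol{w}$ is real and $\boldsymbol{\Sigma}^{-1}$ symmetric. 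Maximizing over $\boldsymbol{w}$ yields $\lambda_{\max}(\boldsymbol{\Sigma}^{-1}\boldsymbol{M})$, and since the nonzero eigenvalues of $\boldsymbol{A}\boldsymbol{\Sigma}^{-1}\boldsymbol{A}'$ coincide with those of $\boldsymbol{\Sigma}^{-1}\boldsymbol{A}'\boldsymbol{A}=\boldsymbol{\Sigma}^{-1}\boldsymbol{M}$, this equals $\|\boldsymbol{A}\boldsymbol{\Sigma}^{-1}\boldsymbol{A}'\|$. Thus $-2\log\Lambda=2\,T^{{\tt MEV}_2}$ for \eqref{e:model-s}, and the same computation restricted to the single harmonic $k=1$ (i.e.\ $\theta=\vartheta_1$) gives $2\,T^{{\tt MEV}_1}$ for \eqref{e:model-Z}.

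It remains to identify the null laws. Here I would first establish the exact finite-sample fact that, under Assumption~\ref{ass:gauss}, the $2q=d-1$ real vectors $\boldsymbol{R}(\vartheta_1),\dots,\boldsymbol{R}(\vartheta_q),\boldsymbol{C}(\vartheta_1),\dots,\boldsymbol{C}(\vartheta_q)$ are i.i.d.\ $\mathcal{N}(\boldsymbol{0},\tfrac12\boldsymbol{\Sigma})$. This follows because $\vartheta_1,\dots,\vartheta_q$ are distinct Fourier frequencies in $(0,\pi)$ (as $d$ is odd), so the orthogonality relations $\sum_t\cos(t\vartheta_j)\cos(t\vartheta_k)=\tfrac{N}{2}\delta_{jk}$, $\sum_t\sin\sin=\tfrac{N}{2}\delta_{jk}$, $\sum_t\cos\sin=0$ make these linear images of the Gaussian white noise uncorrelated with the stated covariances, hence independent. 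Whitening by $\boldsymbol{\Sigma}^{-1/2}$ turns the rows of $\boldsymbol{A}$ into $\tfrac{1}{\sqrt2}$ times i.i.d.\ standard Gaussian $p$-vectors, so $\boldsymbol{A}\boldsymbol{\Sigma}^{-1}\boldsymbol{A}'=\tfrac12\boldsymbol{G}\boldsymbol{G}'$ with $\boldsymbol{G}$ a $(d-1)\times p$ matrix of i.i.d.\ standard normals; as $\boldsymbol{G}\boldsymbol{G}'$ and $\boldsymbol{G}'\boldsymbol{G}=\bW_p(d-1)$ share nonzero eigenvalues, $T^{{\tt MEV}_2}\eqd\|\bW_p(d-1)\|/2$, and the $k=1$ case gives $T^{{\tt MEV}_1}\eqd\|\bW_p(2)\|/2$. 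For the trace statistic, $T^{{\tt MTR}_2}=\tfrac12\sum_{i=1}^{d-1}\|\boldsymbol{g}_i\|^2$ is $\tfrac12$ times a $\chi^2_{p(d-1)}=\chi^2_{2pq}$ variable, which is exactly $\mathrm{Erlang}(pq,1)$. These are the three critical values in the statement.

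The main obstacle is the profiling step for the {\tt MEV} tests: one must recognize that the single-direction constraint of models \eqref{e:model-Z} and \eqref{e:model-s} converts the likelihood ratio into a maximized Rayleigh quotient, hence into the largest eigenvalue of $\boldsymbol{\Sigma}^{-1}\boldsymbol{A}'\boldsymbol{A}$ rather than its trace; keeping the complex/real bookkeeping correct so that the spurious antisymmetric terms cancel, and verifying that the nonzero spectrum of $\boldsymbol{A}\boldsymbol{\Sigma}^{-1}\boldsymbol{A}'$ is precisely what enters both the statistic and the Wishart law, are the delicate points. By contrast, the Parseval reduction and the exact Gaussianity of the frequency ordinates are routine once the orthogonality of the harmonics $\vartheta_1,\dots,\vartheta_q$ is in place.
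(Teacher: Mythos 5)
Your argument is correct and reaches the same statistics and null laws as the paper, but it is packaged differently. The paper works entirely in the time domain: it writes the projected model in the harmonic-regression form \eqref{e:rep-s}, uses the mutual orthogonality of the cosine and sine regressors to reduce the log-likelihood ratio to $\ell_N(\boldsymbol{w}^{\mathrm{sc}},\boldsymbol{\gamma})=(\boldsymbol{w}^{\mathrm{sc}})'\boldsymbol{\Sigma}^{-1/2}\boldsymbol{A}'\boldsymbol{\gamma}-\tfrac{\sqrt{N}}{4}\|\boldsymbol{w}^{\mathrm{sc}}\|^2\|\boldsymbol{\gamma}\|^2$, profiles out $\boldsymbol{\gamma}$, and maximizes the resulting Rayleigh quotient over the unit sphere to get $\|\boldsymbol{A}\boldsymbol{\Sigma}^{-1}\boldsymbol{A}'\|$; you do the same profiling and Rayleigh-quotient maximization but after passing to the frequency domain via Parseval, which is the identical computation since the DFT ordinates at $\vartheta_k$ are exactly the projections onto those regressors. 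The one place you genuinely diverge is $T^{{\tt MTR}_2}$: the paper does not derive it as an LR statistic directly but deduces it from the MANOVA statistic \eqref{e:MAV} through the algebraic identity of Proposition~\ref{pr:anova_tr}, whereas you obtain it in one line by noting that under the projected model \eqref{e:model-G} the mean is unconstrained at each harmonic, so the likelihood ratio is the full trace. Your route is arguably more unified (all three statistics fall out of the same frequency-domain decomposition) and makes the role of the single-direction constraint transparent; the paper's route has the side benefit of exhibiting the exact link to classical MANOVA, which it exploits later. Your treatment of the null distributions reproduces Lemma~\ref{le:orth} and the Wishart/Erlang identifications correctly. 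One small wording slip: at frequencies outside $\{0,\pm\vartheta_1,\ldots,\pm\vartheta_q\}$ the mean's DFT is \emph{forced to vanish} under both hypotheses rather than being ``fitted perfectly''; the corresponding residual terms are identical under $\mathcal{H}_0$ and $\mathcal{H}_A$ and therefore cancel in $-2\log\Lambda$, so the conclusion you draw is unaffected, but the justification as written is not quite what happens.
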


Some remarks are in order.
\begin{enumerate}
\item The superscript {\tt MEV} in our tests stands for {\tt M}ultivariate {\tt E}igen{\tt V}alue. \emph{Multivariate}, as opposed to functional, and \emph{eigenvalue},  refers to the fact that the Euclidean matrix norm
of a symmetric matrix is equal to its largest eigenvalue.
{\tt MTR} abbreviates {\tt M}ultivariate {\tt TR}ace.
\item By
Lemma~\ref{le:orth}, the columns of
$\boldsymbol{\Sigma}^{-1/2}\boldsymbol{A}^\prime(\vartheta_1,\ldots,\vartheta_q)$
are i.i.d.\ $\mathcal{N}_p(0,\frac{1}{2}I_p)$. This explains the
Wishart distribution. For
explicit computation of the quantiles $q_{1-\ag}[\|\bW_p(k)\|]$ we
refer to \citetext{chiani:2014}.
\item An alternative to the test based on $T^{{\tt MEV}_1}$ is
\[
T^{{\tt MTR}_1}=\big\|\boldsymbol{A}(\vartheta_1)\boldsymbol{\Sigma}^{-1}\boldsymbol{A}^\prime(\vartheta_1)\big\|_\mathrm{tr}
> q_{1-\ag}[ \mathrm{Erlang}(p,1)].
\]
The latter can be seen to be equivalent to the test proposed by
\citetext{macneill:1974} for a multivariate version of model
\eqref{e:model-Z}. The likelihood ratio and MacNeill's test statistic
are related to different matrix norms of
$\boldsymbol{A}(\vartheta_1)\boldsymbol{\Sigma}^{-1}\boldsymbol{A}^\prime(\vartheta_1)$. By
the Neyman--Pearson lemma, a likelihood ratio test, even in an
approximate form, can be expected to have good and sometimes even
optimal power properties.  Likewise, replacing the matrix norm in $T^{{\tt
MEV}_2}$ by the trace norm leads to $T^{{\tt MTR_2}}$. As
Figure~\ref{fig:powerq2} illustrates, the difference in power between
the two tests can be quite noticeable, especially when
$d$ is large.
\item In practice, $\boldsymbol{\Sigma}$ must  be replaced
by a consistent estimator. The construction of such estimators, which remain consistent under $\mathcal{H}_A$,
is discussed in Appendix~\ref{s:prac}.
\end{enumerate}

\begin{figure}
\includegraphics[width=7cm]{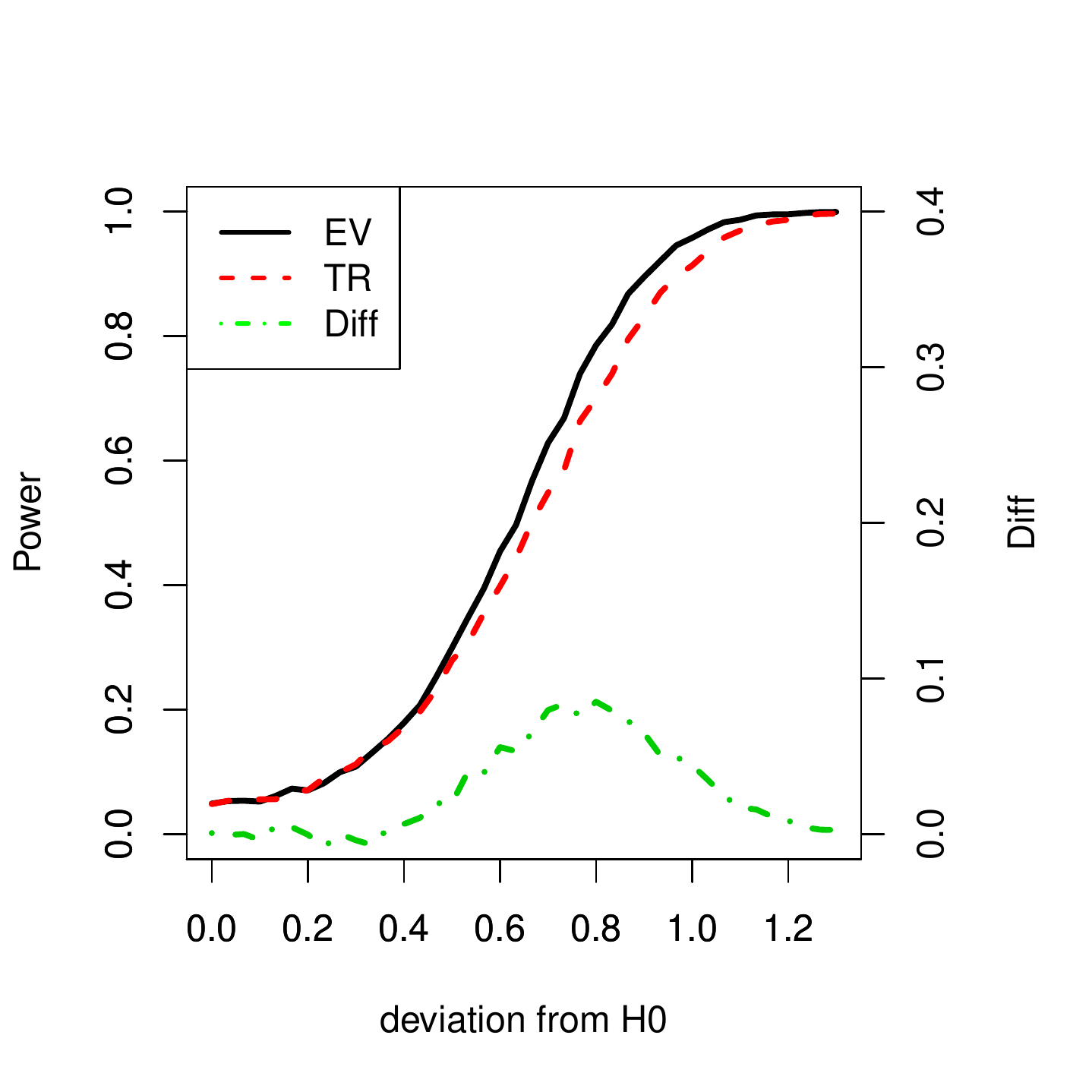}
\includegraphics[width=7cm]{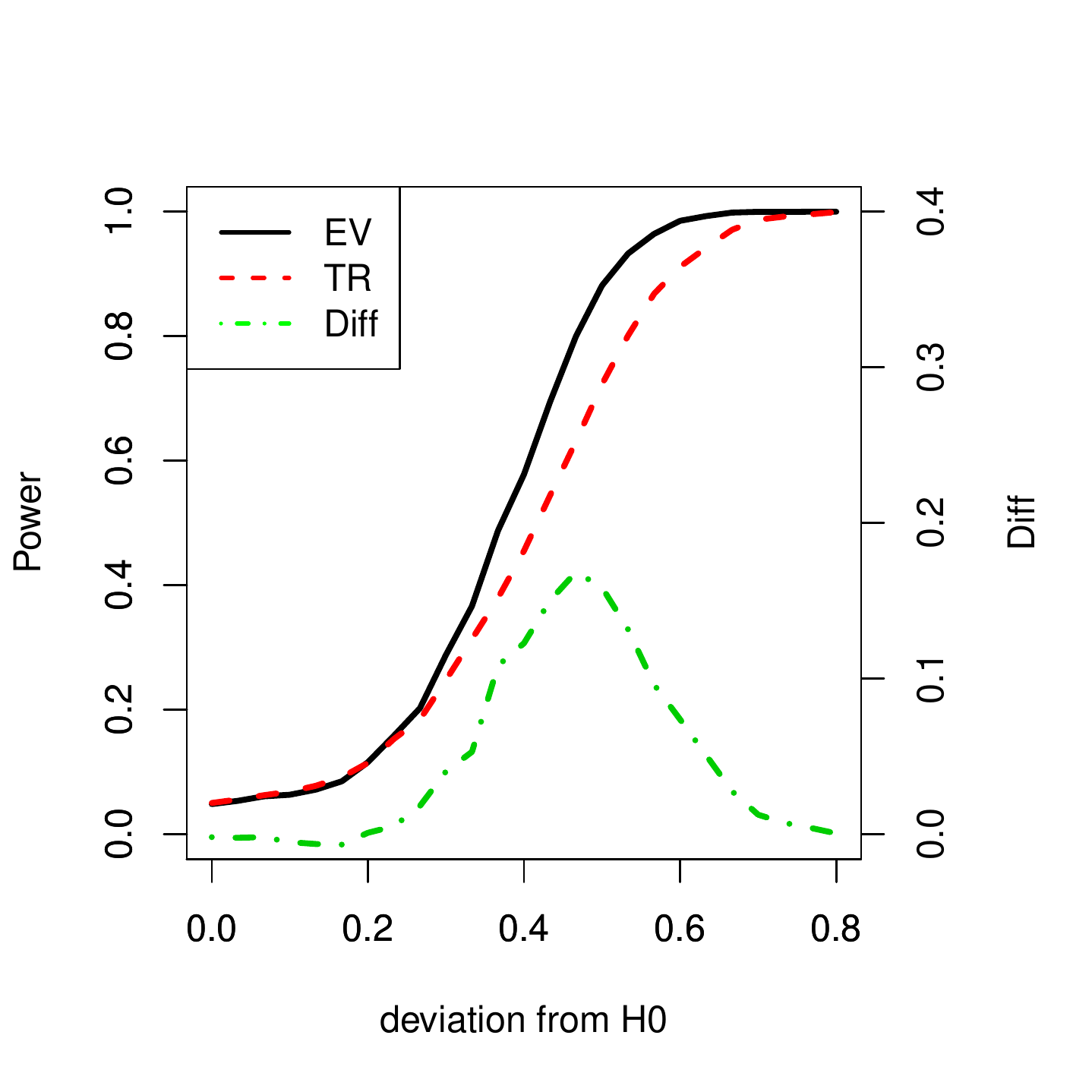}
\caption{Local asymptotic power curves of tests $T^{{\tt MEV}_2}$ (EV)
and $T^{{\tt MTR_2}}$ (TR), and their  difference (Diff, right scale).
 Left panel:  $p=5$ and $d=7$, right panel: $p=5$ and $d=31$. Details of the implementation are given in Section~\ref{ss:localpower}.
\label{fig:powerq2}}
\end{figure}

\subsection{Fully functional tests}
\label{ss:ff}
The projection based approaches of the previous
section may be  sensitive to the choice of the basis and to the number of
basis functions. It is hence desirable to develop some fully
functional procedures to bypass this problem. Before we introduce
fully functional test statistics, let us observe that $T^{\tt MEV_i}$
and $T^{\tt MTR_i}$ $(i=1,2)$ are computed from the rescaled sample
$\bSig^{-1/2}\boldsymbol{Y}_1,\ldots,\bSig^{-1/2}\boldsymbol{Y}_N$,
which results in asymptotically pivotal tests. The rescaling
guarantees that the component processes with larger variation are not
concealing potential periodic patterns in components with little
variance. While this is clearly a very desirable property in
multivariate analysis, one may favor a different perspective for
functional data.  If $\boldsymbol{Y}_t$ are principal component
scores, then $\bSig=\mathrm{diag}(\lambda_1,\ldots,\lambda_p)$, where
$\lambda_i$ are the eigenvalues of $\mathrm{Cov}(Z_1)$. Suppose that
$Y_t(u)=\sqrt{\lambda_\ell}\cos(2\pi t/d)v_\ell(u)+Z_t(u)$, $\ell\geq 1$. Then, due
to $\lambda_\ell\to 0$, the bigger $\ell$, the smaller and more
negligible the periodic signal is. However, it is easily seen that for
any of our multivariate tests,  the probability of rejecting
$\mathcal{H}_0$ is the same for all values $1\leq \ell\leq p$.

A way to account for the functional nature of the data is  to base
the test statistics directly on the unscaled and fully functional observations, i.e.\ to define analogues of the test statistics in Theorem~\ref{th:LR}
with the matrices $A(\vartheta_1)A^\prime(\vartheta_1)$ (in $\mathbb{R}^{2\times 2}$) and $A(\vartheta_1,\ldots,\vartheta_q)A^\prime(\vartheta_1,\ldots,\vartheta_q)$ (in $\mathbb{R}^{(d-1)\times (d-1)}$). Since, to the best of our knowledge, there is no result available on the distribution of $\big\|A(\vartheta_1,\ldots,\vartheta_q)A^\prime(\vartheta_1,\ldots,\vartheta_q)\big\|$, we shall only consider the trace norm for which we can get explicit formulas. Hence, for model~\eqref{e:model-Z} we propose a test which
rejects $\mathcal{H}_{0}$ at level $\alpha$ if
$$
T^{{\tt FTR_1}}:=\|A(\vartheta_1)A^\prime(\vartheta_1)\|_\mathrm{tr}
>q_{1-\ag}
[\mathrm{HExp}(\lambda_{1},\lambda_{2},\ldots)].
$$
Here $\mathrm{HExp}(\lambda_{1},\lambda_{2},\ldots )$ denotes a random
variable which is distributed as $\sum_{i\geq 1} \lambda_{i} E_{i}$,
where the $E_{i} $ are i.i.d.\ $\mathrm{Exp}(1)$ variables. If
$\lambda_i=0$ for $i>k$,  then this is a so-called hypoexponential
distribution, whose distribution function is explicitly known, see
e.g.\ \citetext{ross:2010}, Section~5.2.4.
For models~\eqref{e:model-s} and \eqref{e:model-G} we propose the test which rejects $\mathcal{H}_{0}$ at level $\alpha$ if
\begin{equation}\label{e:tr-fq}
T^{{\tt FTR_2}}:=\|A(\vartheta_1,\ldots,\vartheta_q)A^\prime(\vartheta_1,\ldots,\vartheta_q)\|_\mathrm{tr}>q_{1-\ag}
\left [ \sum_{k=1}^q\Xi_k \right ],
\end{equation}
where $\Xi_k\stackrel{\text{i.i.d.}}{\sim}
\mathrm{HExp}(\lambda_1,\lambda_2,\ldots)$.
Lemma~\ref{le:Hexp} provides the justification
of \eqref{e:tr-fq}.

In practice we will approximate $\mathrm{HExp}(\lambda_1,\lambda_2,\ldots)$ by $\mathrm{HExp}(\hat\lambda_1,\hat\lambda_2,\ldots,\hat\lambda_k)$ with eigenvalues $\hat\lambda_i$ of $\widehat\Gamma$ and some fixed
$k$ to obtain critical values. (See Section~\ref{s:prac}.) Since the sample covariance has only a finite number of non-zero eigenvalues, we can either use all of them or  chose the smallest $k\geq 1$ such that $\mathrm{tr}(\widehat \Gamma)-\big(\hat\lambda_1+\cdots+\hat\lambda_k\big)\leq \epsilon$ for some small $\epsilon$.
Other details are presented in Appendix~\ref{s:prac}.

\subsection{Relation to MANOVA and functional ANOVA} \label{ss:any}
A possible strategy for our testing problem is to embed it into the
ANOVA framework as it was sketched in the Introduction. If the period
is $d$, we can think of the data as coming from $d$ groups, and the
objective is to test if all groups have the same mean.  ANOVA can be
applied to  models \eqref{e:model-Z} and \eqref{e:model-s}, but it is
particularly suitable for model \eqref{e:model-G} where we impose no
structural assumptions on the periodic component.  As in the previous
sections, we can either adopt a multivariate setting, where we
consider projections onto specific directions, or a fully functional
approach.

The likelihood ratio  statistic in the multivariate setting is the
classical MANOVA test based on Wilk's lambda (see~\citetext{mardia:kent:bibby:1980}) which
 is given as the  ratio of the determinants of the empirical
covariance under $\mathcal{H}_0$ in the numerator and of the empirical
covariance under $\mathcal{H}_A$ in the denominator. Such an object is
not easy to extend to the fully functional setting. If, however, we
work with a fixed $\bSig$ (later it can
be replaced by an estimator), then the LR statistic takes the  form
\begin{equation} \label{e:MAV}
T^{\tt MAV}=\frac{1}{d}\sum_{k=1}^d n(\overline{\boldsymbol{Y}}_k-\overline{\boldsymbol{Y}})^\prime \boldsymbol{\Sigma}^{-1}(\overline{\boldsymbol{Y}}_k-\overline{\boldsymbol{Y}}),
\end{equation}
where
$\overline{\boldsymbol{Y}}_k=\frac{1}{n}\sum_{t=1}^n\boldsymbol{Y}_{(t-1)d+k},
1\leq k\leq d$,  and $\overline{\boldsymbol{Y}}$ is the grand mean.
Translating this, with the same line of argumentation as in Section~\ref{ss:ff}, into the fully functional setting we obtain
\begin{equation} \label{e:TAV}
T^{{\tt FAV}}
=\frac{1}{d}\sum_{k=1}^d n\|\overline{Y}_k-\overline{Y}\|^2,
\end{equation}
where $\overline{Y}_k$ and $\overline{Y}$ are defined analogously. This formally coincides with the functional ANOVA test statistics
considered in \citetext{cuevas:febrero:freiman:2004} assuming a balanced design.

The following important result shows that the test statistics
 \eqref{e:MAV} and
\eqref{e:TAV} are equivalent to $T^{\tt MTR_2}$ and $T^{\tt FTR_2}$,
respectively.

\begin{Proposition}\label{pr:anova_tr}
It holds that
$T^{\tt{MAV}}
=\frac{2}{d}T^{\tt MTR_2}\quad\text{and}\quad T^{\tt{FAV}}
=\frac{2}{d}T^{\tt FTR_2}.$
\end{Proposition}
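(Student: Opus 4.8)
The plan is to show that both identities reduce, after unwinding the definitions, to a single algebraic computation involving the discrete Fourier transform evaluated at the fundamental frequencies $\vartheta_j=2\pi j/d$. I will treat the multivariate case in detail; the fully functional case then follows verbatim upon replacing every quadratic form $\boldsymbol{x}^\prime\boldsymbol{\Sigma}^{-1}\boldsymbol{y}$ by the inner product $\langle x,y\rangle$. The first move is to reduce the trace norm to an ordinary trace: since $\boldsymbol{A}(\vartheta_1,\ldots,\vartheta_q)\boldsymbol{\Sigma}^{-1}\boldsymbol{A}^\prime(\vartheta_1,\ldots,\vartheta_q)=(\boldsymbol{A}\boldsymbol{\Sigma}^{-1/2})(\boldsymbol{A}\boldsymbol{\Sigma}^{-1/2})^\prime$ is positive semidefinite, its trace norm equals its trace, and reading off the diagonal gives
\[
T^{\tt MTR_2}=\sum_{j=1}^{q}\Big(\boldsymbol{R}(\vartheta_j)^\prime\boldsymbol{\Sigma}^{-1}\boldsymbol{R}(\vartheta_j)+\boldsymbol{C}(\vartheta_j)^\prime\boldsymbol{\Sigma}^{-1}\boldsymbol{C}(\vartheta_j)\Big),
\]
and likewise $T^{\tt FTR_2}=\sum_{j=1}^q(\|R(\vartheta_j)\|^2+\|C(\vartheta_j)\|^2)$, the functional Gram matrix $AA^\prime$ again being positive semidefinite.

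The crucial step is to rewrite the DFT at $\vartheta_j$ in terms of the group means $\overline{\boldsymbol{Y}}_k$. Writing each index as $s=(t-1)d+k$ with $1\le t\le n$ and $1\le k\le d$, and using $e^{-\mathrm{i}s\vartheta_j}=e^{-\mathrm{i}k\vartheta_j}$ (since $e^{-\mathrm{i}(t-1)d\vartheta_j}=1$), I obtain
\[
\boldsymbol{D}(\vartheta_j)=\frac{n}{\sqrt N}\sum_{k=1}^{d}e^{-\mathrm{i}k\vartheta_j}\,\overline{\boldsymbol{Y}}_k=\sqrt{\tfrac nd}\sum_{k=1}^{d}e^{-\mathrm{i}k\vartheta_j}\big(\overline{\boldsymbol{Y}}_k-\overline{\boldsymbol{Y}}\big),
\]
where the second equality uses $\sum_{k=1}^{d}e^{-\mathrm{i}k\vartheta_j}=0$ for $1\le j\le q$. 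Separating real and imaginary parts expresses $\boldsymbol{R}(\vartheta_j)$ and $\boldsymbol{C}(\vartheta_j)$ as $\cos$- and $\sin$-weighted sums of the centered means $\boldsymbol{m}_k:=\overline{\boldsymbol{Y}}_k-\overline{\boldsymbol{Y}}$.

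Substituting these expressions into the trace formula and applying $\cos a\cos b+\sin a\sin b=\cos(a-b)$ turns $T^{\tt MTR_2}$ into $\frac nd\sum_{k,k^\prime}\boldsymbol{m}_k^\prime\boldsymbol{\Sigma}^{-1}\boldsymbol{m}_{k^\prime}\,S(k-k^\prime)$, where $S(\ell)=\sum_{j=1}^q\cos(2\pi\ell j/d)$. The one genuinely arithmetical point is the evaluation of this character sum: from $\sum_{j=0}^{d-1}\cos(2\pi\ell j/d)=d\,\mathbf{1}\{d\mid\ell\}$ together with the symmetry $j\leftrightarrow d-j$ (which pairs the terms cleanly because $d$ is odd and there is no middle index), one finds $S(0)=q$ and $S(\ell)=-\tfrac12$ for $\ell\not\equiv0\pmod d$. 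I would then invoke the identifiability constraint $\sum_{k=1}^d\boldsymbol{m}_k=\boldsymbol{0}$, which gives $\sum_{k\ne k^\prime}\boldsymbol{m}_k^\prime\boldsymbol{\Sigma}^{-1}\boldsymbol{m}_{k^\prime}=-\sum_k\boldsymbol{m}_k^\prime\boldsymbol{\Sigma}^{-1}\boldsymbol{m}_k$, so the diagonal and off-diagonal contributions merge into the single factor $q+\tfrac12=\tfrac d2$. The result is $T^{\tt MTR_2}=\tfrac n2\sum_k\boldsymbol{m}_k^\prime\boldsymbol{\Sigma}^{-1}\boldsymbol{m}_k=\tfrac d2\,T^{\tt MAV}$, the first claim; the second is identical with $\langle m_k,m_k\rangle$ in place of $\boldsymbol{m}_k^\prime\boldsymbol{\Sigma}^{-1}\boldsymbol{m}_k$. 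I expect no conceptual obstacle here: the only care required is bookkeeping of the normalizing factor $n/d$ through the DFT and the exact evaluation of $S(\ell)$ so that the constants collapse precisely to $d/2$.
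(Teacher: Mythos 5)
Your proof is correct and follows essentially the same route as the paper's: both reduce the trace norm of the positive semidefinite Gram matrix to $\sum_{j=1}^q(\|R(\vartheta_j)\|^2+\|C(\vartheta_j)\|^2)$, rewrite the DFT at the frequencies $\vartheta_j$ as a $\sqrt{n/d}$-scaled Fourier sum of the centered group means, evaluate the character sum $\sum_{j=1}^q\cos(\vartheta_j(t-s))=q\,\mathbf{1}\{t=s\}-\tfrac12\mathbf{1}\{t\neq s\}$, and use $\sum_k(\overline{Y}_k-\overline{Y})=0$ to collapse the constants to $d/2$. The only cosmetic difference is that the paper writes out the functional case and declares the multivariate one analogous, while you do the reverse.
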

Proposition~\ref{pr:anova_tr} is proven in Appendix~\ref{s:ta}.  We
stress that the equalities in this result are of an algebraic nature,
so they hold for any process $(Y_t\colon t\in\mathbb{Z})$. The
limiting distribution of $T^{\tt FTR_2}$ with general stationary noise
will follow from the  theory developed in Section~\ref{s:dep}.
Hence, we obtain the asymptotic null distribution of the functional
ANOVA statistics $T^{\tt{FAV}}$ for stationary FTS. This is formulated
as Corollary~\ref{c:anova}. The result is of independent interest,  as
it relaxes the independence assumption in the functional ANOVA
methodology.

\section{Dependent non--Gaussian noise}\label{s:dep}
In this section, we derive extensions of the testing procedures
proposed in Section~\ref{s:tests} to the setting of a general
stationary noise sequence $(Z_t)$; we drop the assumptions of
Gaussianity and independence.  We require that $(Z_t)$ be a mean zero
stationary sequence in $H$ which satisfies the following dependence
assumption.

\begin{Assumption}[$L^r$--$m$--approximability]\label{a:approx}
The sequence $(Z_t\colon t\in\mathbb{Z})$ can be represented as
$
Z_t = f(\delta_t,\delta_{t-1},\delta_{t-2}, \ldots),
$
where the $\delta_i$'s are i.i.d.\ elements taking values in some measurable space $S$ and $f$ is a measurable function $f : S^\infty \rightarrow H$. Moreover, if $\delta'_1,\delta'_2,...$ are independent copies of $\delta_1,\delta_2,...$ defined on the same measurable space $S$, then, for
\begin{align*}
Z_t^{(m)}:= f(\delta_t,\delta_{t-1},\delta_{t-2}, \ldots ,
\delta_{t-m+1},\delta'_{t-m},\delta'_{t-m-1}, \ldots ),
\end{align*}
we have
\begin{align}\label{lpm}
\sum_{m=1}^\infty (E\|Z_m - Z_m^{(m)}\|^r)^{1/r} < \infty.
\end{align}
\end{Assumption}

In the context of functional time series, the above assumption was
introduced by \citetext{hormann:kokoszka:2010}, and used in many
subsequent papers including \citetext{hormann:horvath:reeder:2013},
\citetext{horvath:kokoszka:rice:2014}, \citetext{zhang:2016}, among many
others. Similar conditions were used earlier by \citetext{wu:2005} and
\citetext{shao:wu:2007}, to name representative publications.
In the following,  we will use this assumption with $r=2$.
The asymptotic theory could most likely be developed under different
weak dependence assumptions.  The advantage of using
Assumption~\ref{a:approx} is that it has been verified for many
functional time series models, and a number of asymptotic results
exist,  which we can use as components of the proofs.

Denote by $C_h=E (Z_h \otimes Z_0) $ the lag $h$ autocovariance
operator.  If $H$ is the space of square integrable functions, $C_h$
is a kernel operator, $C_h:L^2\to L^2$, which maps a function $f$ to
the function $ C_h(f)(u)=\int E[Z_h(u)Z_0(s)]f(s)ds$.  If
Assumption~\ref{a:approx} holds with $r=2$, then
\begin{equation}\label{e:abssymcov}
\sum_{h\in\mathbb{Z}}\|C_h\|_\mathcal{S}<\infty,
\end{equation}
where $\|\cdot\|_\mathcal{S}$ denotes the Hilbert-Schmidt norm.  As
shown in \citetext{hormann:kidzinski:hallin:2015}, this ensures the
existence of the  \emph{spectral density operator}:
\[
\mathcal{F}_\theta:=\sum_{h\in\mathbb{Z}}
C_h e^{-\mathrm{i} h\theta}.
\]
This operator was  defined in
\citetext{panaretos:tavakoli:2013AS} (with an additional scaling factor $\frac{1}{2\pi}$). It
 plays a crucial role in frequency domain analysis of functional time
 series.  We will see in Theorem~\ref{th:asympt} below that the
 spectral density operator is the asymptotic covariance operator of
 the discrete Fourier transform $D_N^Z(\theta)$, and hence it will enter
 the construction of our test statistics in a similar way as
 $\Gamma=\mathrm{Var}(Z_1)$ does in the case of independent noise. We recall hereby the definition of a complex-valued functional Gaussian random variable with mean $\mu$, variance operator $F(v)=E\big[ (X-\mu) \langle v,X-\mu\rangle\big]$ and relation operator $C(v)=E\big[ (X-\mu) \langle v,\overline{X-\mu}\rangle\big]$.  Then $Z=Z_{0} + \mathrm{i} Z_{1}$ with $Z_{0},Z_{1} \in H$ is complex Gaussian $\mathcal{N}_H(\mu, F,C)$ if
 \[
 \begin{pmatrix} Z_{0} \\ Z_{1} \end{pmatrix} \sim \mathcal{N}_{H\times H} \begin{pmatrix} \begin{pmatrix} \mu_{0} \\ \mu_{1} \end{pmatrix}, \frac{1}{2} \begin{pmatrix} \mathrm{Re}(F+C) & -\mathrm{Im}(F-C) \\ \mathrm{Im}(F+C) & \mathrm{Re}(F-C) \end{pmatrix} \end{pmatrix},
 \]
 where $\mu=\mu_{0}+i\mu_{1}=E(Z_{0}+iZ_{1})=\mu$.
 When the relation operator is null, we will write $Z\sim \mathcal{CN}_{H}\left(0,F\right)$.  Theorem~\ref{th:asympt} follows from Theorem~5 in
 \citetext{cerovecki:hormann:2016}.

 \begin{Theorem}\label{th:asympt} If  $(Z_t)$ is  an
   $L^2-m$--approximable time series with values in a separable Hilbert space $H$,  then for any $\theta\in[-\pi,\pi]$
\begin{equation*}
D_N^Z(\theta) \overset{d}\longrightarrow
\mathcal{CN}_{H}\left(0, \mathcal{F}_{\theta}\right).
\end{equation*}
Furthermore,
\begin{enumerate}[(i)]
\item $\mathrm{Var}(D_N^Z(\theta))$ converges in weak
operator topology to $\mathcal{F}_{\theta}$.
\item The components of $(D_N^Z(\theta),D_N^Z(\theta^\prime))$
are asymptotically independent whenever $\theta+ \theta^\prime\neq 0$
and $\theta-\theta^\prime\neq 0$.
\end{enumerate}
\end{Theorem}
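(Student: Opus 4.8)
The plan is to treat the distributional convergence as an application of Theorem~5 in \citetext{cerovecki:hormann:2016}, whose hypotheses are met here because $L^2$--$m$--approximability yields the absolute summability \eqref{e:abssymcov} of the autocovariance operators; the remaining work is then to (a) identify the limiting covariance structure, which gives part~(i) and pins down the limit as $\mathcal{CN}_H(0,\mathcal F_\theta)$, and (b) compute cross--moments at two frequencies, which gives part~(ii). The engine behind such a functional central limit theorem is the Cram\'er--Wold device combined with tightness: one reduces to scalar projections $\langle D_N^Z(\theta),v\rangle$, which are DFTs of the weakly dependent scalar sequence $(\langle Z_t,v\rangle)$, and controls the high--dimensional tail through the $m$--dependent approximation $Z_t^{(m)}$ furnished by \eqref{lpm}.

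First I would compute the variance operator. Writing $h=j-k$ and using stationarity,
\[
E\big[D_N^Z(\theta)\langle v,D_N^Z(\theta)\rangle\big]
=\frac1N\sum_{j,k=1}^N e^{-\mathrm{i}(j-k)\theta}C_{j-k}(v)
=\sum_{|h|<N}\frac{N-|h|}{N}\,e^{-\mathrm{i}h\theta}C_h(v).
\]
Because $\sum_h\|C_h\|_{\mathcal S}<\infty$ and $(N-|h|)/N\to1$, dominated convergence gives convergence (in the Hilbert--Schmidt sense, hence in the weak operator topology) to $\sum_{h\in\mathbb Z}e^{-\mathrm{i}h\theta}C_h(v)=\mathcal F_\theta(v)$, which is exactly part~(i). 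An identical calculation for the relation operator produces the extra phase $e^{-\mathrm{i}(j+k)\theta}$; after setting $h=j-k$ the inner sum $\sum_k e^{-2\mathrm{i}k\theta}$ is $O(1)$ whenever $\theta\notin\{0,\pi\}$, so dividing by $N$ forces the relation operator to vanish in the limit. This confirms that the limit is circularly complex Gaussian, i.e.\ $\mathcal{CN}_H(0,\mathcal F_\theta)$; at the boundary points $\theta\in\{0,\pi\}$ the transform is real and the statement reduces to the functional CLT for partial sums.

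For the distributional statement itself I would fix directions $v_1,\dots,v_p$ and apply the Cram\'er--Wold device to the real and imaginary parts of $(\langle D_N^Z(\theta),v_1\rangle,\dots,\langle D_N^Z(\theta),v_p\rangle)$, each of which is a trigonometric average of the scalar $L^2$--$m$--approximable sequence $(\langle Z_t,v_i\rangle)$; a scalar CLT for such averages delivers joint asymptotic normality with the covariance computed above. The main obstacle is lifting this finite--dimensional convergence to convergence in distribution on $H$, i.e.\ tightness. This is precisely where \eqref{lpm} is used: one replaces $Z_t$ by the $m$--dependent $Z_t^{(m)}$, for which the CLT and the tail estimate are transparent, and then shows that $E\|D_N^Z(\theta)-D_N^{Z^{(m)}}(\theta)\|^2$ is small uniformly in $N$ and vanishes as $m\to\infty$, so that a standard approximation argument interchanges the limits in $N$ and $m$. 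Theorem~5 of \citetext{cerovecki:hormann:2016} carries out exactly this program, so I would invoke it to conclude.

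Finally, for part~(ii) I would compute the two cross--operators of $D_N^Z(\theta)$ and $D_N^Z(\theta')$. The variance--type cross operator reduces, after setting $h=j-k$, to $\frac1N\sum_h e^{-\mathrm{i}h\theta}C_h(v)\sum_k e^{-\mathrm{i}k(\theta-\theta')}$, whose inner geometric sum is bounded when $\theta-\theta'\neq0$; the relation--type cross operator produces instead the factor $\sum_k e^{-\mathrm{i}k(\theta+\theta')}$, bounded when $\theta+\theta'\neq0$. Hence both cross--moments tend to $0$ under the stated condition, and since the stacked vector $(D_N^Z(\theta),D_N^Z(\theta'))$ is jointly asymptotically Gaussian (by the same Cram\'er--Wold argument applied to both frequencies at once), vanishing cross--covariance is equivalent to asymptotic independence.
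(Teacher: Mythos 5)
Your proposal is correct and follows the same route as the paper, which proves Theorem~\ref{th:asympt} simply by invoking Theorem~5 of \citetext{cerovecki:hormann:2016}; your additional moment computations for the variance, relation, and cross operators (parts (i) and (ii)) are standard and accurate, including the caveat about the real-valued boundary frequencies $\theta\in\{0,\pi\}$. No gaps.
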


Using Theorem~\ref{th:asympt}, which is applicable to both functional and
multivariate data, we are now ready to explain how to construct
tests  when
Assumption~\ref{ass:gauss} is dropped and replaced by
$L^2-m$--approximability. These tests,
justified in Appendix~\ref{s:asy-p}, have asymptotic (rather than exact)
size $\alpha$.

\medskip

\noindent
{\bf Independent noise:} The tests of Section~\ref{s:tests}
remain  unchanged
for general i.i.d.\ noise with second order moments.\medskip

\noindent
{\bf Projection based approach:} If we project the data onto a
 basis $(v_{1},\ldots,v_{p})$, then the resulting multivariate time
 series $\boldsymbol{Y}_{t}$ inherits $L^2$-$m$-approximability. Let
 $\boldsymbol{\mathcal{F}}_\theta$ denote the spectral density matrix of this
 process.  Assuming that the $\boldsymbol{\mathcal{F}}_{\vartheta_{j}}$  are of full rank, we need to replace the matrix $$\boldsymbol{A}(\vartheta_1,\ldots,\vartheta_\ell)\boldsymbol{\Sigma}^{-1}\boldsymbol{A}^\prime(\vartheta_1,\ldots,\vartheta_\ell),\quad \text{$\ell=1$ or $\ell=q$}$$ in the definition of the multivariate tests by
 $$\boldsymbol{H}(\vartheta_1,\ldots,\vartheta_\ell)\boldsymbol{H}^\prime(\vartheta_1,\ldots,\vartheta_\ell),$$ where the columns of $\boldsymbol{H}^\prime(\vartheta_1,\ldots,\vartheta_\ell)$ are given by
 $$
 \Big[\mathrm{Re}\big(
\boldsymbol{\mathcal{F}}_{\vartheta_1}^{-1/2} \boldsymbol{D}(\vartheta_1)
 ,\ldots,
\boldsymbol{\mathcal{F}}_{\vartheta_\ell}^{-1/2} \boldsymbol{D}(\vartheta_\ell)\big),
\mathrm{Im}\big(
\boldsymbol{\mathcal{F}}_{\vartheta_1}^{-1/2} \boldsymbol{D}(\vartheta_1)
,\ldots,
\boldsymbol{\mathcal{F}}_{\vartheta_\ell}^{-1/2} \boldsymbol{D}(\vartheta_\ell)
\big)\Big].
$$
The critical values remain the same as
in Section~\ref{s:tests}.\medskip

\noindent
{\bf Fully functional approach:} In contrast to the multivariate setting the fully functional test statistics remain unchanged, but the critical values need to be adapted according to the following result.
\begin{Proposition}\label{pr:assfundep} If $(Y_t)$ is $L^2$--$m$--approximable then for any frequencies $0<\omega_1<\omega_2<\cdots<\omega_\ell<\pi$,
\[
\|A(\omega_1,\ldots,\omega_{\ell})A^\prime(\omega_1,\ldots,\omega_{\ell})\|_\mathrm{tr}
\stackrel{d}{\to} \sum_{k=1}^\ell\Xi_k,
\] where $\Xi_k\stackrel{\text{ind.}}{\sim}
\mathrm{HExp}(\lambda_1(\omega_k),\lambda_2(\omega_k),\ldots)$,
 and $\lambda_\ell(\omega_{k})$ are the eigenvalues of
 $\mathcal{F}_{\omega_{k}}$.
\end{Proposition}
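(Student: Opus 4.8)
The plan is to reduce the trace norm to a sum of squared discrete Fourier transform (DFT) norms, pass to the limit using Theorem~\ref{th:asympt}, and finally identify the law of the squared norm of a complex Gaussian element. First I would note that the reduction is purely algebraic: the matrix $A(\omega_1,\ldots,\omega_\ell)A^\prime(\omega_1,\ldots,\omega_\ell)$ is a real Gram matrix, hence symmetric and positive semidefinite, so its trace norm equals its trace. Collecting the real and imaginary parts $R(\omega_k)$ and $C(\omega_k)$ then gives
\[
\|A(\omega_1,\ldots,\omega_\ell)A^\prime(\omega_1,\ldots,\omega_\ell)\|_\mathrm{tr}
=\sum_{k=1}^\ell\big(\|R(\omega_k)\|^2+\|C(\omega_k)\|^2\big)
=\sum_{k=1}^\ell\|D_N^Y(\omega_k)\|^2,
\]
the last equality using $\|D_N^Y(\omega_k)\|^2=\|R(\omega_k)\|^2+\|C(\omega_k)\|^2$ in the complexified space. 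Since $(Y_t)$ is stationary, I would write $Y_t=EY_1+\tilde Z_t$ with $\tilde Z_t$ mean zero and $L^2$--$m$--approximable; the mean contributes $\frac{1}{\sqrt N}EY_1\sum_{t=1}^N e^{-\mathrm{i}t\omega_k}$ to the DFT, which is $O(N^{-1/2})$ for $\omega_k\in(0,\pi)$, so $D_N^Y(\omega_k)$ and $D_N^{\tilde Z}(\omega_k)$ have the same weak limit.

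Next I would establish joint weak convergence. By Theorem~\ref{th:asympt} each $D_N^{\tilde Z}(\omega_k)$ converges to $\mathcal{CN}_H(0,\mathcal{F}_{\omega_k})$, and because $0<\omega_1<\cdots<\omega_\ell<\pi$ all pairwise sums and differences are nonzero, so part~(ii) makes the limits at distinct frequencies asymptotically independent. To turn pairwise independence into joint convergence toward a mutually independent limit, I would invoke the Cram\'er--Wold device: an arbitrary linear combination of the $\langle D_N^{\tilde Z}(\omega_k),v\rangle$ is itself a DFT functional covered by Theorem~\ref{th:asympt}, whence $(D_N^{\tilde Z}(\omega_1),\ldots,D_N^{\tilde Z}(\omega_\ell))$ converges to a jointly (real-)Gaussian limit $(W_1,\ldots,W_\ell)$; joint Gaussianity together with the pairwise independence from part~(ii) forces the $W_k$ to be mutually independent, $W_k\sim\mathcal{CN}_H(0,\mathcal{F}_{\omega_k})$. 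As $(x_1,\ldots,x_\ell)\mapsto\sum_k\|x_k\|^2$ is continuous on $H^\ell$, the continuous mapping theorem yields $\sum_{k=1}^\ell\|D_N^Y(\omega_k)\|^2\stackrel{d}{\to}\sum_{k=1}^\ell\|W_k\|^2$.

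The crux of the argument, and the step I expect to be most delicate, is identifying the law of $\|W_k\|^2$. Here I would diagonalize the self-adjoint, positive, trace-class operator $\mathcal{F}_{\omega_k}$ in an orthonormal eigenbasis $\{\phi_j\}$ with eigenvalues $\lambda_j(\omega_k)$, and expand $W_k=\sum_j\xi_j\phi_j$ with $\xi_j=\langle W_k,\phi_j\rangle$, so that $\|W_k\|^2=\sum_j|\xi_j|^2$. The family $(\xi_j)$ is jointly complex Gaussian; the definition of the variance operator together with $\mathcal{F}_{\omega_k}\phi_j=\lambda_j(\omega_k)\phi_j$ gives $E[\xi_i\overline{\xi_j}]=\lambda_j(\omega_k)\delta_{ij}$, while the vanishing of the relation operator built into $\mathcal{CN}_H$ gives $E[\xi_i\xi_j]=0$. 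Hence the $\xi_j$ are independent scalar complex Gaussians with null relation; writing $\xi_j=a_j+\mathrm{i}b_j$, these two conditions force $a_j,b_j$ to be independent $\mathcal{N}(0,\lambda_j(\omega_k)/2)$, so $|\xi_j|^2=\lambda_j(\omega_k)E_j$ with $E_j$ i.i.d.\ $\mathrm{Exp}(1)$. Summing, $\|W_k\|^2=\sum_j\lambda_j(\omega_k)E_j\sim\mathrm{HExp}(\lambda_1(\omega_k),\lambda_2(\omega_k),\ldots)$, and the independence of the $W_k$ across $k$ delivers the claimed limit $\sum_{k=1}^\ell\Xi_k$. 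The main obstacle is exactly this last bookkeeping of the complex Gaussian structure: one must check that the null relation operator of the limit is preserved under projection onto the (generally complex) eigenfunctions, since it is precisely this property that collapses $|\xi_j|^2$ into a scaled $\mathrm{Exp}(1)$ variable rather than a quadratic form in two normals of unequal variance.
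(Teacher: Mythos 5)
Your proposal is correct and follows essentially the same route as the paper's proof: reduce the trace norm to $\sum_k\|D_N(\omega_k)\|^2$, invoke Theorem~\ref{th:asympt} for the joint limit with independent $\mathcal{CN}_H(0,\mathcal{F}_{\omega_k})$ components, and diagonalize each $\mathcal{F}_{\omega_k}$ so that Parseval turns $\|W_k\|^2$ into $\sum_j\lambda_j(\omega_k)E_j$ via the null relation operator of the scores. Your extra care about the centering term and the Cram\'er--Wold step only makes explicit what the paper leaves implicit; the "delicate" point you flag at the end is resolved exactly as you describe and as the paper asserts.
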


In practice, we don't know the spectral densities which are necessary
for our tests. In Appendix~\ref{s:prac}, we show how to construct
their  estimators.

We conclude this section with a corollary to Proposition
\ref{pr:assfundep}. This result
 is new and interesting in itself.
It broadly extends the applicability of functional ANOVA by revealing its asymptotic distribution when the underlying data are weakly dependent.
\begin{Corollary} \label{c:anova}
Under the assumptions
of Proposition~\ref{pr:assfundep} the functional ANOVA test statistic satisfies
\[\frac{1}{d}\sum_{k=1}^d n\|\overline{Y}_k-\overline{Y}\|^2\stackrel{d}{\to}
\frac{2}{d}\sum_{k=1}^q \Xi_k,
\]
where $\Xi_k\stackrel{\text{ind.}}{\sim}
\mathrm{HExp}(\lambda_1(\vartheta_k),\lambda_2(\vartheta_k),\ldots)$,
 and $\lambda_\ell(\vartheta_{k})$ are the eigenvalues of
 $\mathcal{F}_{\vartheta_{k}}$.
\end{Corollary}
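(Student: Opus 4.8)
The plan is to reduce the statement entirely to the two results already in hand, namely the algebraic identity of Proposition~\ref{pr:anova_tr} and the limit theorem of Proposition~\ref{pr:assfundep}, so that no new probabilistic argument is required. The key observation is that the left-hand side of the corollary is exactly the fully functional ANOVA statistic $T^{\tt FAV}$ defined in \eqref{e:TAV}.

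First I would invoke Proposition~\ref{pr:anova_tr}, which gives the exact identity $T^{\tt FAV}=\frac{2}{d}T^{\tt FTR_2}$. The point to emphasize is that, as noted right after that proposition, this equality is purely algebraic and therefore holds for \emph{any} process $(Y_t)$ — in particular for the $L^2$-$m$-approximable sequence considered here, where Assumption~\ref{ass:gauss} is no longer assumed. Consequently the limiting distribution of $T^{\tt FAV}$ is determined by that of $T^{\tt FTR_2}=\|A(\vartheta_1,\ldots,\vartheta_q)A^\prime(\vartheta_1,\ldots,\vartheta_q)\|_\mathrm{tr}$.

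Next I would check that the frequencies entering $T^{\tt FTR_2}$ are admissible for Proposition~\ref{pr:assfundep}. By \eqref{eq:vartheta} we have $\vartheta_k=2\pi k/d$, and with $q=(d-1)/2$ the largest one is $\vartheta_q=\pi(d-1)/d<\pi$, while $\vartheta_1=2\pi/d>0$ and the $\vartheta_k$ are strictly increasing. Hence $0<\vartheta_1<\cdots<\vartheta_q<\pi$, so Proposition~\ref{pr:assfundep} applies with $\ell=q$ and $\omega_k=\vartheta_k$. It yields $T^{\tt FTR_2}\stackrel{d}{\to}\sum_{k=1}^q\Xi_k$ with independent $\Xi_k\sim\mathrm{HExp}(\lambda_1(\vartheta_k),\lambda_2(\vartheta_k),\ldots)$, the $\lambda_\ell(\vartheta_k)$ being the eigenvalues of $\mathcal{F}_{\vartheta_k}$.

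Finally I would combine the two: since $2/d$ is a deterministic constant, multiplication by it preserves convergence in distribution (continuous mapping), so that $T^{\tt FAV}=\frac{2}{d}T^{\tt FTR_2}\stackrel{d}{\to}\frac{2}{d}\sum_{k=1}^q\Xi_k$, which is precisely the assertion. There is essentially no genuine obstacle: all the analytic and probabilistic work is contained in Propositions~\ref{pr:anova_tr} and~\ref{pr:assfundep}, and the only detail requiring a moment's care is verifying that the discrete ANOVA frequencies $\vartheta_k$ lie in the open interval $(0,\pi)$ demanded by Proposition~\ref{pr:assfundep}.
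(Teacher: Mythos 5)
Your proposal is correct and is exactly the argument the paper intends: combine the algebraic identity $T^{\tt FAV}=\frac{2}{d}T^{\tt FTR_2}$ of Proposition~\ref{pr:anova_tr} (valid for any process) with Proposition~\ref{pr:assfundep} applied at the frequencies $\vartheta_1<\cdots<\vartheta_q$, which indeed lie in $(0,\pi)$ since $\vartheta_q=\pi(d-1)/d$. Your explicit check of the admissibility of these frequencies is a small detail the paper leaves implicit, but the route is the same.
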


\section{Consistency of the tests} \label{s:asy} In this section, we
state consistency results for the tests developed in the previous
sections. The proofs are presented in Appendix~\ref{s:asy-p}.  We
focus on the general model \eqref{e:model-G} with the noise $(Z_t)$
satisfying Assumption~\ref{a:approx} with $r=2$, but we also consider
the simpler tests and alternatives introduced in Section~\ref{s:m}. We
assume throughout that Assumption~\ref{ass:d} holds.

To state the consistency results,
we decompose the DFT of the functional
observations as follows:
\begin{equation}\label{eq:altern}
D^Y_N(\theta)=D^w_N(\theta)+D^Z_N(\theta)=\sqrt{n}D^w_d(\theta)+D^Z_N(\theta),
\end{equation}
where $D_N^
w(\theta)$, $D_d^w(\theta)$ and $D_N^Z(\theta)$ are the DFT's of $(Y_1,\ldots,Y_N)$, $(w_1,\ldots,w_d)$ and $(Z_1,\ldots,Z_N)$, respectively.

\begin{Proposition}\label{pr:consistff}
Assume model \eqref{e:model-G} and that $(Z_t)$ is $L^2$-$m$--approximable. Then if \ $\sum_{j=1}^q\|D^w_d(\vartheta_{j})\|^2>0$, we have that $T^{{\tt FTR_2}}\to\infty$ with probability 1. Moreover, if\ $\|D^w_d(\vartheta_1)\|^2> 0$, we have that $T^{{\tt FTR_1}}\to\infty$ with probability 1 ($N\to\infty$).
\end{Proposition}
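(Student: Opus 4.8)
The plan is to reduce both statistics to sums of squared norms of the functional DFT and then to show that the deterministic periodic signal swamps the noise almost surely.

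\emph{Reduction.} For any finite vector of functions $A$, the matrix $AA'$ is a Gram matrix, hence symmetric and positive semidefinite, so its trace norm equals its trace. Reading off the components of $A(\vartheta_1,\dots,\vartheta_q)$ and using $\|D^Y_N(\vartheta_k)\|^2=\|R^Y_N(\vartheta_k)\|^2+\|C^Y_N(\vartheta_k)\|^2$, this yields the clean identities
\[
T^{{\tt FTR_2}}=\sum_{k=1}^q\|D^Y_N(\vartheta_k)\|^2,\qquad
T^{{\tt FTR_1}}=\|D^Y_N(\vartheta_1)\|^2 .
\]
By the decomposition \eqref{eq:altern} we have $D^Y_N(\vartheta_k)=\sqrt n\,D^w_d(\vartheta_k)+D^Z_N(\vartheta_k)$ with $D^w_d(\vartheta_k)$ deterministic, so it is enough to prove that $\|D^Y_N(\vartheta_k)\|^2\to\infty$ a.s.\ whenever $\|D^w_d(\vartheta_k)\|>0$; the two assertions of the proposition then follow by taking $k=1$, respectively by picking (from the hypothesis) some index $k_0\le q$ with $\|D^w_d(\vartheta_{k_0})\|>0$ and discarding the remaining nonnegative terms.

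\emph{The main step.} The heart of the argument is to show that the noise DFT is negligible on the $\sqrt n$ scale, i.e.\ $n^{-1/2}\|D^Z_N(\vartheta_k)\|\to 0$ a.s. Note that Theorem~\ref{th:asympt} only yields $\|D^Z_N(\vartheta_k)\|=O_P(1)$, which is insufficient for an almost-sure conclusion, so I would argue directly. After rescaling it suffices to prove that $N^{-1}\sum_{t=1}^N Z_t e^{-\mathrm{i}t\vartheta_k}\to 0$ a.s.\ in the complexification of $H$. Here I would exploit the commensurability $\vartheta_k=2\pi k/d$: the weight $e^{-\mathrm{i}t\vartheta_k}$ depends only on $t\bmod d$, so splitting the sum over the $d$ residue classes $t=jd+r$ turns it into a fixed linear combination
\[
\frac1N\sum_{t=1}^N Z_t e^{-\mathrm{i}t\vartheta_k}
=\frac1d\sum_{r=1}^d e^{-\mathrm{i}r\vartheta_k}\,
\Big(\frac1n\sum_{j=0}^{n-1}Z_{jd+r}\Big)
\]
of ordinary Ces\`aro averages. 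For each $r$ the subsampled sequence $(Z_{jd+r})_{j\ge0}$ is stationary and ergodic---by Assumption~\ref{a:approx} the sequence $(Z_t)$ is a Bernoulli shift of the i.i.d.\ innovations $(\delta_i)$, which is mixing, so its $d$-step power is ergodic and the decimated sequence inherits stationarity and ergodicity---and $E\|Z_1\|\le(E\|Z_1\|^2)^{1/2}<\infty$. The Hilbert-space (Birkhoff) ergodic theorem then gives $n^{-1}\sum_{j=0}^{n-1}Z_{jd+r}\to EZ_1=0$ a.s., whence the whole finite combination tends to $0$ a.s.

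\emph{Conclusion and main obstacle.} Fixing $k$ with $\|D^w_d(\vartheta_k)\|>0$, the reverse triangle inequality together with the main step gives
\[
\frac{\|D^Y_N(\vartheta_k)\|}{\sqrt n}\ \ge\ \|D^w_d(\vartheta_k)\|-\frac{\|D^Z_N(\vartheta_k)\|}{\sqrt n}\ \longrightarrow\ \|D^w_d(\vartheta_k)\|>0\quad\text{a.s.},
\]
so that $\|D^Y_N(\vartheta_k)\|^2\to\infty$ a.s., completing the proof along the lines indicated in the reduction. The only genuinely delicate point is the almost-sure negligibility of the noise DFT: the oscillating weights $e^{-\mathrm{i}t\vartheta_k}$ obstruct a direct appeal to the ergodic theorem, and the commensurability of $\vartheta_k$ with the period $d$ is exactly what lets one reorganize the sum into Ces\`aro averages of stationary ergodic subsequences. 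Verifying ergodicity of these decimated sequences and invoking the vector-valued strong law is where the real work lies; everything else is linear algebra and the triangle inequality.
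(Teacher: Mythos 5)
Your proposal is correct, and up to the decomposition it follows the paper's route exactly: the paper likewise writes $T^{{\tt FTR_2}}=\sum_{j=1}^q\|D^Y_N(\vartheta_j)\|^2$ and expands each term via \eqref{eq:altern} into $\|D^Z_N(\vartheta_j)\|^2+n\|D^w_d(\vartheta_j)\|^2+2\sqrt{n}\,\mathrm{Re}\langle D^w_d(\vartheta_j),D^Z_N(\vartheta_j)\rangle$. Where you diverge is the treatment of the noise term, and this is the interesting part. The paper simply invokes Theorem~\ref{th:asympt} to conclude $\|D^Z_N(\vartheta_j)\|^2=O_P(1)$ and stops there; as you correctly observe, a stochastic boundedness bound only yields $T^{{\tt FTR_2}}\to\infty$ \emph{in probability}, so the paper's argument, taken literally, proves a weaker conclusion than the stated ``with probability 1.'' Your ergodic-theorem argument closes exactly this gap: exploiting that $e^{-\mathrm{i}t\vartheta_k}$ is $d$-periodic in $t$, you rewrite $N^{-1}\sum_t Z_te^{-\mathrm{i}t\vartheta_k}$ as a fixed linear combination of Ces\`aro averages of the decimated sequences $(Z_{jd+r})_j$; each of these is a Bernoulli shift of the i.i.d.\ blocks $(\delta_{jd+r},\ldots,\delta_{(j-1)d+r+1})$, hence stationary and ergodic, and Mourier's vector-valued ergodic theorem (using $E\|Z_0\|<\infty$) gives almost-sure convergence to $EZ_0=0$. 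This shows $\|D^Z_N(\vartheta_k)\|=o(\sqrt n)$ a.s., which simultaneously controls the cross term by Cauchy--Schwarz, and the reverse triangle inequality finishes the proof. In short: same skeleton, but your handling of the noise actually delivers the almost-sure statement the proposition claims, at the price of a genuinely new ingredient (ergodicity of the decimated subsequences) that the paper does not use.
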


Observe that
\[
\sum_{j=1}^q\|D^w_d(\vartheta_{j})\|^2
=\frac{1}{2}\sum_{t=1}^d\|w_t\|^2
=:\frac{d}{2}\mathrm{MSS}_\mathrm{sig}.
\]
Explicit forms for $\|D^w_d(\vartheta_1)\|^2$ and
$\sum_{j=1}^q\|D^w_d(\vartheta_j)\|^2$ when specialized to the
alternatives considered in models \eqref{e:model-Z}, \eqref{e:model-s}
and \eqref{e:model-G} are summarized in Table~\ref{tab:1}.  We infer
that if $(Z_t)$ satisfies Assumption~\ref{a:approx} with $r=2$,
then the functional tests based on
$T^{\tt FTR_2}$ (or
equivalently on $T^{\tt FAV}$) are consistent under the
alternatives in models \eqref{e:model-Z}, \eqref{e:model-s}
and \eqref{e:model-G}. The test based
on $T^{\tt FTR_1}$ is consistent for
model \eqref{e:model-Z}. It remains consistent for model
\eqref{e:model-s} provided $\alpha_1^2+\beta_1^2>0$,  and it is
consistent for model \eqref{e:model-G} if $\|D^w_d(\vartheta_1)\|^2>0$.

Consistency for the multivariate tests can be stated similarly.
Consider the representation \eqref{e:vector} of the projections.

\begin{Proposition} \label{pr:allconsist}
Consider model \eqref{e:model-G}  such that the noise is $L^2$-$m$--approximable. Let $\boldsymbol{D}_d^{\boldsymbol{w}}(\theta)=\frac{1}{\sqrt{d}}\sum_{t=1}^d  \boldsymbol{w}_{t}e^{-\mathrm{i} t\theta}$. If \ $\sum_{j=1}^q \|\boldsymbol{D}^{\boldsymbol{w}}_d(\vartheta_{j})\|^2>0$, we have that  $T^{\tt MEV_2}\to\infty$ and $T^{\tt MTR_2}\to\infty$ with probability 1. If \ $\|\boldsymbol{D}^{\boldsymbol{w}}_d(\vartheta_1)\|^2> 0$, we have that $T^{\tt MEV_1}\to\infty$ and $T^{\tt MTR_1}\to\infty$ with probability 1 ($N\to\infty$).
\end{Proposition}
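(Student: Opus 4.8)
The plan is to reproduce, with the extra $\boldsymbol{\Sigma}^{-1}$--weighting, the argument that establishes the fully functional consistency in Proposition~\ref{pr:consistff}. I would start from the multivariate analogue of the decomposition \eqref{eq:altern}, namely $\boldsymbol{D}_N^{\boldsymbol{Y}}(\vartheta_k)=\sqrt{n}\,\boldsymbol{D}_d^{\boldsymbol{w}}(\vartheta_k)+\boldsymbol{D}_N^{\boldsymbol{Z}}(\vartheta_k)$ for $1\le k\le q$. Since $N=dn$ and $\vartheta_k=2\pi k/d$ with $1\le k\le q<d$, the constant mean $\boldsymbol{\mu}$ contributes nothing at these frequencies (the relevant sum of roots of unity vanishes), so the signal term $\sqrt{n}\,\boldsymbol{D}_d^{\boldsymbol{w}}(\vartheta_k)$ is deterministic and of exact order $\sqrt{n}$. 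Writing the trace norm out gives the clean expression
\[
T^{\tt MTR_2}=\sum_{k=1}^q\big(\boldsymbol{R}(\vartheta_k)^\prime\boldsymbol{\Sigma}^{-1}\boldsymbol{R}(\vartheta_k)+\boldsymbol{C}(\vartheta_k)^\prime\boldsymbol{\Sigma}^{-1}\boldsymbol{C}(\vartheta_k)\big),
\]
and inserting the decomposition splits each summand into a signal term of order $n$, a cross term, and a pure-noise term.

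The crucial step, and the one I expect to require the most care, is to show that the noise is almost surely negligible against the signal, i.e.\ $n^{-1/2}\boldsymbol{D}_N^{\boldsymbol{Z}}(\vartheta_k)\to\boldsymbol{0}$ a.s., which is equivalent to the strong law $\tfrac1N\sum_{t=1}^N\boldsymbol{Z}_te^{-\mathrm{i}t\vartheta_k}\to\boldsymbol{0}$ a.s. Here I would exploit the $d$--periodicity of the weights: grouping by residue class $t\equiv r\ (\mathrm{mod}\ d)$ makes $e^{-\mathrm{i}t\vartheta_k}=e^{-\mathrm{i}r\vartheta_k}$ constant within each class, so that
\[
\frac1N\sum_{t=1}^N\boldsymbol{Z}_te^{-\mathrm{i}t\vartheta_k}=\frac1d\sum_{r=1}^d e^{-\mathrm{i}r\vartheta_k}\Big(\frac1n\sum_{j=0}^{n-1}\boldsymbol{Z}_{r+jd}\Big).
\]
Each subsampled sequence $(\boldsymbol{Z}_{r+jd})_{j\ge0}$ is again a (mixing, hence ergodic) Bernoulli shift of the i.i.d.\ innovations of Assumption~\ref{a:approx}, so Birkhoff's ergodic theorem, using only $E\|\boldsymbol{Z}_1\|<\infty$, yields $\tfrac1n\sum_{j=0}^{n-1}\boldsymbol{Z}_{r+jd}\to E\boldsymbol{Z}_1=\boldsymbol{0}$ a.s. This is where the weak-dependence structure is genuinely used; the remainder is deterministic bookkeeping.

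With this negligibility in hand, dividing by $n$ and using positive definiteness of $\boldsymbol{\Sigma}$ kills the cross and pure-noise terms a.s., leaving $n^{-1}T^{\tt MTR_2}\to\sum_{k=1}^q\big(\boldsymbol{r}_k^\prime\boldsymbol{\Sigma}^{-1}\boldsymbol{r}_k+\boldsymbol{c}_k^\prime\boldsymbol{\Sigma}^{-1}\boldsymbol{c}_k\big)$ a.s., where $\boldsymbol{r}_k,\boldsymbol{c}_k$ are the real and imaginary parts of $\boldsymbol{D}_d^{\boldsymbol{w}}(\vartheta_k)$. This limit is bounded below by $\lambda_{\min}(\boldsymbol{\Sigma}^{-1})\sum_{k=1}^q\|\boldsymbol{D}_d^{\boldsymbol{w}}(\vartheta_k)\|^2$, which is strictly positive under the hypothesis, so $T^{\tt MTR_2}\to\infty$ a.s. For $T^{\tt MEV_2}$ I would use that $\boldsymbol{A}(\vartheta_1,\ldots,\vartheta_q)\boldsymbol{\Sigma}^{-1}\boldsymbol{A}^\prime(\vartheta_1,\ldots,\vartheta_q)$ is symmetric positive semidefinite of size $(d-1)$, whence its Euclidean norm (largest eigenvalue) satisfies $\|\cdot\|\ge(d-1)^{-1}\|\cdot\|_{\mathrm{tr}}=(d-1)^{-1}T^{\tt MTR_2}\to\infty$ a.s. The statements for $T^{\tt MTR_1}$ and $T^{\tt MEV_1}$ follow by specializing the same computation to the single frequency $\vartheta_1$ and invoking $\|\boldsymbol{D}_d^{\boldsymbol{w}}(\vartheta_1)\|^2>0$. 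Finally, when $\boldsymbol{\Sigma}$ is replaced by a consistent estimator, or by the spectral-density normalization of Section~\ref{s:dep}, the normalizer still converges a.s.\ to a positive-definite limit (see Appendix~\ref{s:prac}), so the same lower bounds apply and the conclusion is unchanged.
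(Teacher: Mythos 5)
Your proposal is correct, and it reaches the conclusion by a genuinely different route in the one step that carries the probabilistic content. The paper disposes of this proposition in a single line, "along the lines of" Proposition~\ref{pr:consistff}: there the decomposition \eqref{eq:altern} is used exactly as you use its multivariate analogue, but the noise is controlled by invoking Theorem~\ref{th:asympt} to get $\|\boldsymbol{D}_N^{\boldsymbol{Z}}(\vartheta_j)\|^2=O_P(1)$, so the signal term of order $n$ dominates. You instead prove the stronger statement $n^{-1/2}\boldsymbol{D}_N^{\boldsymbol{Z}}(\vartheta_k)\to\boldsymbol{0}$ almost surely, by grouping over residue classes mod $d$ and applying the Banach-space strong law (Birkhoff/Mourier) to each subsampled Bernoulli shift $(\boldsymbol{Z}_{r+jd})_j$, which is indeed stationary and ergodic as a measurable function of i.i.d.\ blocks. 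This buys two things: it needs only $E\|Z_1\|<\infty$ rather than the full machinery behind the functional DFT central limit theorem, and it actually delivers the ``with probability~1'' divergence asserted in the statement, whereas the paper's $O_P(1)$ bound literally yields only $T\to\infty$ in probability. The remaining bookkeeping --- vanishing of the mean at the nonzero fundamental frequencies, the lower bound $\lambda_{\min}(\boldsymbol{\Sigma}^{-1})\sum_k\|\boldsymbol{D}_d^{\boldsymbol{w}}(\vartheta_k)\|^2$ for the trace statistic, and $\|\cdot\|\ge (d-1)^{-1}\|\cdot\|_{\mathrm{tr}}$ to pass from {\tt MTR} to {\tt MEV} --- matches what the paper's argument implicitly requires and is handled correctly; your closing remark on replacing $\boldsymbol{\Sigma}$ by a consistent, positive-definite-in-the-limit estimator addresses a point the paper also leaves to Appendix~\ref{s:prac}.
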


As before, we can specialize the result to models \eqref{e:model-Z}
and \eqref{e:model-s}. Similar conditions as  for the functional case
are needed.

In Appendix~\ref{s:local}, we will present some results on local
consistency, i.e.\ we consider the case where the periodic signal
shrinks to zero with growing sample size. This study gives some
insight to the question in which situations a particular test can be
recommended. In this context we also refer to a numerical study in
Section~\ref{ss:localpower}.

\begin{table}
\begin{center}
\renewcommand{\arraystretch}{1.4}
\begin{tabular}{l|cc}
Model    & $\|D^w_d(\vartheta_1)\|^2$ & $\sum_{j=1}^q\|D^w_d(\vartheta_j)\|^2$ \\
\hline
\eqref{e:model-Z}     & $\frac{d}{4}\rho^2$    & $\frac{d}{4}\rho^2$     \\
\eqref{e:model-s}       &  $\frac{d}{4}(\alpha_1^2+\beta_1^2)$    & $\frac{d}{4}\sum_{k=1}^d(\alpha_k^2+\beta_k^2)$      \\
\eqref{e:model-G}      & $\|D^w_d(\vartheta_1)\|^2$    & $\frac{d}{2}\mathrm{MSS}_\mathrm{sig}$      \\
\end{tabular}
\caption{Explicit forms for $\|D^w_d(\vartheta_1)\|^2$ and $\sum_{j=1}^q\|D^w_d(\vartheta_j)\|^2$ when specialized to the alternatives
 in models \eqref{e:model-Z}, \eqref{e:model-s} and \eqref{e:model-G}.
\label{tab:1} }
\end{center}
\end{table}

\section{Application to pollution data} \label{s:realdata} We analyze
measurements of {\tt PM10} (particulate matter) and {\tt NO} (nitrogen
monoxide) in Graz, Austria, collected during one cold season, between
October 1, 2015 and March 15, 2016. Due to the geographic location of
Graz in a basin and unfavorable meteorological conditions (like
temperature inversion), the EU air quality standards are often not met
during the winter months. The measurement station is in the city
center (Graz-Mitte). Observations are available in the 30 minutes
resolution. The data were preprocessed in order to account for a few
missing values.  The measuring unit for both pollutants is $\mu
g/m^3$.  To improve the stability of our $L^2$ based methodology, we
follow \citetext{stadlober:hoermann:pfeiler:2008} and base our
investigations on the square-root transformed data. The resulting
discrete sample has been transformed into functional data objects with
the {\tt fda} package in {\tt R} using nine B-spline basis functions
of order four.

Our preliminary analysis, referred to
in  the Introduction,  was based on standard
ANOVA for daily averages, not taking into account the dependence of
the data. Viewing them as projections onto $v(u)\equiv 1$, we can
apply our tests $T^{\tt MEV_1}$ and $T^{\tt MEV_2}$ (or equivalently
$T^{\tt MTR_1}$ and $T^{\tt MTR_2}$ since $p=1$) adjusted for
dependence as explained in Section~\ref{s:dep}. The spectral density
of the daily averages is obtained as in Section~\ref{s:prac} with
$\gamma$ equal to the Bartlett kernel and $b_N=5$. The corresponding
$p$-values are given in Tables~\ref{tab:pm10} and \ref{tab:no} in the
rows tagged $v(u)\equiv 1$.

\begin{table}
\begin{center}
\begin{tabular}{l|cc|cc|cc}
   & $T^{\tt MEV_1}$ & $T^{\tt MTR_1}$ & $T^{\tt MEV_2}$ & $T^{\tt MTR_2}$  & $T^{\tt FTR_1}$ & $T^{\tt FTR_2}$\\
\hline
 FF\,\,\,\,\, ($100\%$) & & & &  & 0.180 & 0.104\\
 \hline
 $v(u)\equiv 1$ & 0.611 & 0.611 &  0.525 &0.525 & &\\
$p=1$ (71$\%$) &$ 0.564$  & $0.564$ & $0.492$ & $0.492$ & &\\
$p=2$ (82$\%$) &  $0.072$  & $0.083$ & $0.030$ & $0.031$ & &\\
$p=3$ (88$\%$) & $0.103$ & $0.091$& $0.071 $& $0.038 $& &\\
$p=5$ (96$\%$) & $< 10^{-4} $ & $<10^{-4} $ &$<10^{-5} $& $<10^{-4}$  & &
\end{tabular}
\end{center}
\caption{The $p$-values for {\tt PM10} data.
In parentheses,  the percentage of variance explained by the first  $p$
principal components on which the curves are projected.}
\label{tab:pm10}
\end{table}

\begin{table}
\begin{center}
\begin{tabular}{l|cc|cc|cc}
   & $T^{\tt MEV_1}$ & $T^{\tt MTR_1}$ & $T^{\tt MEV_2}$ & $T^{\tt MTR_2}$  & $T^{\tt FTR_1}$ & $T^{\tt FTR_2}$\\
\hline
 FF\,\,\,\,\, ($100\%$) & & & &  & 0.032 & 0.006\\
 \hline
 $v(u)\equiv 1$ & 0.305  & 0.305 & 0.099 & 0.099 & &\\
$p=1$ (68$\%$) &$ 0.247$  & $0.247$ & $0.076$& $0.076$ \\
$p=2$ (81$\%$) &  $0.495$ & $0.496$ &$0.204$ & $0.172$ \\
$p=3$ (87$\%$) & $<10^{-5}$& $< 10^{-5} $ & $<10^{-5}$& $< 10^{-5} $\\
$p=5$ (97$\%$) & $<10^{-5}$& $< 10^{-5} $ & $<10^{-5}$& $< 10^{-5} $
\end{tabular}
\end{center}
\caption{The $p$-values for {\tt NO} data.
In parentheses,  the percentage of variance explained by the first  $p$
principal components on which the curves are projected.}
\label{tab:no}
\end{table}

\begin{table}
\begin{center}
\begin{tabular}{l|cc|cc|cc}
   & $T^{\tt MEV_1}$ & $T^{\tt MTR_1}$ & $T^{\tt MEV_2}$ & $T^{\tt MTR_2}$  & $T^{\tt FTR_1}$ & $T^{\tt FTR_2}$\\
\hline
 FF\,\,\,\,\, ($100\%$) & & & &  & 0.556 & 0.737\\
 \hline
$p=1$ (67$\%$) &$ 0.582$ &  $0.582$ & $0.811$ & $0.811$ & & \\
$p=2$ (82$\%$) &  $0.171$  &$0.134$ & $0.335$& $0.356$ & & \\
$p=3$ (88$\%$) & $0.286$&  $0.117$& $0.493 $ &$0.307 $ & &\\
$p=5$ (96$\%$) & $0.515$&  $0.342$& $0.653$ & $0.654 $ & &
\end{tabular}
\end{center}
\caption{The $p$-values for residuals
in the regression of the {\tt NO} curves
onto the {\tt PM10} curves.}
\label{tab:res}
\end{table}

Next we conduct a periodicity analysis using the new tests. We compare
the fully functional (FF) tests $T^{\tt FTR_1}$ and $T^{\tt FTR_2}$
and the multivariate tests $T^{\tt MEV_1}$, $T^{\tt MEV_2}$, $T^{\tt MTR_1}$ and $T^{\tt MTR_2}$. Again,
we adjust the procedures for dependence, as explained in
Section~\ref{s:dep}. The spectral density and the covariance operator
(the latter is needed to compute principal components) are estimated
as described in Section~\ref{s:prac}. For the multivariate tests
we project data on the first $p$ principal components. We choose
$p=1,2,3$ and $p=5$---this choice guarantees that at least
$95\%$ of variance
are  explained for both data sets. The results are presented in
Tables~\ref{tab:pm10} and \ref{tab:no}.  It can be seen that the fully
functional procedures do give strong evidence of a weekly pattern for
{\tt NO}. From the multivariate tests we see that the first two
principal components do not pick up this periodic signal, but we get
strong evidence that it is concentrated in the third principal
component which is explaining about $6\%$ of the total variance.

For {\tt PM10} the situation is less clear cut. Though the $p$-values
are much smaller than in case of daily averages, the functional tests
are not significant at $5\%$ level. Looking at the multivariate tests
we do find a significant periodic signal if we project on higher order
principal components. These components explain only a relatively small
proportion of the total variance and hence the periodic pattern is not
easily made out on the global scale of the curves. The example
confirms that the projection based approach is more powerful in such
situations, with the drawback of being sensitive to the number  of
basis functions.

We conclude this illustrating example by regressing the {\tt NO}
curves onto the {\tt PM10} curves.  The function on function
regression is done using the $B$-spline expansion, see e.g.
\citetext{ramsay:hooker:graves:2009}.  We analyze the residual curves.
The $p$-values are summarized in Table~\ref{tab:res}. None of our
tests yields significant evidence that there remains a weekly
periodicity in the residual curves.  This indicates that in
Graz--Mitte, the sources for both pollutants {\tt PM10} and {\tt NO}
are the same.

\section{Assessment based on simulated data} \label{s:sim} Our goal is
to assess empirical rejection rates of our  tests,  under
$\mathcal{H}_0$ as well as under $\mathcal{H}_A$,  in some realistic
finite sample settings.  For this purpose, we consider the functional
time series of {\tt PM10}, pre-processed as explained in
Section~\ref{s:realdata}. We remove the weekday mean curves $\widehat
w_k$, $1\leq k\leq 7$, (from every Monday curve, we remove Monday's
mean $\widehat w_1$, etc.).  We then generate  series of functional
data by bootstrapping (with replacement) the times series of these
residuals. The resulting i.i.d.\ data are denoted $\epsilon_{t}, \;
t=1,\ldots,M$.  Next we generate dependent errors by setting
\[
Z_{t} = \epsilon_{t} + a_{1} \epsilon_{t-i} + \ldots
+ a_{5} \epsilon_{t-5}, \quad t=6,\ldots,M,
\]
where $a_{k} = e^{-k}$ are scalar coefficients. We chose $M=215$ and
$425$ so that the length of the time series, $N$, is $210$ and $420$.
Then we run our tests with the procedures adjusted for dependence as explained in
Section~\ref{s:dep}. Our estimator of the spectral density is defined
by (\ref{e:win2}) with $b_{N} = \lfloor N^{1/3} \rfloor$. The results
are shown in Table~\ref{tab:empdep}. We see that the fully functional
tests have a very good empirical size.
Also the multivariate tests, where we projected on the first $p$
eigenfunctions of the data, perform well, especially for smaller
values of $p$.
We have experimented with other simulation setups,  not reported
here. Throughout,  we found that the fully
functional tests are more reliable than  the multivariate tests in terms of their
empirical size. This is most likely  explained by the fact that the fully
functional methods are not very sensitive to the effect of estimation
errors for small eigenvalues. The distributions of
$\mathrm{HExp}(\lambda_1,\lambda_2,\ldots)$ and
$\mathrm{HExp}(\hat\lambda_1,\hat\lambda_2,\ldots)$ are typically
close, because they mainly depend on a few large eigenvalues for which
the relative estimation error is small.  For the multivariate tests,
eigenvalues enter as reciprocals. If $\lambda_k$ is close to
$\hat\lambda_k$, it does not  necessarily mean that $1/\lambda_k$ and
$1/\hat\lambda_k$ are close,  if the eigenvalues are small.

\begin{table}
\begin{center}
\resizebox{\columnwidth}{!}{
\begin{tabular}{l|cccc|cc||cccc|cc}
 & \multicolumn{6}{c||}{$\alpha=5\%$} &   \multicolumn{6}{c}{$\alpha=10\%$} \\
   & $T^{\tt MEV_1}$ & $T^{\tt MTR_1}$& $T^{\tt MEV_2}$ & $T^{\tt MTR_2}$  & $T^{\tt FTR_1}$ & $T^{\tt FTR_2}$& $T^{\tt MEV_1}$ & $T^{\tt MTR_1}$  & $T^{\tt MEV_2}$ & $T^{\tt MTR_2}$ & $T^{\tt FTR_1}$ & $T^{\tt FTR_2}$\\
 \hline
FF &  &  &  &  & 5.1 & 5.0 &  &  &  &  & 9.2 & 8.3 \\
   &  &  &  &  &  5.9 & 4.7 &  &  &  &  & 10.8 & 9.9 \\ \hline
  $p=1$ & 5.0 & 5.0 & 3.9 & 3.9 &  &  & 9.6 & 9.6 & 8.7 & 8.7 &  &  \\
             & 5.9 & 5.9 & 4.4 & 4.4 &  &  & 9.8 & 9.8 & 9.9 & 9.9 &  &  \\ \hline
  $p=2$ & 6.4 & 6.5 & 4.3 & 3.9 &  &  & 11.2 & 11.4 & 8.7 & 8.0 &  &  \\
           & 6.0 & 6.0 & 5.4 & 4.1 &  &  & 10.6 & 10.6 & 9.8 & 9.1 &  &  \\ \hline
  $p=3$ & 6.8 & 5.9 & 3.8 & 3.9 &  &  & 12.2 & 11.8 & 7.9 & 6.9 &  &  \\
   & 5.8 & 5.7 & 5.5 & 4.2 &  &  & 10.6 & 11.2 & 9.6 & 7.9 &  &  \\ \hline
  $p=5$ & 7.4 & 8.7 & 6.4 & 6.2 &  &  & 15.5 & 15.9 & 11.6 & 11.4 &  &  \\
           & 6.7 & 7.9 & 6.4 & 5.7 &  &  & 12.5 & 12.9 & 12.2 & 10.7 &  &
\end{tabular} }
\end{center}
\caption{Empirical size (in $\%$) at the
nominal level $\alpha$ of $5\%$ and $10\%$ for dependent
 time series with sample sizes $N=210$ (top rows) and $N=420$
(bottom rows). Results are based on 1000 Monte Carlo simulation runs.
 \label{tab:empdep}}
\end{table}

To see how well the tests can detect a realistic alternative, we use
the same data generating process as above and periodically add the
weekday means $\widehat w_1,\ldots\widehat w_7$ to the stationary
noise, say $(Z_t)$.  We thus get the series $V_t=\widehat w_{(t)}+Z_t$
where $(t)= t\,\text{mod}\, 7$ with the convention that $\widehat
w_{0}=\widehat w_{7}$. This construction entails that we are in the
setting of Model~\eqref{e:model-G} and hence, in view of
Theorem~\ref{th:LR}, we expect the multi-frequency and trace based
tests to be most powerful.  This is confirmed in
Table~\ref{tab:empdepalt} where we show empirical rejection rates.
The power of the eigenvalue based tests is very similar.  We see again that, in terms of power, the multivariate
tests perform best, once we project onto an appropriate
  subspace.  Let us note that in this example
$\text{MSS}_{\mathrm{sig}}=\frac{1}{7}\sum_{k=1}^7\|\widehat
w_k-\overline{\widehat{w}}\|^2\approx 0.1$ and $E\|Z_k\|^2\approx
3.1$. Given the relatively small signal-to-noise ratio, we can
see that overall the tests perform very well in finite samples.

\begin{table}
\begin{center}
\resizebox{\columnwidth}{!}{
\begin{tabular}{l|cccc|cc||cccc|cc}
 & \multicolumn{6}{c||}{$N=210\%$} &   \multicolumn{6}{c}{$N=420\%$} \\
   & $T^{\tt MEV_1}$ & $T^{\tt MTR_1}$& $T^{\tt MEV_2}$ & $T^{\tt MTR_2}$  & $T^{\tt FTR_1}$ & $T^{\tt FTR_2}$& $T^{\tt MEV_1}$ & $T^{\tt MTR_1}$  & $T^{\tt MEV_2}$ & $T^{\tt MTR_2}$ & $T^{\tt FTR_1}$ & $T^{\tt FTR_2}$\\
 \hline
FF &  &  &  &  & 39.2 & 72.9 &  &  &  &  & 82.6 & 99.9 \\  \hline
  $p=1$ & 14.3 & 14.3 & 26.5 & 26.5 &  &  & 21.7 & 21.7 & 56.6 & 56.6 &  &  \\  \hline
  $p=2$ & 50.2 & 50.3 & 89.4 & 89.9 &  &  & 76.9 & 77.8 & 99.7 & 99.8 &  &  \\  \hline
  $p=3$ & 73.4 & 76.4 & 96.4 & 98.1 &  &  & 92.2 & 94.7 & 100 & 100 &  &  \\  \hline
  $p=5$ & 99.22 & 99.6 & 100 & 100 &  &  & 100 & 100 & 100 & 100 &  &  \\
\end{tabular} }
\end{center}
\caption{Empirical rates (in $\%$) when testing at
nominal level $\alpha$ of $5\%$. Results are based on 1,000 Monte Carlo simulation runs.
\label{tab:empdepalt}}
\end{table}

The rejection rates reported in this section are based on a specific example and a
specific estimator of the covariance structure, the same one as used in
Section~\ref{s:realdata}. To gain insights into the {\em asymptotic}
 rejection rates, we perform in Section~\ref{ss:localpower}
a numerical study which does not use a specific estimator,
but assumes a known covariance structure. This approach allows us
to isolate the effect of estimation from the intrinsic properties of
the tests.

\section{Local asymptotic  power}\label{ss:localpower}
A power study must necessarily involve a larger number of data
generating processes (DGP's) which satisfy the  various alternatives
considered in this paper. We consider here
18 DGP's,  indexed by the period $d=7, 31$ and
$i, j = 1, 2, 3$, which have the
general form
\begin{equation} \label{e:alt}
Y_t(u)= s_t^{(i,d)} \left(\sum_{k=1}^9\psi_k^{(j)} v_k(u)\right)
+ \sum_{k=1}^9 z_{t,k}v_k(u), \quad i,j=1,2,3.
\end{equation}
The $v_1, v_2, \ldots, v_9$ are orthonormal basis functions.
We note right away that the results do not depend on
what specific form the $v_k$ take, as long as they are orthonormal.
The  $s_t^{(i,d)}$ is a real $d$-periodic signal with $\sum_{t=1}^d
s_t^{(i,d)}=0$ and $\psi_k^{(j)}$ are real coefficients. The exact
specifications are given below. The variables
$\boldsymbol{z_t}=(z_{t,1},z_{t,2},\ldots,z_{t,9} )^\prime$ are
i.i.d.\ Gaussian vectors with zero mean and covariance
$\mathrm{diag}(1,2^{-1},2^{-2},2^{-3},\ldots,2^{-8})$. Then $(Y_t)$
follows the functional model \eqref{e:model-s} with
$w(u)=w^{(j)}(u)=\sum_{k=1}^9\psi_k^{(j)} v_k(u)$. Our assumptions
imply that the $v_k$ are the functional
principal components of $Y_t$.  We
consider periods of length $d=7$ and $d=31$. For the periodic signal
we consider the following variants
\begin{align*}
&s_t^{(1,d)}= \cos(2\pi t/d);\\
&s_t^{(2,d)}= I\left\{1\leq t\leq 2(d-1)/3\right\}-2I\left\{(2d-1)/3+1<t\leq d\right\}\quad \text{for} \quad 1\leq t\leq d;\\
&s_t^{(3,d)}=v_t-\bar{v},\quad\text{where}\quad v_t\stackrel{\mathrm{i.i.d.}}{\sim} N(0,1)\quad \text{for} \quad 1\leq t\leq d.
\end{align*}
We consider the following parameters $\boldsymbol{\psi}^{(j)}=(\psi_1^{(j)},\ldots,\psi_9^{(j)})^\prime$:
\begin{align*}
&\boldsymbol{\psi}^{(1)}=(1,0,0,0,0,\ldots,0)^\prime;\\
&\boldsymbol{\psi}^{(2)}\propto (1,2^{-1/2},2^{-1},2^{-3/2},\ldots,2^{-4})^\prime;\\
&\boldsymbol{\psi}^{(3)}=(0,0,0,1,0,\ldots,0)^\prime.
\end{align*}
The vectors $\boldsymbol{\psi}^{(j)}$ determine $w(u)$ and are scaled
to unit length. Under parametrization $\boldsymbol{\psi}^{(1)}$
($\boldsymbol{\psi}^{(3)}$),  we have $w(u)$ varying in direction of the
first (fourth) principal component, while under
$\boldsymbol{\psi}^{(2)}$ we take into account all principal
components. The DGP is determined by the pair
$(\boldsymbol{\psi}, s) =(\boldsymbol{\psi}, s^{(i,d)})$.

 We  study  the  \emph{local asymptotic power} functions defined by
\[
LP(x|\boldsymbol{\psi},s)
=\lim_{N\to\infty} P\big(T_N>q_{0.95}\,|\,\text{DGP is $(\boldsymbol{\psi},\frac{x}{\sqrt{N}}\,s)$}\big),
\]
where $T_N$ stands for one of the test statistics we derived, and
$q_{0.95}$ is its (asymptotic) 95th quantile under the null. We use a
superscript to indicate which statistic is used, for example, $LP^{\tt
  MEV_2}$, $LP^{\tt FTR_1}$, etc. It can be easily seen that if the
covariance operator $\Gamma$ is known, then, due to our Gaussian
setting, $P\big(T_N>q_{0.95}\,|\,\text{DGP is
  $(\boldsymbol{\psi},\frac{x}{\sqrt{N}}\,s)$}\big)$ does not  dependent
on $N$ for any of our tests. Since we let $N\to\infty$,  we can use a
Slutzky argument and compute $LP(x|\boldsymbol{\psi},s)$ directly with
the true $\Gamma$. It is not obvious how to obtain closed analytic forms for
$LP(x|\boldsymbol{\psi},s)$ and hence we compute them numerically by
Monte-Carlo simulation based on 5,000 replications.

\begin{enumerate}
\item \emph{Comparing $T^{\tt MEV_2}$ and $T^{\tt MTR_2}$: eigenvalue v.s.\ trace based test statistic.} \medskip

  We project data onto the space spanned by $v_1,\ldots,v_5$, which
  guarantees that at least $95\%$ of variance are explained.  In
  Figure~\ref{fig:powerq2} the asymptotic local power curves $LP^{\tt
    MEV_2}(x|\boldsymbol{\psi}^{(2)},s^{(2,d)})$ and $LP^{\tt
    MTR_2}(x|\boldsymbol{\psi}^{(2)},s^{(2,d)})$ with $d=7$ and $d=31$
  are presented. We have done the same exercise with
  $\boldsymbol{\psi}^{(1)}$ and $\boldsymbol{\psi}^{(3)}$ and obtained
  very similar results.

\item \emph{Comparing $T^{\tt MEV_1}$ and $T^{\tt MEV_2}$: test for sinusoidal v.s.\ test for general periodic pattern.} \medskip

  We project again onto $v_1,\ldots,v_5$.  In
  Figure~\ref{t:simuEV1EV2} the asymptotic local power curves $LP^{\tt
    MEV_1}(x|\boldsymbol{\psi}^{(2)},s)$ and $LP^{\tt
    MEV_2}(x|\boldsymbol{\psi}^{(2)},s)$ are shown with $s=s^{(2,7)}$
  (left panel), $s=s^{(2,31)}$ (middle panel) and $s=s^{(3,7)}$ (right
  panel). We see that the LR-test for the simpler model
  \eqref{e:model-Z} can significantly outperform the LR-test for model
  \eqref{e:model-s} even if $s_t^{(2,d)}$ is not sinusoidal. However,
  the conclusion is very different if $s$ is more erratic. When
  $s=s_t^{(3,7)}$, then $T^{\tt MEV_2}$ becomes a lot more powerful
  than $T^{\tt MEV_1}$. Simulations not reported here show that the
  above described effects become stronger the larger we choose the
  period $d$. This finding is supported by
  Proposition~\ref{pr:localpower} in our supplement, which provides a
  theoretical result on local consistency.
\item \emph{Comparing $T^{\tt MEV_1}$ and $T^{\tt FTR_1}$: projection based v.s.\ fully functional method.}\medskip

Now the objective is to compare the projection based methods with the fully functional ones. By fixing $s=s^{(1,7)}$ we focus on the simple model \eqref{e:model-Z}. The local power curves $LP^{\tt FTR_1}(x|\boldsymbol{\psi}^{(i)},s^{(1,7)})$ and $LP^{\tt MEV_1}(x|\boldsymbol{\psi}^{(i)},s^{(1,7)})$ for values $p=1,2,3$ and $p=5$ and $i=1,2,3$ are shown in Figure~\ref{t:simuMEVFTR}. We see that the fully functional test performs well in all settings. Not surprisingly, the better the basis onto which we project describes $w(u)$, the better the projection based method becomes. For all DGP's $(\boldsymbol{\psi}^{(i)},s^{(1,7)})$, $i=1,2,3$, there is one projection based test that outperforms the functional one. The disadvantage of the projection method is, however, its sensitivity with respect to the chosen basis. For example, while for DGP $(\boldsymbol{\psi}^{(1)},s^{(1,7)})$ the test with $p=1$ is performing best, it is the least powerful for DGP's $(\boldsymbol{\psi}^{(2)},s^{(1,7)})$ and
$(\boldsymbol{\psi}^{(3)},s^{(1,7)})$.
\end{enumerate}

\begin{figure}[h]
\begin{tabular}{c}
\includegraphics[scale=0.33]{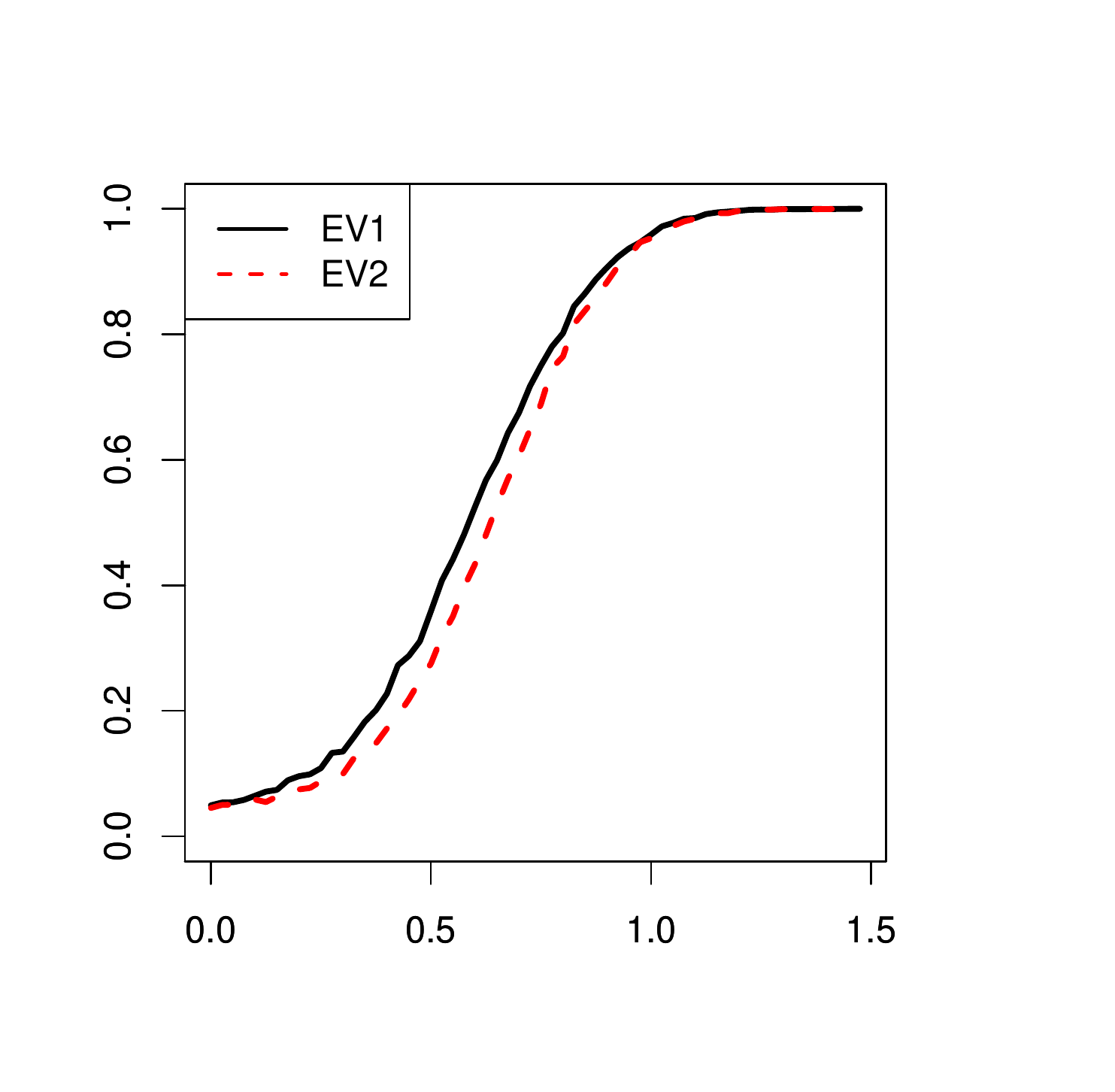}\hfill
 \includegraphics[scale=0.33]{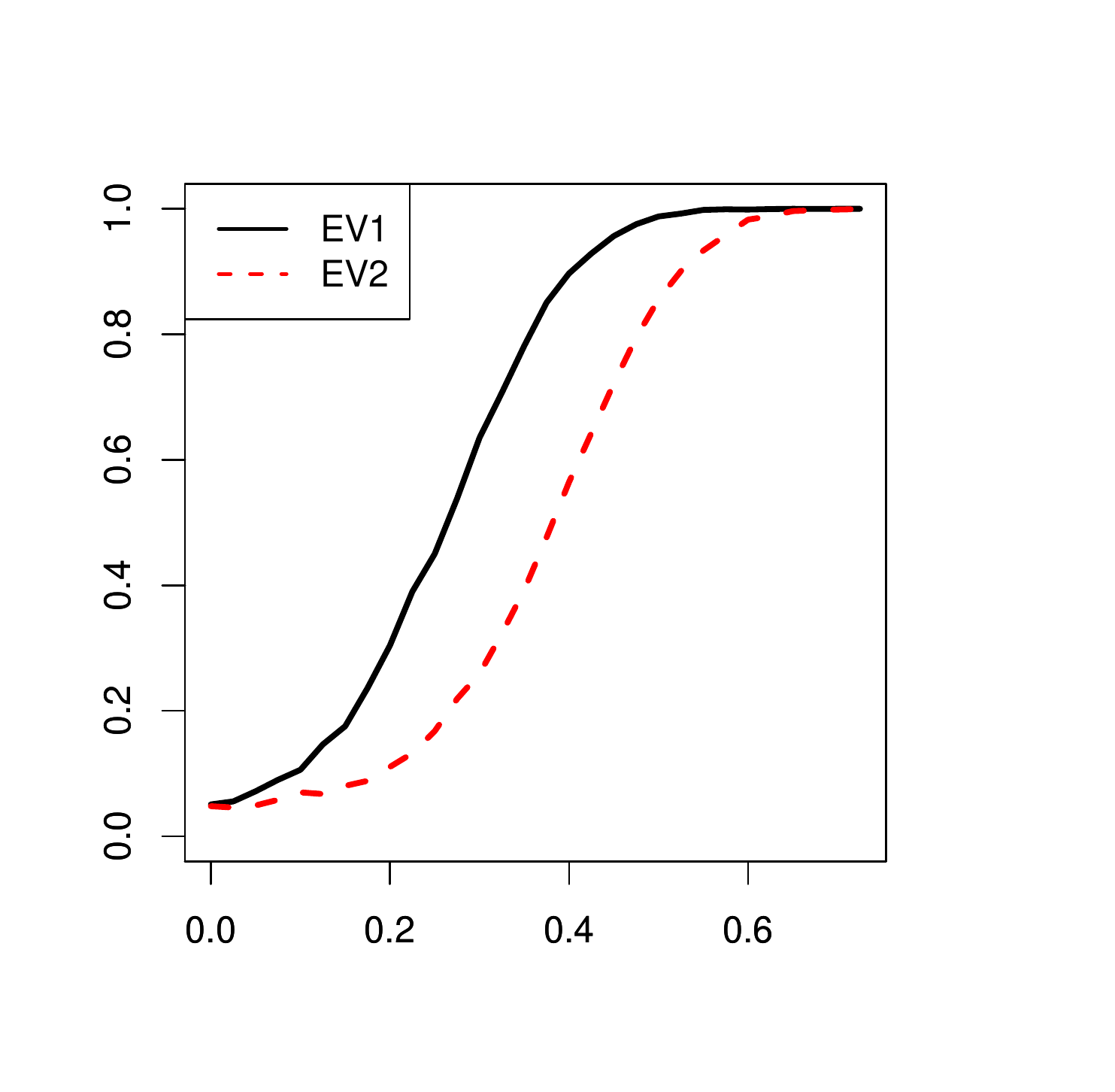}\hfill
 \includegraphics[scale=0.33]{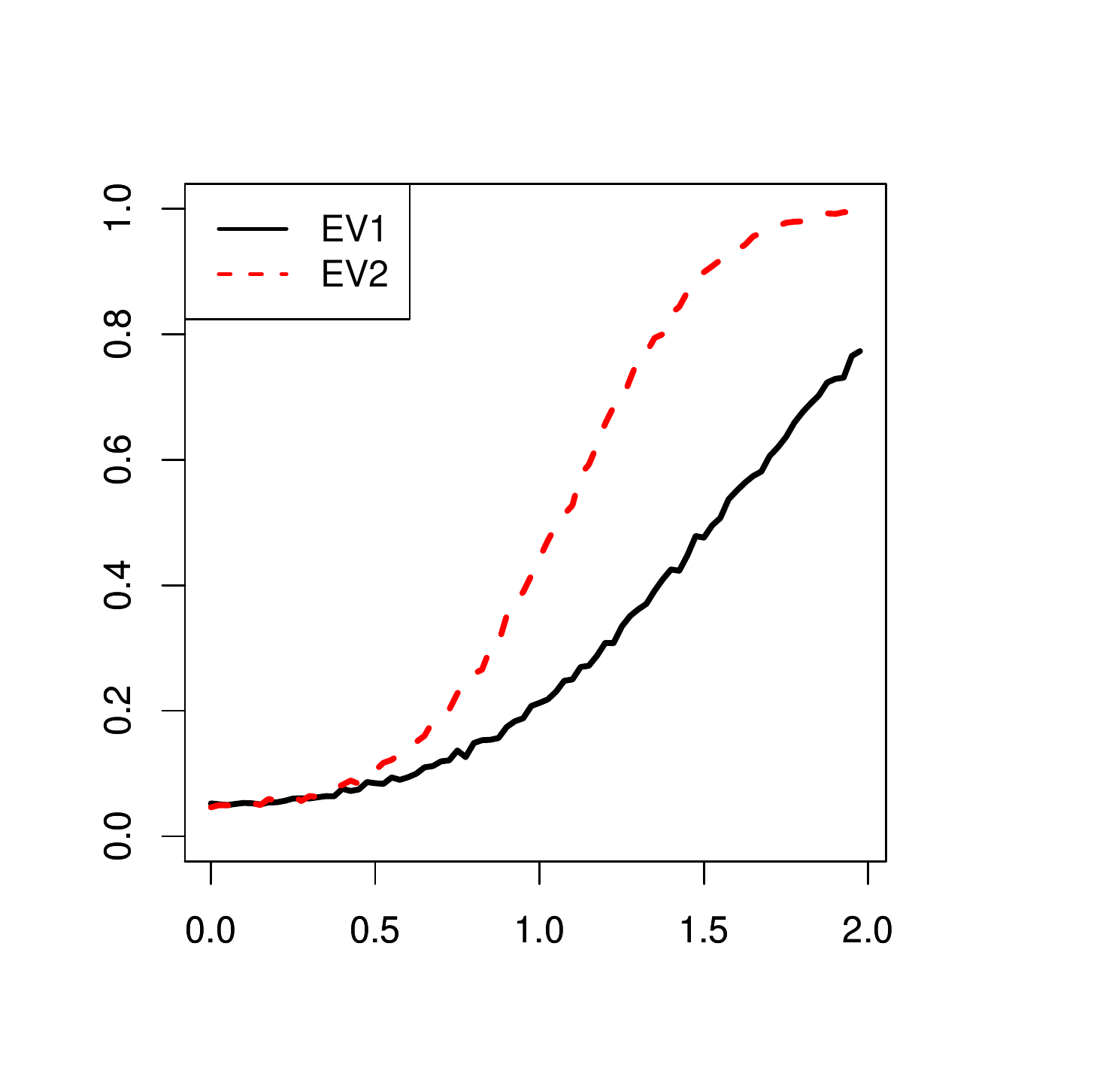}
\end{tabular}
\caption{Local power curves $LP^{\tt MEV_1}(x|\boldsymbol{\psi}^{(2)},s)$ (EV1) and $LP^{\tt MEV_2}(x|\boldsymbol{\psi}^{(2)},s)$ (EV2) with $s=s^{(2,7)}$ (left panel), $s=s^{(2,31)}$ (middle panel) and $s=s^{(3,7)}$ (right panel), with the realization of $(s^{(3,7)}_t\colon 1\leq t\leq 7)=(-0.24,  0.42, -1.69,  0.37,  0.07,  1.12, -0.05)$. }
\label{t:simuEV1EV2}
\end{figure}
\begin{figure}[h]
\begin{tabular}{c}
\includegraphics[scale=0.27]{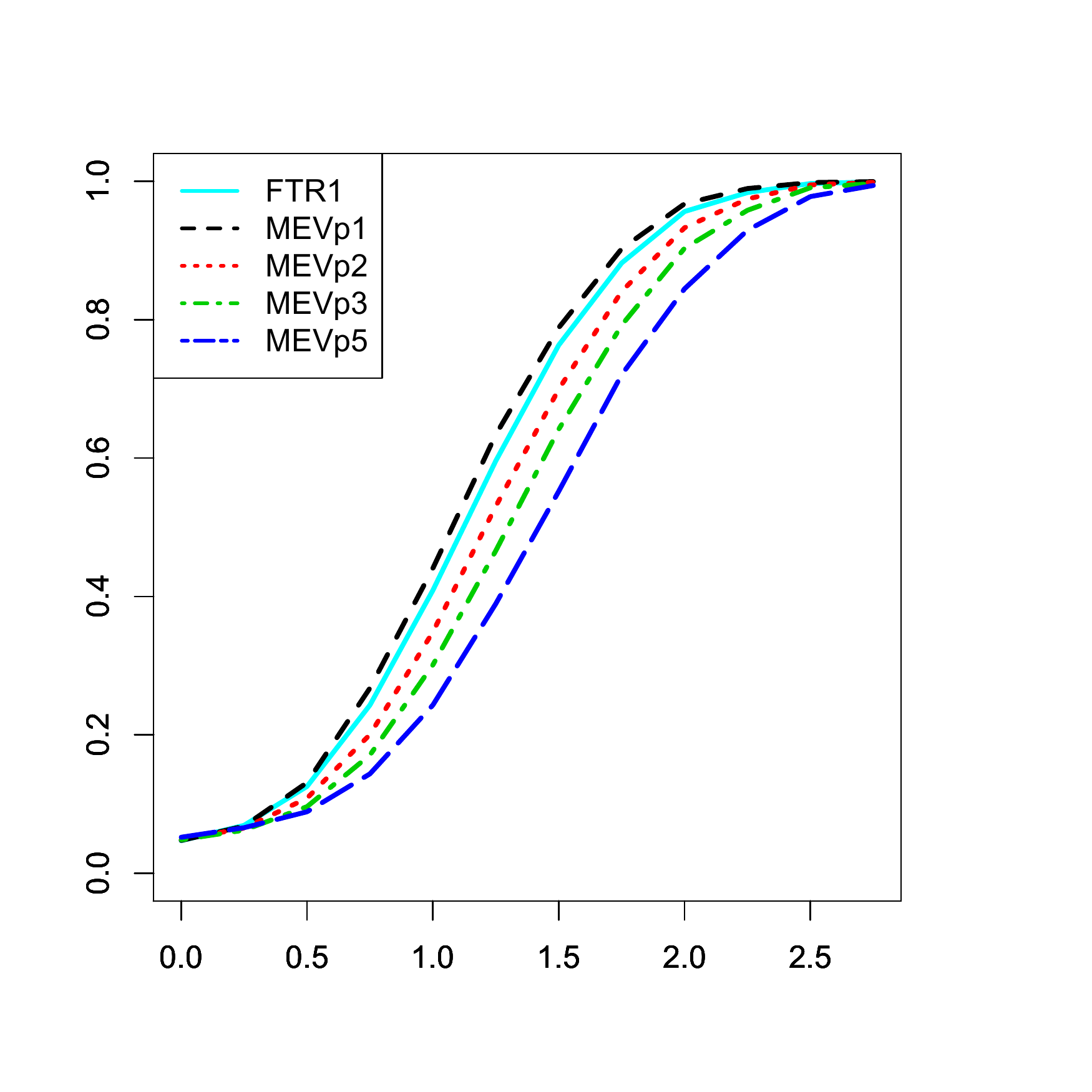}\hfill
 \includegraphics[scale=0.27]{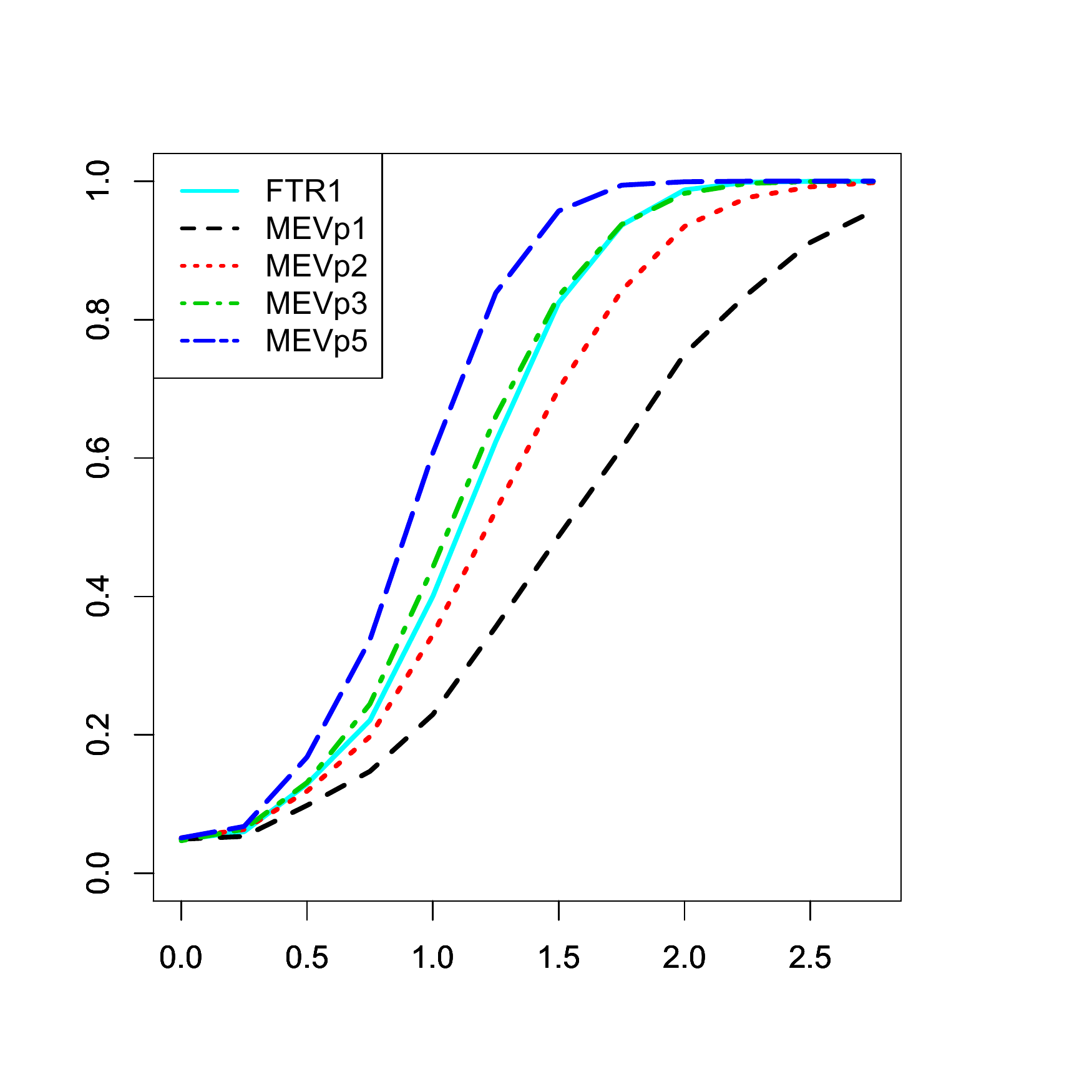}\hfill
 \includegraphics[scale=0.27]{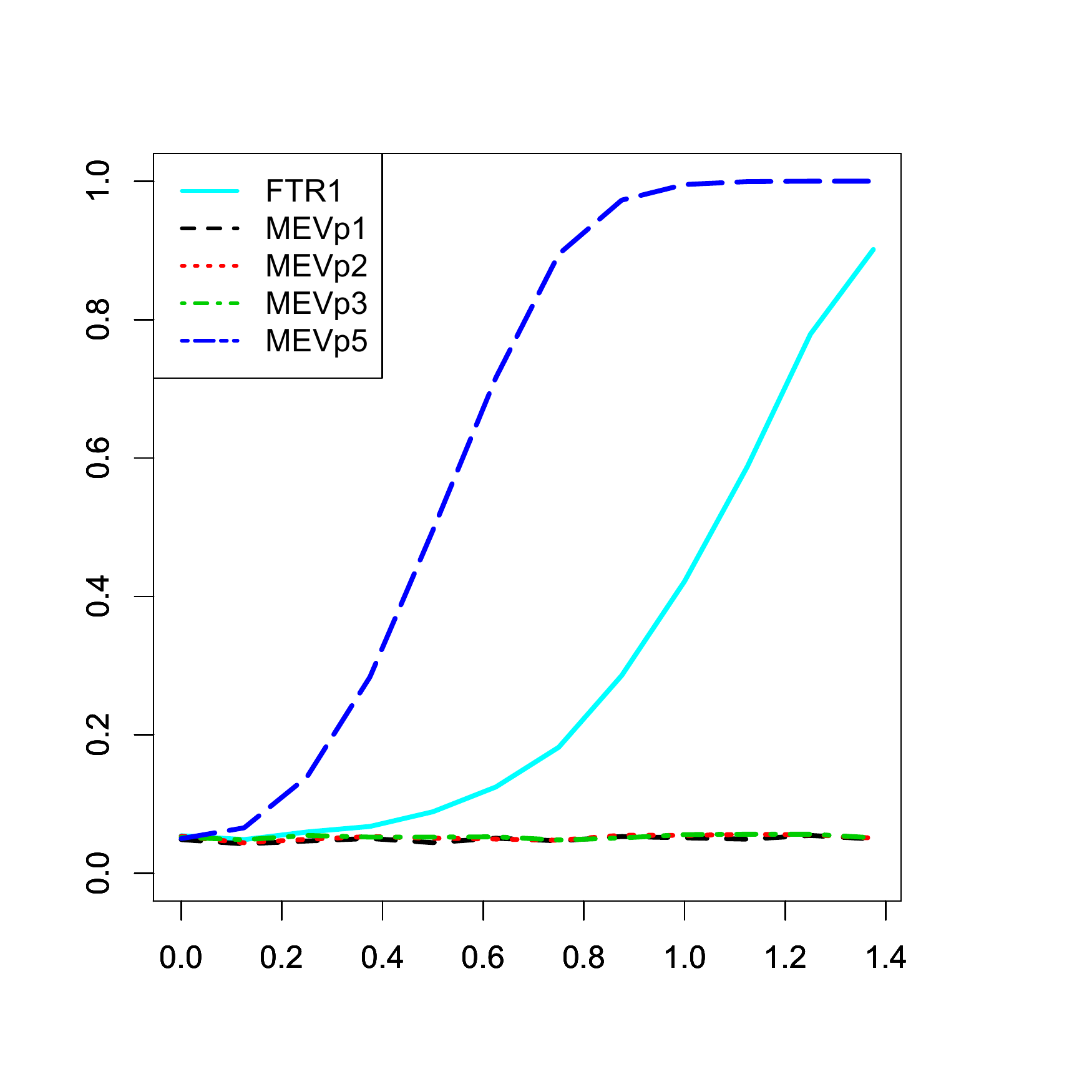}
\end{tabular}
\caption{Local power curves $LP^{\tt FTR_1}(x|\boldsymbol{\psi}^{(i)},s^{(1,7)})$ and $LP^{\tt MEV_1}(x|\boldsymbol{\psi}^{(i)},s^{(1,7)})$ for values $p=1,2,3$ and $p=5$ and $i=1$ (left panel), $i=2$ (middle panel) and $i=3$ (right panel). }
\label{t:simuMEVFTR}
\end{figure}

\clearpage
\section{Summary} \label{s:con} We have proposed several tests for
detecting periodicity in functional time series which fall into two
broad categories which we refer to as multivariate and fully
functional approaches.  Our tests are motivated by the Gaussian
likelihood ratio approach, and, in general, have the expected power
advantage for multivariate time series for which other tests exist.
Allowing general weak dependence of errors is also new even for
multivariate data. For functional data, all tests are new.  In what
follows we summarize the main conclusions of our work.

\begin{itemize}
\item Generally,  the functional approach is a more
adequate and safer option. The multivariate approach can be more
powerful, but it  is sensitive to the choice of the subspace on which the data
are projected.
\item If the signal is close to sinusoidal, then the simple single
  frequency test is more powerful, otherwise the opposite is true. The
  effect becomes stronger with length of the period. This empirical
  finding is theoretically confirmed in Appendix~\ref{s:local} of the
  supplement.
\item For the multivariate tests we have seen that the eigenvalue
  statistics can have a considerable power advantage over the
  traditionally used trace based statistics. Theoretically, we have
  shown that $T^{\tt MEV_1}$ and $T^{\tt MEV_2}$ can be justified by a
  LR procedure when the periodic signal is proportional to a single
  function $w(u)$. There exists an easy algorithm to compute critical
  values.
\end{itemize}
If no prior knowledge on the periodic component is available, we
recommend to use the ANOVA based approach or to base the decision on
more than one test. Simultaneous acceptance or simultaneous rejection
by several tests will lend confidence in the conclusion.

\bibliographystyle{oxford3}
\renewcommand{\baselinestretch}{0.9}
\small
\bibliography{per_06_13}

\renewcommand{\baselinestretch}{1.0}

\normalsize\newpage

\centerline{\Large Supplemental material}

\appendix
\section{Discussion of Assumption~\ref{ass:d}} \label{s:gen-d}
In this section, we explain how the test procedure should be adapted
when $d$ is even.  If $d=2$, then we can look at the lag-1 differenced series $\Delta Y_t=Y_{t}-Y_{t-1}$ and the problem boils down to testing for a zero mean of $(\Delta Y_t)$. So assume that $d\geq 4$. The tests $T^{{\tt MTR_1}}$  and $T^{{\tt FTR_1}}$  remain unchanged. The test statistics tests $T^{{\tt MTR_2}}$  and $T^{{\tt FTR_2}}$  have to be defined in a slightly different way.  We now set $r=(d-2)/2$ and replace $\boldsymbol{A}(\vartheta_1,\ldots,\vartheta_q)$ by
\begin{equation}
\boldsymbol{B}(\vartheta_1,\ldots,\vartheta_r)= \left[
\boldsymbol{R}(\vartheta_1),\ldots,
\boldsymbol{R}(\vartheta_r),
\boldsymbol{C}(\vartheta_1),\ldots,
\boldsymbol{C}(\vartheta_r),
\boldsymbol{R}(\theta_{N/2})
\right]^\prime.
\end{equation}
The corresponding tests in Theorem~\ref{th:LR}  have to be changed to
\begin{align*}
T^{{\tt MEV}_2}&:=\big\|\boldsymbol{B}(\vartheta_1,\ldots,\vartheta_r)\boldsymbol{\Sigma}^{-1}\boldsymbol{B}^\prime(\vartheta_1,\ldots,\vartheta_r)\big\| > q_{1-\ag}[\|\bW_p(d-1)\|/2],\\
T^{{\tt MTR_2}}&:=\big\|\boldsymbol{B}(\vartheta_1,\ldots,\vartheta_r)\boldsymbol{\Sigma}^{-1}\boldsymbol{B}^\prime(\vartheta_1,\ldots,\vartheta_r)\big\|_\mathrm{tr}>q_{1-\ag}[ \chi^2_{p(d-1)}/2],
\end{align*}
respectively.
The fully functional test \eqref{e:tr-fq} becomes
\[
T^{{\tt FTR_2}}:=\sum_{k=1}^r\big(\|R(\vartheta_k)\|^2+ \|C(\vartheta_k)\|^2\big)+\|R(\theta_{N/2})\|^2>q_{1-\ag}
\left [ \sum_{k=1}^{r}\Xi_k+\Theta \right ],
\]
where $\Xi_k\stackrel{\mathrm{i.i.d.}}{\sim} \mathrm{HExp}(\lambda_1,\lambda_2,\ldots)$ and $\Theta\sim \frac{1}{2}\sum_{\ell \geq 1}\lambda_\ell \chi^2(\ell)$ with i.i.d.\ $\chi^2$ variables $\chi^2(\ell)$. The $\lambda_\ell$ are the eigenvalues of $\mathrm{Var}(Y_1)$ and $\Theta$ is independent of $\Xi_1,\ldots,\Xi_r$. Accordingly, the analogue type changes are required in the setting of Section~\ref{s:dep}.

\section{Details of the derivation of the tests of Section~\ref{s:tests}}
\label{s:ta}

\begin{proof}[Proof of Theorem~\ref{th:LR}]
We only derive the LR test $T^{{\tt MEV_2}}$. Derivation of $T^{{\tt MEV_1}}$ is similar and $T^{{\tt MTR_2}}$ can be deduced from Proposition~\ref{pr:anova_tr}.

Set $\boldsymbol{A}=\boldsymbol{A}(\vartheta_1,\ldots,\vartheta_q)$ and $\boldsymbol{\gamma}=(\alpha_1,\ldots,\alpha_q,\beta_1,\ldots,\beta_q)^\prime$ and let $\boldsymbol{e}_1$ be the eigenvector associated to the largest eigenvalue of $\boldsymbol{\Sigma}^{-1/2}\boldsymbol{A}^\prime\boldsymbol{A}\boldsymbol{\Sigma}^{-1/2}$. The proof will reveal that  the MLEs of $\boldsymbol{w}$ and $\boldsymbol{\gamma}$ are given as
$\widehat{\boldsymbol{w}}=\boldsymbol{\Sigma}^{1/2} \boldsymbol{e}_1$ and $\widehat{\boldsymbol{\gamma}}=\frac{2}{\sqrt{N}}\boldsymbol{A} \boldsymbol{\Sigma}^{-1/2} \boldsymbol{e}_1$, respectively.
We notice that $\widehat{\boldsymbol{w}}$ and $\widehat{\boldsymbol{\gamma}}$ are not unique.  For any $x>0$, the pair $(\widehat{\boldsymbol{w}},\widehat{\boldsymbol{\gamma}})$ maybe replaced by $(x\widehat{\boldsymbol{w}},\widehat{\boldsymbol{\gamma}}/x)$. If the largest eigenvalue of $\boldsymbol{\Sigma}^{-1/2}\boldsymbol{A}^\prime\boldsymbol{A}\boldsymbol{\Sigma}^{-1/2}$ has multiplicity one, then uniqueness can be obtained imposing $\|\widehat{\boldsymbol{w}}\|=1$.

Define $\boldsymbol{c}_i=\big(\cos(t\vartheta_i)\colon 1\leq t\leq N\big)^\prime$ and $\boldsymbol{s}_i=\big(\sin(t\vartheta_i)\colon 1\leq t\leq N\big)^\prime$. Moreover we set
$\boldsymbol{w}^\mathrm{sc} = \bSig^{-1/2}\boldsymbol{w}$ and
$\boldsymbol{Y}_{t}^{\mathrm{sc}}  = \bSig^{-1/2}\boldsymbol{Y}_{t}$. First we observe that under $\mathcal{H}_0$ as well as under $\mathcal{H}_A$ the MLE for $\boldsymbol{\mu}=\overline{\boldsymbol{Y}}$.
Using representation \eqref{e:rep-s} of the underlying model, mutual orthogonality of the vectors $\boldsymbol{c}_1,\ldots,\boldsymbol{c}_q, \boldsymbol{s}_1,\ldots,\boldsymbol{s}_q$ and $\|\boldsymbol{c}_i\|=\|\boldsymbol{s}_i\|=N/2$, we obtain with simple algebra that the log-likelihood ratio  is given by \ $\max_{\boldsymbol{w}^\mathrm{sc},\boldsymbol{\gamma}}\ell_N(\boldsymbol{w}^\mathrm{sc},\boldsymbol{\gamma})$, where
$$
\ell_N(\boldsymbol{w}^\mathrm{sc},\boldsymbol{\gamma})=\big(\boldsymbol{w}^\mathrm{sc}\big)^\prime \boldsymbol{\Sigma}^{-1/2}\boldsymbol{A}^\prime\boldsymbol{\gamma}-\frac{\sqrt{N}}{4}\|\boldsymbol{w}^\mathrm{sc}\|^2\|\boldsymbol{\gamma}\|^2.
$$
Since for any $x>0$ we have
$\ell_N(\boldsymbol{w}^\mathrm{sc},\boldsymbol{\gamma})=\ell_N(x\boldsymbol{w}^\mathrm{sc},\boldsymbol{\gamma}/x)$, we can assume that $\|\boldsymbol{w}^\mathrm{sc}\|=1$.
Under this constraint we maximize
$$
\ell_N(\boldsymbol{w}^\mathrm{sc},\boldsymbol{\gamma})=\big(\boldsymbol{w}^\mathrm{sc}\big)^\prime \boldsymbol{\Sigma}^{-1/2}\boldsymbol{A}^\prime\boldsymbol{\gamma}-\frac{\sqrt{N}}{4}\|\boldsymbol{\gamma}\|^2.
$$
For given $\boldsymbol{w}^\mathrm{sc}$ we obtain the maximizer $\widehat{\boldsymbol{\gamma}}=\frac{2}{\sqrt{N}}\boldsymbol{A}\boldsymbol{\Sigma}^{-1/2} \boldsymbol{w}^\mathrm{sc}$ and
\begin{align*}
\max_{\boldsymbol{w}^\mathrm{sc}\colon \|\boldsymbol{w}^{\mathrm{sc}}\|=1 }\ell(\widehat{\boldsymbol{\gamma}},\boldsymbol{w}^\mathrm{sc})&=\frac{1}{\sqrt{N}}\max_{\boldsymbol{w}^\mathrm{sc}\colon \|\boldsymbol{w}^{\mathrm{sc}}\|=1 }\big(\boldsymbol{w}^\mathrm{sc}\big)^\prime \boldsymbol{\Sigma}^{-1/2}\boldsymbol{A}^\prime\boldsymbol{A}\boldsymbol{\Sigma}^{-1/2}\boldsymbol{w}^\mathrm{sc}\\
&=\frac{1}{\sqrt{N}}\|\boldsymbol{\Sigma}^{-1/2}\boldsymbol{A}^\prime\boldsymbol{A}\boldsymbol{\Sigma}^{-1/2}\|=\frac{1}{\sqrt{N}}\|\boldsymbol{A}\boldsymbol{\Sigma}^{-1}\boldsymbol{A}^\prime\|.
\end{align*}
Moreover, the maximizing $\boldsymbol{w}^\mathrm{sc}=\boldsymbol{e}_1$.
\end{proof}

\begin{Lemma}\label{le:orth}
Under Assumption~\ref{ass:gauss} and $\mathcal{H}_0$ we have that $$\big(R(\theta_1),\ldots,R(\theta_m),C(\theta_1)\cdots,C(\theta_m)\big)$$ are i.i.d.\ elements in $H$. We have $R(\theta_1)\sim \mathcal{N}_H(0,\frac{1}{2}\Gamma)$. The analogous result holds if the functions $R(\theta_i)$ and $C(\theta_i)$ are replaced by their projections $\mathbf{R}(\theta_i)$ and  $\mathbf{C}(\theta_i)$.  In this case we have $\mathbf{R}(\theta_i)\sim \mathcal{N}_p(0,\frac{1}{2}\boldsymbol{\Sigma})$.
\end{Lemma}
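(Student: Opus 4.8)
The plan is to show that the whole $2m$-tuple is jointly Gaussian in $H$ and then to read off both the independence and the common law from a single covariance computation, the key input being the discrete orthogonality of the trigonometric system on $\{1,\dots,N\}$.

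First I would remove the mean. Under $\mathcal H_0$ the three models coincide and $Y_t=\mu+Z_t$. Since $\theta_j=2\pi j/N$ with $1\le j\le m=[(N-1)/2]$ is never an integer multiple of $2\pi$, we have $\sum_{k=1}^N e^{-\mathrm ik\theta_j}=0$, so the deterministic mean cancels and
\[
D(\theta_j)=\frac{1}{\sqrt N}\sum_{k=1}^N Z_k e^{-\mathrm ik\theta_j},\qquad
R(\theta_j)=\frac{1}{\sqrt N}\sum_{k=1}^N\cos(k\theta_j)\,Z_k,\qquad
C(\theta_j)=-\frac{1}{\sqrt N}\sum_{k=1}^N\sin(k\theta_j)\,Z_k .
\]
Each of the $2m$ elements is thus a fixed real linear combination of the i.i.d.\ Gaussian $Z_k$. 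Consequently, for any finite family $v^{(1)},\dots,v^{(r)}\in H$ and reals $c_{l}$, applying $\sum_l c_l\langle\,\cdot\,,v^{(l)}\rangle$ to any of these elements yields a linear combination of the jointly normal variables $\langle Z_k,v^{(l)}\rangle$, hence a normal scalar; by the characterisation of Gaussianity in $H$ recalled after Assumption~\ref{ass:gauss}, the collection $\big(R(\theta_1),\dots,R(\theta_m),C(\theta_1),\dots,C(\theta_m)\big)$ is jointly Gaussian.

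Next I would compute all covariance operators. From $E[Z_k\otimes Z_l]=\delta_{kl}\Gamma$ one gets, for real coefficients $(a_k),(b_k)$, that $E\big[(\sum_k a_kZ_k)\otimes(\sum_l b_lZ_l)\big]=\big(\sum_k a_kb_k\big)\Gamma$. Hence every auto- and cross-covariance in the collection is a scalar multiple of $\Gamma$, the scalar being an inner product of two of the vectors $(\cos(k\theta_i))_{k}$, $(\sin(k\theta_i))_{k}$. The discrete orthogonality relations
\[
\sum_{k=1}^N\cos(k\theta_i)\cos(k\theta_j)=\sum_{k=1}^N\sin(k\theta_i)\sin(k\theta_j)=\tfrac N2\,\delta_{ij},\qquad
\sum_{k=1}^N\cos(k\theta_i)\sin(k\theta_j)=0,
\]
valid for $1\le i,j\le m$, then make every cross-covariance the zero operator and every auto-covariance equal to $\tfrac1N\cdot\tfrac N2\Gamma=\tfrac12\Gamma$. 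For jointly Gaussian centred elements, vanishing cross-covariance operators imply independence (this reduces, via finite-dimensional projections, to the classical fact); together with the identical law $\mathcal N_H(0,\tfrac12\Gamma)$ this proves the first two assertions. For the projected version, $\mathbf R(\theta_i)$ and $\mathbf C(\theta_i)$ are the images of $R(\theta_i),C(\theta_i)$ under the fixed bounded map $x\mapsto(\langle x,v_1\rangle,\dots,\langle x,v_p\rangle)'$; independence is preserved, and the $(a,b)$ entry of the covariance of $\mathbf R(\theta_i)$ is $\langle\tfrac12\Gamma(v_a),v_b\rangle=\tfrac12\Sigma_{ab}$, so $\mathbf R(\theta_i)\sim\mathcal N_p(0,\tfrac12\boldsymbol\Sigma)$.

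The only point requiring care is the orthogonality display over the precise range $1\le j\le m$. By the product-to-sum formulas each sum reduces to $\sum_{k=1}^N\cos(2\pi k r/N)=N\,\mathbf 1\{r\equiv0\ (\mathrm{mod}\ N)\}$ with $r=i\pm j$; the difference frequency yields $\delta_{ij}$, while the sum frequency contributes nothing precisely because $2\le i+j\le 2m\le N-1$, so $i+j$ is never a nonzero multiple of $N$. This indexing bookkeeping, rather than any analytic subtlety, is the main thing to get right; the infinite-dimensional nature of the problem is dispatched cleanly once everything is phrased through projections and the stated Gaussian characterisation.
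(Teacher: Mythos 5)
Your proof is correct and follows exactly the route the paper takes: joint Gaussianity of the $2m$ linear combinations of the i.i.d.\ Gaussian noise, plus the discrete orthogonality of the vectors $(\cos(t\theta_j))_{t\le N}$ and $(\sin(t\theta_j))_{t\le N}$ (each of squared norm $N/2$), which the paper states as "a simple consequence" and you spell out in full. The extra bookkeeping you supply --- cancellation of the mean, the $i+j\le N-1$ check in the product-to-sum reduction, and the passage from vanishing cross-covariance operators to independence via projections --- is exactly the detail the paper leaves implicit, so nothing further is needed.
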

\begin{proof}
Clearly, the vectors are jointly Gaussian. Recall that $\theta_j\in\Theta_N$ are the fundamental frequencies. Set $\boldsymbol{c}(\theta)=\big(\cos(t\theta)\colon 1\leq t\leq N\big)^\prime$ and $\boldsymbol{s}(\theta)=\big(\sin(t\theta)\colon 1\leq t\leq N\big)^\prime$, then the result is a simple consequence of the fact that the vectors $(\boldsymbol{c}(\theta)\colon \theta\in\Theta_N)$ and $(\boldsymbol{s}(\theta)\colon \theta\in\Theta_N)$ are mutually orthogonal and with norm $N/2$.
\end{proof}

\begin{Lemma}\label{le:Hexp}
Under Assumption~\ref{ass:gauss} and $\mathcal{H}_0$ we have that $$ \|A(\vartheta_1,\ldots,\vartheta_q)A^\prime(\vartheta_1,\ldots,\vartheta_q)\|_{\mathrm{tr}}\sim \sum_{i=1}^{q} \Xi_{i},$$ where $\Xi_{1},\ldots, \Xi_{q}$ are i.i.d. random variables distributed as $\mathrm{HExp}(\lambda_1,\lambda_2,\ldots)$ and $(\lambda_k)$ are the eigenvalues of $\Gamma$ .
\end{Lemma}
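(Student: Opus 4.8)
The plan is to reduce the trace norm to a sum of squared norms of Gaussian elements and then to identify the law of each such norm through the spectral decomposition of $\Gamma$. First I would observe that for any vector $A=(A_1,\ldots,A_{d-1})^\prime$ of elements of $H$ the matrix $AA^\prime$ is the Gram matrix $(\langle A_i,A_j\rangle)$, hence positive semidefinite, so its trace norm coincides with its trace. This gives the purely algebraic identity
$$
\|A(\vartheta_1,\ldots,\vartheta_q)A^\prime(\vartheta_1,\ldots,\vartheta_q)\|_{\mathrm{tr}}=\sum_{k=1}^q\big(\|R(\vartheta_k)\|^2+\|C(\vartheta_k)\|^2\big),
$$
valid for any data, which reduces the problem to determining the joint distribution of the summands on the right.

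Next I would invoke Lemma~\ref{le:orth}. Since $\vartheta_k=\theta_{nk}$ and, by Assumption~\ref{ass:d}, $nk\leq nq=(N-n)/2\leq m$, the frequencies $\vartheta_1,\ldots,\vartheta_q$ are distinct fundamental frequencies, so they lie in the range covered by the lemma. Consequently, under Assumption~\ref{ass:gauss} and $\mathcal{H}_0$, the $2q$ functions $R(\vartheta_1),\ldots,R(\vartheta_q),C(\vartheta_1),\ldots,C(\vartheta_q)$ are i.i.d.\ $\mathcal{N}_H(0,\tfrac12\Gamma)$.

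The core computation is the law of $\Xi_k:=\|R(\vartheta_k)\|^2+\|C(\vartheta_k)\|^2$. Let $(e_\ell)$ be an orthonormal eigenbasis of $\Gamma$ with eigenvalues $\lambda_\ell$. For $X\sim\mathcal{N}_H(0,\tfrac12\Gamma)$ the coordinates $\langle X,e_\ell\rangle$ are independent $N(0,\tfrac12\lambda_\ell)$, whence $\|X\|^2=\sum_\ell\langle X,e_\ell\rangle^2=\tfrac12\sum_\ell\lambda_\ell\,\chi^2_{1,\ell}$ with independent $\chi^2_1$ variables. Applying this to the independent pair $R(\vartheta_k),C(\vartheta_k)$ and summing, the two $\chi^2_1$ contributions at each index $\ell$ pool into a $\chi^2_2=2\,\mathrm{Exp}(1)$ variable, so the prefactor $\tfrac12$ is absorbed and
$$
\Xi_k=\sum_\ell\lambda_\ell E_{k,\ell}\sim\mathrm{HExp}(\lambda_1,\lambda_2,\ldots),
$$
with i.i.d.\ $\mathrm{Exp}(1)$ variables $E_{k,\ell}$. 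Because the full collection $\{R(\vartheta_k),C(\vartheta_k)\}$ is independent across $k$, the variables $\Xi_1,\ldots,\Xi_q$ are i.i.d.\ $\mathrm{HExp}(\lambda_1,\lambda_2,\ldots)$, which is exactly the assertion.

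I do not expect a serious obstacle; the argument is essentially bookkeeping. The one point demanding care is the passage from two independent real Gaussian norms to a single exponential law: one must combine the real and imaginary coordinates eigenvalue-by-eigenvalue and use $\chi^2_2=2\,\mathrm{Exp}(1)$ so that the $\tfrac12$ from the covariance $\tfrac12\Gamma$ cancels precisely. A secondary check is that the frequencies $\vartheta_k$ indeed fall within the range covered by Lemma~\ref{le:orth}, which follows from Assumption~\ref{ass:d} as noted above.
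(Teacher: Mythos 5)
Your proposal is correct and follows essentially the same route as the paper: the algebraic identity reducing the trace norm to $\sum_{k=1}^q(\|R(\vartheta_k)\|^2+\|C(\vartheta_k)\|^2)$, then Lemma~\ref{le:orth} plus Parseval's identity in the eigenbasis of $\Gamma$ to identify each summand as $\mathrm{HExp}(\lambda_1,\lambda_2,\ldots)$. You merely spell out the $\chi^2_2=2\,\mathrm{Exp}(1)$ pooling step that the paper leaves as "follows easily," which is a welcome clarification but not a different argument.
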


\begin{proof}
Let $(v_k)$ be the eigenvectors of $\Gamma$. By Parseval's identity we have $\|R(\theta)\|^2=\sum_{k\geq 1}  \langle R(\theta),v_k\rangle^2$ and $\|C(\theta)\|^2=\sum_{k\geq 1}  \langle C(\theta),v_k\rangle^2$.
By Lemma~\ref{le:orth} it follows that $\langle R(\theta),v_k\rangle, \langle C(\theta),v_k\rangle$ are independent Gaussian with zero mean and variance $\lambda_k/2$. The result follows easily from $\|A(\vartheta_1,\ldots,\vartheta_q)A^\prime(\vartheta_1,\ldots,\vartheta_q)\|_{\mathrm{tr}}= \sum_{k=1}^q\big(\|R(\vartheta_k)\|^2 +\|C(\vartheta_k)\|^2\big)$.
\end{proof}

\medskip

\begin{proof}[Proof of Proposition~\ref{pr:anova_tr}]
We verify the identity in the functional case.
\begin{align*}
&\sum_{k=1}^{q} \left(\left\|R(\vartheta_{k})\right\|^{2} +\left\|C(\vartheta_{k})\right\|^{2}\right)\\
&\quad= \sum_{k=1}^{q}
\left(
\left\|\frac{n}{\sqrt{N}}\sum_{t=1}^{d} (\overline{Y}_{t}-\overline{Y}) \cos(\vartheta_{k} t)\right\|^{2}
+\left\|\frac{n}{\sqrt{N}}\sum_{t=1}^{d} (\overline{Y}_{t}-\overline{Y}) \sin(\vartheta_{k} t)\right\|^{2}
\right) \\
&\quad=
\frac{n}{d}\sum_{t=1}^{d}\sum_{s=1}^{d} \langle\overline{Y}_{t}-\overline{Y},\overline{Y}_{s}-\overline{Y}\rangle \sum_{k=1}^q\left(\cos(\vartheta_{k} t)\cos(\vartheta_{k}s)+\sin(\vartheta_{k} t)\sin(\vartheta_{k} s)\right)\\
&\quad=
\frac{n}{d}\sum_{t=1}^{d}\sum_{s=1}^{d} \langle\overline{Y}_{t}-\overline{Y},\overline{Y}_{s}-\overline{Y}\rangle \left(qI\{t=s\}-1/2I\{t\neq s\}\right)\\
&\quad=
\frac{n}{2}\sum_{t=1}^{d}\|\overline{Y}_{t}-\overline{Y}\|^2-\frac{1}{2}\sum_{t=1}^{d}\sum_{s=1}^{d} \langle\overline{Y}_{t}-\overline{Y},\overline{Y}_{s}-\overline{Y}\rangle\\
 &\quad=\frac{1}{2}\sum_{t=1}^{d} n\norm{\overline{Y}_{t}-\overline{Y}}^{2}.
\end{align*}
\end{proof}

\section{Proofs of the results of Sections \ref{s:dep}
and  \ref{s:asy} }\label{s:asy-p}

\begin{proof}[Proof of Proposition~\ref{pr:assfundep}]
According to Theorem~\ref{th:asympt} we
have that the vector of functions $(D_N^Z(\vartheta_{1}),\ldots, D_N^Z(\vartheta_{q}))$ will converge
weakly to a $q$--vector $\big(\mathcal{N}_1,\ldots,
\mathcal{N}_q\big)\in H^k$ where
$\mathcal{N}_k\stackrel{\text{ind}}{\sim}
\mathcal{CN}_H(0, \mathcal{F}_{\vartheta_{k}})$. Hence,
$\|R_N^Z(\vartheta_{k})\|^2+\|C_N^Z(\vartheta_{k})\|^2=\|D_N^Z(\vartheta_{k})\|^2\stackrel{d}{\to}\|\mathcal{N}_k\|^2$. For
any $\theta\in[-\pi,\pi]$ the operator $\mathcal{F}_\theta$ is
trace class, symmetric and non-negative definite. So it possesses the same
properties as a covariance operator. In particular we have a spectral
decomposition of the form $\mathcal{F}_\theta=\sum_{m\geq
1}\lambda_m(\theta)\varphi_m(\theta)\otimes\varphi_m(\theta)$, where
$\lambda_m(\theta)\geq 0$ are the eigenvalues (in descending order)
and $\varphi_m(\theta)$ are the corresponding (possibly complex
valued) eigenfunctions of $\mathcal{F}_\theta$.  By Parceval's
identity
\begin{equation} \label{e:proof42}
\|\mathcal{N}_k\|^2=\sum_{m\geq 1}\langle \mathcal{N}_k,\varphi_m(\vartheta_{k})\rangle\overline{\langle \mathcal{N}_k,\varphi_m(\vartheta_{k})\rangle}.
\end{equation}
The $(\varphi_m(\vartheta_{k})\colon m\geq 1)$ are, in fact, the principal components of $\mathcal{N}_k$. By their orthogonality, the normal scores $(\langle \mathcal{N}_k,\varphi_m(\vartheta_k)\rangle\colon m\geq 1)$ are independent $\mathcal{CN}\big(0,\lambda_m(\vartheta_k)\big)$. Hence,
$
(
\mathrm{Re}(\langle \mathcal{N}_k,\varphi_m(\vartheta_{k})\rangle)$,
$\mathrm{Im}(\langle \mathcal{N}_k,\varphi_m(\vartheta_{k})\rangle)
)\sim
(N_1,N_2)
$
where $N_1$ and $N_2$ are i.i.d. $N(0,\lambda_m(\vartheta_{k})/2)$. It implies that that $m$-th term of the sum in \eqref{e:proof42} is distributed as $\lambda_m(\vartheta_{k})\times \mathrm{Exp}(1)$.

\end{proof}

\begin{proof}[Proof of Proposition~\ref{pr:consistff}]

From the definition of $A(\vartheta_1,\ldots,\vartheta_q)$ and \eqref{eq:altern} we deduce
\begin{align*}
&\|A(\vartheta_1,\ldots,\vartheta_q)A^\prime(\vartheta_1,\ldots,\vartheta_q)\|_\mathrm{tr}=\sum_{j=1}^q\|D^Y_N(\vartheta_{j})\|^2\\
&\quad= \sum_{j=1}^q\left[\|D^Z_N(\vartheta_{j})\|^2+n\|D^w_d(\vartheta_{j})\|^2+2\sqrt{n}\,\mathrm{Re}\big(\langle D^w_d(\vartheta_{j}),D^Z_N(\vartheta_{j})\rangle\big)\right].
\end{align*}
When the noise satisfies the assumptions in Theorem~\ref{th:asympt},
then it follows that each term $\|D^Z_N(\vartheta_{j})\|^2=O_P(1)$.
\end{proof}

\medskip
\begin{proof}[Proof of Proposition~\ref{pr:allconsist}]
This goes along the lines of the proof of Proposition~\ref{pr:consistff}.
\end{proof}

\section{Estimation of covariance and spectral density}
\label{s:prac} To implement the
tests developed in Section~\ref{s:tests}, we have to estimate $\bSig$ and $\Gamma$. The tests of Section~\ref{s:dep} require
the estimation of a spectral density. In the following we will outline the estimation problem in the fully functional setting. The key point is to derive estimators which are consistent under $\mathcal{H}_0$ and $\mathcal{H}_A$. Failing to be consistent under $\mathcal{H}_A$, may still lead to consistent tests, but can have a strong negative impact on the power of the tests.

We recall that $C_h=\mathrm{Cov}(Y_{t+h},Y_t)$ is the lag-$h$ autocovariance operator of the time series $(X_t)$ and hence $C_0=\Gamma$. We stress that under the general model \eqref{e:model-G}, the process $(X_t)$ is covariance stationary.  Once we have estimators $\widehat w_t=\widehat w_{t+d}$ (we will require $\sum_{t=1}^d\widehat w_t=0$) it is natural to set
\begin{equation} \label{sigmaest}
\widehat{C}_h= \frac{1}{N} \sum_{t=1}^{N-h}
(Y_{t+h}-\widehat{w}_{t+h}-\overline Y)\otimes (Y_{t}-\widehat{w}_{t}-\overline Y).
\end{equation}
The following lemma translates the consistency rate for $\widehat{w}_t$ to a rate for $\widehat{C}_h$. We let $\|\cdot\|_\mathcal{S}$ be the Hilbert-Schmidt norm on the set of compact operators on $H$ and use $c_0,c_1,c_2$ for constants that do not depend on $N$ and $h$.
\begin{Lemma}\label{le:consistentCh}
Consider model \eqref{e:model-G}, Assumption~\ref{ass:d} and assume that $(Z_t)$ is $L^4$-$m$--approximable. Assume further that $N E\|\widehat w_t-w_t\|^2$ is bounded by some constant $b_0$ for all $t,N\geq 1$. Then
$$
E\|\widehat C_h- C_h\|_\mathcal{S}\leq c_0\sqrt{\frac{|h|\vee 1}{N}}.
$$
\end{Lemma}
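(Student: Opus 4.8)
The plan is to compare $\widehat C_h$ with the sample autocovariance of the \emph{true} noise and to show that the error incurred by plugging in $\widehat w_t$ and $\overline Y$ is negligible at the claimed rate; throughout I take $h\ge 0$ without loss of generality. First I would record that under model \eqref{e:model-G}, since $w_t=w_{t+d}$, $\sum_{t=1}^d w_t=0$ and $N=dn$, we have $N^{-1}\sum_{t=1}^N w_t=0$, so the grand mean is $\overline Y=\mu+\overline Z$ with $\overline Z:=N^{-1}\sum_{t=1}^N Z_t$. Writing $\Delta_t:=w_t-\widehat w_t$, the residuals in \eqref{sigmaest} then decompose cleanly as
\[
Y_t-\widehat w_t-\overline Y=Z_t-\overline Z+\Delta_t .
\]
Substituting this identity into \eqref{sigmaest} and expanding the tensor product yields a principal term $N^{-1}\sum_{t=1}^{N-h}Z_{t+h}\otimes Z_t$ together with eight remainder terms, each carrying at least one factor $\overline Z$ or $\Delta_t$. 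The rest of the argument estimates these pieces separately and recombines them by the triangle inequality for $\|\cdot\|_{\mathcal S}$.

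For the principal term I would split off the sampling bias,
\[
\frac1N\sum_{t=1}^{N-h}Z_{t+h}\otimes Z_t-C_h
=\frac1N\sum_{t=1}^{N-h}\big(Z_{t+h}\otimes Z_t-C_h\big)-\frac hN\,C_h .
\]
The deterministic term obeys $\tfrac hN\|C_h\|_{\mathcal S}\le \tfrac hN\,E\|Z_0\|^2$, and since $h\le N$ one has $\tfrac hN\le \sqrt{(|h|\vee1)/N}$, so it is of the required order. The stochastic part is the centred average of $V_t:=Z_{t+h}\otimes Z_t-C_h$, a mean-zero stationary sequence in the Hilbert space of Hilbert--Schmidt operators. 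Because $(Z_t)$ is $L^4$-$m$-approximable, the product sequence $(Z_{t+h}\otimes Z_t)_t$ is $L^2$-$m$-approximable, whence $E\|V_0\|_{\mathcal S}^2\le E\|Z_0\|^4<\infty$ and its lag covariances $\gamma_V(k)=E\langle V_0,V_k\rangle_{\mathcal S}$ are absolutely summable. Estimating $E\|\sum_{t=1}^{N-h}V_t\|_{\mathcal S}^2=\sum_{t,s}\gamma_V(t-s)\le (N-h)\sum_k|\gamma_V(k)|$ and applying Jensen's inequality gives $E\|N^{-1}\sum_{t=1}^{N-h}V_t\|_{\mathcal S}\le\big(\sum_k|\gamma_V(k)|\big)^{1/2}N^{-1/2}$, again inside the target rate.

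The remainder terms I would dispatch by crude moment bounds, using $E\|Z_t\|^2<\infty$, the hypothesis $E\|\Delta_t\|^2\le b_0/N$, and $E\|\overline Z\|^2\le c_1/N$, the last being the standard consequence of the summability \eqref{e:abssymcov} of the noise autocovariances. Each term is bounded by the triangle inequality followed by $\|A\otimes B\|_{\mathcal S}=\|A\|\,\|B\|$ and Cauchy--Schwarz in $L^2$; for example
\[
E\Big\|\frac1N\sum_{t=1}^{N-h}Z_{t+h}\otimes\Delta_t\Big\|_{\mathcal S}
\le\frac1N\sum_{t=1}^{N-h}\big(E\|Z_0\|^2\big)^{1/2}\big(E\|\Delta_t\|^2\big)^{1/2}
\le\big(E\|Z_0\|^2\big)^{1/2}(b_0/N)^{1/2},
\]
while the terms quadratic in $\overline Z$ and $\Delta_t$ are $O(N^{-1})$ and the mixed $\overline Z,\Delta_t$ terms even smaller. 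All of these are $O(N^{-1/2})$, hence dominated by $\sqrt{(|h|\vee1)/N}$. Collecting the bias, the fluctuation, and the remainders produces the asserted inequality for a suitable $c_0$.

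I expect the main obstacle to be the fluctuation estimate, specifically verifying that $\sum_k|\gamma_V(k)|$ is controlled uniformly enough in $h$. This is where the upgrade from $L^2$- to $L^4$-$m$-approximability is essential: the product process $(Z_{t+h}\otimes Z_t)_t$ must be shown to be $L^2$-$m$-approximable with summable covariances, and one must check that the summability constant does not grow faster than linearly in $h$ (in fact it stays bounded, the only effect of the shift being a second, equally bounded, block of covariances near lag $h$ where the factor $Z_{t+h}$ reappears). Since even a linear dependence on $h$ would still be absorbed by the factor $\sqrt{|h|\vee1}$ in the target, this step, once the $m$-approximability of the product is established along the lines of \citetext{hormann:kokoszka:2010}, closes the argument.
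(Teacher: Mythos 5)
Your proposal is correct and follows essentially the same route as the paper: the same identity $Y_t-\widehat w_t-\overline Y=Z_t-\overline Z+(w_t-\widehat w_t)$, the same expansion into the principal $Z_{t+h}\otimes Z_t$ term plus remainders carrying a factor $\overline Z$ or $w_t-\widehat w_t$, and the same key estimate $E\big\|\tfrac1N\sum_t (Z_{t+h}\otimes Z_t)-C_h\big\|_{\mathcal{S}}=O\big(\sqrt{(|h|\vee 1)/N}\big)$, which the paper simply cites from Lemma~4 of \citetext{hormann:kidzinski:hallin:2015} rather than re-deriving as you sketch (your remark that a covariance-summability constant growing linearly in $h$ is absorbed by the factor $\sqrt{|h|\vee 1}$ is exactly the point). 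The only difference is in the remainder terms: you bound the cross terms such as $\tfrac1N\sum_t Z_{t+h}\otimes(w_t-\widehat w_t)$ crudely by Cauchy--Schwarz at rate $N^{-1/2}$, which indeed suffices for the bound as stated, whereas the paper exploits the $d$-periodicity of $w_t-\widehat w_t$ and the constraint $\sum_{t=1}^d(w_t-\widehat w_t)=0$ to obtain exact cancellations and sharper rates of order $\sqrt{d}/N$ for these terms---a refinement the lemma does not require.
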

The most simple estimator is $\widehat w_t=\overline{Y}_{t}-\overline{Y}$. Under $L^2$-$m$--approximability this estimator satisfies the consistency assumption of the lemma.
\begin{proof} Set $\nu_2(X)=\big(E\|X\|^2\big)^{1/2}$ and assume without loss of generality that $h\geq 0$. Let us first remark that $\overline{Y}=\mu+\overline{Z}$ and that basic properties for $L^2$-$m$--approximable sequences lead to
\begin{equation}\label{eq:mean}
N E\|\overline Z\|^2\leq 2\nu_2(Z_0)\sum_{k\geq 0}\nu_2(Z_0-Z_0^{(k)})=:c_1.\end{equation}
Then we decompose
$
(Y_{t+h}-\widehat{w}_{t+h}-\overline Y)\otimes (Y_{t}-\widehat{w}_{t}-\overline Y)-C_h
$
into
\begin{align*}
&[Z_{t+h}\otimes Z_t-C_h]+(w_{t+h}-\widehat{w}_{t+h})\otimes (w_{t}-\widehat{w}_{t}) +\overline Z\otimes \overline Z\\[1ex]
&\qquad-\big[\overline Z\otimes (w_t-\widehat{w}_t)+(w_{t+h}-\widehat{w}_{t+h})\otimes \overline{Z}\big]\\[1ex]
&\qquad-\big[\overline Z\otimes Z_t+Z_{t+h}\otimes \overline Z\big]\\
&\qquad+\big[(w_{t+h}-\widehat{w}_{t+h})\otimes Z_t+Z_{t+h}\otimes (w_t-\widehat{w}_t)\big]=:\sum_{j=1}^9A_t^{(j)}.
\end{align*}
For each $j\in \{1,\ldots,9\}$ we need a bound for $\kappa_j:=\frac{1}{N}E\|\sum_{t=1}^{N-h} A_t^{(j)}\|_\mathcal{S}$. Let us also introduce $\tilde\kappa_j:=\frac{1}{N}E\|\sum_{t=1}^{N} A_t^{(j)}\|_\mathcal{S}$ and note that $\kappa_j\leq \tilde\kappa_j+\frac{h}{N}\max E\|A_t^{(j)}\|_\mathcal{S}$.

By  Lemma~4 in \citetext{hormann:kidzinski:hallin:2015} we have that $\kappa_1\leq c_2 \sqrt{\frac{h\vee 1}{N}}$. Then since
$$\sum_{t=1}^N A_t^{(2)}=n\sum_{k=1}^d(w_{k+h}-\widehat{w}_{k+h})\otimes (w_{k}-\widehat{w}_{k}),$$ we have that $\tilde\kappa_2\leq \frac{b_0}{N}$ and thus $\kappa_2\leq \frac{b_0}{N}(1+\frac{h}{N})$.
By \eqref{eq:mean} we have $\kappa_3=E\|\overline Z\|^2\leq \frac{c_1}{N}$.
Next, by the assumptions on $w_t$ and $\widehat w_t$ we have $\tilde\kappa_4=\
\tilde\kappa_5=0$ and hence $\kappa_4,\kappa_5\leq \sqrt{\frac{c_1}{N}\frac{b_0}{N}}\big(1+\frac{h}{N}\big).$ Again by \eqref{eq:mean} it follows that $\kappa_6,\kappa_7\leq\frac{c_1}{N}$. Finally, it can be shown along the same lines that $\kappa_8,\kappa_9\leq \sqrt{\frac{dc_1}{N}\frac{b_0}{N}}$. The proof follows by collecting all terms.
\end{proof}

A simple implication of this lemma is that whenever we have $L^2$ consistent estimators for $w_t$, then $\widehat\Gamma=\widehat C_0$ is a consistent estimator of $\Gamma$.

Turning to the estimation of the spectral density, we propose a lag window estimator of the form
\begin{equation} \label{e:win2}
\mathcal{\widehat F}_\theta
= \sum_{|h|\leq b_{N}}\gamma\left (\frac{h}{b_{N}}\right )
\widehat{C}_h e^{-\mathrm{i} h\theta},\quad 0<b_{N}<N,
\end{equation}
where the function $\gamma$ is continuous in zero and such that $\gamma(0)=1$, $\abs{\gamma(x)}\leq 1,\;
\forall x$ and $\gamma(x)=0$ for $\abs{x}>1$. The bandwidth satisfies
$b_{N}\to \infty$ and ${b_{N}}/{N}\to 0$, with  more specific
rate stated in the next proposition.

\begin{Proposition}\label{lem:gammas}
Consider the setup of Lemma~\ref{le:consistentCh} and suppose that
  $b_N \to\infty$, such that $b_N=o\big(N^{1/3}\big)$. Then \
  $\sup_{\theta\in[-\pi,\pi]}\|\mathcal{F}_\theta-\mathcal{\widehat{
    F}}_\theta\|_\mathcal{S}\to 0$ in probability.
\end{Proposition}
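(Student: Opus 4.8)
The plan is to decompose $\mathcal{F}_\theta-\widehat{\mathcal F}_\theta$ into a deterministic bias part and a stochastic part, to bound the Hilbert--Schmidt norm of each by a quantity that is \emph{free of} $\theta$, and then to show that each such bound tends to zero; the $\theta$-freeness is what upgrades pointwise estimates to the required uniform (in $\theta$) statement. Concretely, I would write
\[
\mathcal{F}_\theta-\widehat{\mathcal F}_\theta
=\underbrace{\sum_{|h|\le b_N}\bigl(1-\gamma(h/b_N)\bigr)C_h e^{-\mathrm{i}h\theta}+\sum_{|h|>b_N}C_h e^{-\mathrm{i}h\theta}}_{=:B_N(\theta)}
+\underbrace{\sum_{|h|\le b_N}\gamma(h/b_N)\,(C_h-\widehat C_h)\,e^{-\mathrm{i}h\theta}}_{=:S_N(\theta)}.
\]
Since multiplication by $e^{-\mathrm{i}h\theta}$ is an isometry for $\|\cdot\|_\mathcal{S}$ and $|\gamma|\le 1$, the triangle inequality yields the $\theta$-free bounds
\[
\sup_\theta\|B_N(\theta)\|_\mathcal{S}\le\sum_{|h|\le b_N}\bigl|1-\gamma(h/b_N)\bigr|\,\|C_h\|_\mathcal{S}+\sum_{|h|>b_N}\|C_h\|_\mathcal{S},\qquad \sup_\theta\|S_N(\theta)\|_\mathcal{S}\le\sum_{|h|\le b_N}\|C_h-\widehat C_h\|_\mathcal{S}.
\]

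For the bias, the tail sum vanishes because $\sum_{h}\|C_h\|_\mathcal{S}<\infty$ by \eqref{e:abssymcov} and $b_N\to\infty$. For the remaining sum, continuity of $\gamma$ at $0$ with $\gamma(0)=1$ gives $|1-\gamma(h/b_N)|\to 0$ for each fixed $h$, while the summands are dominated by $2\|C_h\|_\mathcal{S}$, which is summable; dominated convergence then forces this term to $0$. Hence $\sup_\theta\|B_N(\theta)\|_\mathcal{S}\to 0$ deterministically.

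For the stochastic part I would pass to expectations and invoke the rate from Lemma~\ref{le:consistentCh}, namely $E\|\widehat C_h-C_h\|_\mathcal{S}\le c_0\sqrt{(|h|\vee 1)/N}$, to obtain
\[
E\Bigl[\sup_\theta\|S_N(\theta)\|_\mathcal{S}\Bigr]\le\frac{c_0}{\sqrt N}\sum_{|h|\le b_N}\sqrt{|h|\vee 1}\le c\,\sqrt{\frac{b_N^3}{N}}.
\]
Since $b_N=o(N^{1/3})$ we have $b_N^3/N\to 0$, so this expectation tends to $0$, and Markov's inequality gives $\sup_\theta\|S_N(\theta)\|_\mathcal{S}\to 0$ in probability. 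Combining the bias and stochastic bounds completes the argument.

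The main obstacle is the elementary but crucial estimate $\sum_{|h|\le b_N}\sqrt{|h|\vee 1}=O\bigl(b_N^{3/2}\bigr)$: coupled with the $N^{-1/2}$ rate of Lemma~\ref{le:consistentCh} it produces the bound $\sqrt{b_N^3/N}$, which is precisely why the bandwidth restriction $b_N=o(N^{1/3})$ is both natural and sufficient. Everything else is bookkeeping, the only subtlety being to keep every bound independent of $\theta$ so that pointwise control upgrades to the uniform conclusion.
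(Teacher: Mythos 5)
Your proposal is correct and follows essentially the same route as the paper: a triangle-inequality split into a deterministic bias term (handled by summability of $\|C_h\|_\mathcal{S}$ and dominated convergence) and a stochastic term bounded in expectation via Lemma~\ref{le:consistentCh}, yielding the $\sqrt{b_N^3/N}$ rate that the bandwidth condition kills. Your explicit remark that every bound is $\theta$-free, so the supremum over $\theta$ comes for free, is a point the paper leaves implicit but is exactly the right justification for the uniform claim.
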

\begin{proof}
We will adapt the proof of Proposition~4 of \citetext{hormann:kidzinski:hallin:2015} to our setting. By definition of the estimator and by using the triangular inequality, we have
 \begin{align*}
\norm{\mathcal{F}_{\theta}-\widehat{\mathcal{F}}_{\theta}}_{\mathcal{S}} &= \left\|\sum_{h\in \mathbb{Z}} C_{h} e^{-ih \theta} - \sum_{\abs{h} \leq b_N}\gamma\left (\frac{h}{b_{N}}\right ) \widehat{C}_{h} e^{-i\theta h}\right\|_{\mathcal{S}}\\
&\leq \sum_{\abs{h} \leq b_N}\left(\|C_h-\widehat{C}_{h}\|_\mathcal{S}+|1-\gamma(h/b_N)|\|C_h\|_\mathcal{S}\right)+\sum_{|h|>b_N} \|C_h\|_\mathcal{S}.
 \end{align*}
Taking expectations, we obtain by Lemma~\ref{le:consistentCh}
$$
E\norm{\mathcal{F}_{\theta}-\widehat{\mathcal{F}}_{\theta}}_{\mathcal{S}} \leq 2c_0 \sqrt{\frac{b_N^3}{N}}+\sum_{\abs{h} \leq b_N} |1-\gamma(h/b_N)|\|C_h\|_\mathcal{S}+\sum_{|h|>b_N} \|C_h\|_\mathcal{S}.
$$
By our assumptions, all three terms on the right hand side tend to zero. For the first, this is immediate from our assumption on $b_N$. For the third we use \eqref{e:abssymcov} and for the second dominated convergence.
 \end{proof}
It is an immediate consequence of Proposition~\ref{lem:gammas} and
Corollary 1.6 on p.\ 99  of \citetext{gohberg:1990}
 that
\[
\sup_{\theta\in [-\pi,\pi]}\sup_{m\geq
1}|\lambda_m(\theta)-\hat\lambda_m(\theta)|=o_P(1),
\]
justifying the approximation of the model eigenvalues by the
estimated eigenvalues.

\section{Local consistency} \label{s:local}

In this section we focus for brevity only on the
functional test statistics $T^{\tt FTR_1}$ and $T^{\tt FAV}$ (or equivalently $T^{\tt FTR_2}$).  The multivariate tests require additional tuning  (choice of the basis). For a fixed basis the discussion below can be done analogously.

We consider consistency of the test, when the alternative shrinks to the null with growing sample size.
An important finding of this section is that if the period $d$ is relatively large, and if we
expect a smooth change for the periodic trend, the test based on
 $T^{\tt FTR_1}$ is preferable to
the ANOVA approach. In contrast, if the periodic signal is more erratic, the ANOVA approach has better local power features. This confirms simulation results in Section~\ref{ss:localpower}.

More specifically we consider here local alternatives $(w_t (u)\colon
1\leq t\leq d)$, with \emph{mean square sum of the signal}
$\text{MSS}_{\text{sig}}=\frac{1}{d}\sum_{k=1}^d\|w_k\|^2=\varrho^2\to 0$.
We analyze three scenarios for
$\mathcal{H}_A$. In each we define $w_t=\omega_t-\bar{\omega}$ with
$\bar{\omega}=\frac{1}{d}\sum_{k=1}^d\omega_k$. Then we consider for some
$\varrho^2>0$ the following:
\begin{enumerate}[(A)]
\item $\omega_t$ are orthogonal functions with
$\|\omega_t\|^2=\frac{d}{d-1}\varrho^2$\,;
\item $\omega_t=\omega_0\big[\cos(2\pi t/d)
+\sin(2\pi t/d)\big]$ with $\|\omega_0\|^2=\varrho^2$\,;
\item $\omega_t=\frac{1}{\sqrt{d}}\big(\sum_{k=1}^t v_k-\frac{t}{d}\sum_{k=1}^d v_k\big)$ for orthogonal functions $v_k\in H$ with scaling $\norm{v_k}^2=\frac{12d^2}{d^2-1}\varrho^2$.
\end{enumerate}
Straightforward computations show that indeed for (A)--(C) we have
$\text{MSS}_{\text{sig}}=\varrho^2.$

In setting (A) the periodic pattern is extremely irregular and
intuitively it should be very unfavorable for the test based on
 $T^{\tt FTR_1}$ since there is no
sinusoidal trend involved. In contrast, scenario (B) is tailor-made
for this test. Finally, scenario~(C) is supposed to provide a realistic alternative. It is based on the assumption that
the periodic trend $(w_t)$ changes smoothly. Orthogonality of the
$v_k$ is only imposed to make $\text{MSS}_{\text{sig}}$
computable.

Next we define the power functions
\begin{equation*}\label{eq:pvalue}
\kappa_1(\alpha)=P\big(T^{\tt FTR_1}>q_{1-\alpha}^{(1)}\big)\quad\text{and}\quad
\kappa_2(\alpha)=P\big( T^{\tt FAV}>q_{1-\alpha}^{(2)}\big),
\end{equation*}
where $q_{1-\alpha}^{(1)}$ and $q_{1-\alpha}^{(2)}$ are the
$(1-\alpha)$--quantiles of the respective null-distributions.

\begin{Proposition} \label{pr:localpower}
Let $\alpha\in (0,1)$ and assume that $\varrho^2=\varrho_n^2\to 0$ with $n\to\infty$. We impose Assumption~\ref{ass:gauss}.
Then the following claims hold.

\begin{enumerate}
\item When $n\varrho_n^2\to\infty$,
both power functions $\kappa_1(\alpha)$ and
$\kappa_2(\alpha)$ tend to one under all three alternatives.
\item When $n\varrho_n^2\to 0$ both power functions $\kappa_1(\alpha)$
and $\kappa_2(\alpha)$ tend to $\alpha$ under all three
alternatives.
\item Suppose $d=d_n\to\infty$ and $n\varrho_n^2\to 0$.
Then $\kappa_2(\alpha)\to \begin{cases}1&
\text{if $\sqrt{d_n}n\varrho_n^2\to \infty$;}\\
\alpha &\text{if $\sqrt{d_n}n\varrho_n^2\to 0$}.\end{cases}$
\item Suppose $d=d_n\to\infty$ and $n\varrho_n^2\to 0$. Under alternative (A) $\kappa_1(\alpha)\to \alpha$ while under alternatives (B) and  (C) we have that $\kappa_1(\alpha)\to \begin{cases}1& \text{if $d_nn\varrho_n^2\to \infty$;}\\
\alpha & \text{if $d_nn\varrho_n^2\to 0$.}\end{cases}$
\end{enumerate}

\end{Proposition}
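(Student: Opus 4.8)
The plan is to reduce everything to the algebraic decomposition already used in the proof of Proposition~\ref{pr:consistff}. Writing $D^Y_N(\vartheta_j)=\sqrt{n}\,D^w_d(\vartheta_j)+D^Z_N(\vartheta_j)$, both statistics (for $T^{\tt FAV}$ via Proposition~\ref{pr:anova_tr}, since $T^{\tt FAV}=\frac{2}{d}T^{\tt FTR_2}$) take the form
\[
\text{(noise)}+\text{(signal)}+\text{(cross)},
\]
where the signal terms are $n\|D^w_d(\vartheta_1)\|^2$ for $T^{\tt FTR_1}$ and $\frac{2}{d}\,n\sum_{j=1}^q\|D^w_d(\vartheta_j)\|^2=n\varrho^2$ for $T^{\tt FAV}$, the last equality using the identity $\sum_{j=1}^q\|D^w_d(\vartheta_j)\|^2=\frac{d}{2}\mathrm{MSS}_{\mathrm{sig}}$ recorded after Proposition~\ref{pr:consistff}, valid for all three alternatives. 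The first step is to evaluate $\|D^w_d(\vartheta_1)\|^2$ in each scenario. Using $\sum_{t=1}^d e^{-it\vartheta_1}=0$ and orthogonality one finds $\frac{d}{d-1}\varrho^2$ under (A) and $\frac{d}{2}\varrho^2$ under (B); under (C) a summation-by-parts turns $D^w_d(\vartheta_1)$ into $\frac{1}{d(1-e^{-i\vartheta_1})}\sum_{k=1}^d v_k e^{-ik\vartheta_1}$, giving $\frac{3d}{(d^2-1)\sin^2(\pi/d)}\varrho^2$, which is of order $d\varrho^2$. So the first-frequency signal is of order $\varrho^2$ under (A) but of order $d\varrho^2$ under (B) and (C).

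Next, under Assumption~\ref{ass:gauss} the noise parts have exact laws. By Lemmas~\ref{le:orth} and~\ref{le:Hexp}, $\|D^Z_N(\vartheta_1)\|^2\sim\mathrm{HExp}(\lambda_1,\lambda_2,\ldots)=:\Xi_1$ and the noise part of $T^{\tt FAV}$ equals $\frac{2}{d}\sum_{j=1}^q\Xi_j$ in law with the $\Xi_j$ i.i.d.; these laws are the same under $\mathcal{H}_0$ and under the local alternative, since the signal is deterministic. The cross terms have mean zero and, by the independence and Gaussianity from Lemma~\ref{le:orth}, their variances are bounded by a constant times the corresponding signal energy, namely $O(n\|D^w_d(\vartheta_1)\|^2)$ for $T^{\tt FTR_1}$ and $O(n\varrho^2/d)$ for $T^{\tt FAV}$. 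For fixed $d$ this settles parts 1 and 2 at once: if $n\varrho_n^2\to\infty$ the signal (which is strictly positive for each alternative) dominates the $O_P(1)$ noise and the $o_P(\text{signal})$ cross term, so both statistics diverge and the power tends to $1$; if $n\varrho_n^2\to 0$ both signal and cross term vanish, the statistic converges in law to its continuous null limit, and a Slutsky argument gives power $\to\alpha$.

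The regime $d=d_n\to\infty$ is where the two tests diverge in behaviour and is the heart of parts 3 and 4. The null law of $T^{\tt FTR_1}$ is $\mathrm{HExp}(\lambda_1,\lambda_2,\ldots)$ and does not depend on $d$, so $q_{1-\alpha}^{(1)}$ is a fixed constant; comparing it with the signal $n\|D^w_d(\vartheta_1)\|^2$ computed above yields part 4 directly, the signal being of order $n\varrho_n^2\to 0$ under (A) (hence $\kappa_1\to\alpha$), and of order $d_n n\varrho_n^2$ under (B),(C), with cross term $O_P(\sqrt{d_n n\varrho_n^2})$ negligible against the signal when $d_n n\varrho_n^2\to\infty$ (power $\to1$) and both vanishing when $d_n n\varrho_n^2\to 0$ (power $\to\alpha$). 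For $T^{\tt FAV}$ the threshold itself moves with $d$: since the $\Xi_j$ are i.i.d.\ with law independent of $d$, mean $\sum_k\lambda_k$ and finite variance $\sum_k\lambda_k^2$, the classical CLT gives $\sqrt{d}\big(\tfrac{2}{d}\sum_{j=1}^q\Xi_j-\tfrac{2q}{d}\sum_k\lambda_k\big)\stackrel{d}{\to}N\!\big(0,2\sum_k\lambda_k^2\big)$, so $q_{1-\alpha}^{(2)}$ lies within $O(d^{-1/2})$ of $\sum_k\lambda_k$. Rescaling the whole alternative decomposition by $\sqrt{d}$, the noise fluctuation converges to this Gaussian, the cross term becomes $\sqrt{d}\cdot O_P(\sqrt{n\varrho_n^2/d})=O_P(\sqrt{n\varrho_n^2})\to0$, and the signal contributes $\sqrt{d_n}\,n\varrho_n^2$; a Slutsky argument then gives power $\to1$ when $\sqrt{d_n}\,n\varrho_n^2\to\infty$ and power $\to\alpha$ when $\sqrt{d_n}\,n\varrho_n^2\to0$, which is part 3.

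The main obstacle is precisely this CLT-plus-scaling analysis for $T^{\tt FAV}$: one must locate the $(1-\alpha)$-quantile of the $d$-dependent null law at the finer scale $\sum_k\lambda_k+c\,d^{-1/2}$, verify that the deterministic signal enters at the matching order $\sqrt{d_n}\,n\varrho_n^2$ after centering and rescaling, and confirm that the cross term is genuinely negligible at that scale; the Chebyshev/Gaussian variance bounds suffice, but the bookkeeping of the three competing orders — the $d^{-1/2}$ null fluctuation, the $n\varrho_n^2$ signal, and the $\sqrt{n\varrho_n^2/d}$ cross term — is delicate. The only other non-routine computation is $\|D^w_d(\vartheta_1)\|^2$ under (C), where the summation-by-parts together with the identity $\sum_{t=1}^d t\,e^{-it\vartheta_1}=d\,e^{-i\vartheta_1}/(e^{-i\vartheta_1}-1)$ must be applied carefully, though this is straightforward once set up.
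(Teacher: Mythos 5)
Your proposal is correct and follows essentially the same route as the paper's proof: the decomposition into noise, signal, and cross term via \eqref{eq:altern}, the explicit evaluation of $\|D^w_d(\vartheta_1)\|^2$ under (A)--(C), and the CLT for the i.i.d.\ $\Xi_k$ to locate the moving quantile of $T^{\tt FAV}$ at scale $d^{-1/2}$ for part 3. (Your constant $\tfrac{d}{2}\varrho^2$ under (B) is in fact the one consistent with Table~\ref{tab:1}, versus the paper's stated $d\varrho_n^2$; the discrepancy is immaterial for the asymptotic conclusions.)
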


This proposition shows that whether one of the tests is asymptotically
consistent or not, is determined by $n\varrho_n^2\to\infty$ and $n\varrho_n^2\to 0$,
respectively. The interesting case is when $n\varrho_n^2\to 0$ and at the same time
$d=d_n\to\infty$. Here under setup (A) the statistic $T^{\tt FAV}$ is
preferable since it can still provide a consistent test. On the other
hand, statement~\emph{4} of the proposition shows superiority
of  $T^{\tt FTR_1}$ under a local
alternative related to settings (B) and (C). In these cases $T^{\tt FTR_1}$ allows additional shrinking the alternative by a factor $1/\sqrt{d_n}$ compared $T^{\tt FAV}$ in order to remain consistent.

\begin{proof}[Proof of Proposition~\ref{pr:localpower}]
Denote by $\| M^{Y}_{N}(\vartheta_1) \|_{\mathrm{tr}}, \; \|
M^{Z}_{N}(\vartheta_1) \|_{\mathrm{tr}}$ and
$\| M^{w}_{d}(\vartheta_1)
\|_{\mathrm{tr}}$  the trace statistics based on the time series
$Y_{t}$, $Z_{t}$  $(t=1,\ldots,N)$ and $w_{t}$ $(t=1, \ldots, d)$. From \eqref{eq:altern} one
easily deduce that \[\| M^{Y}_{N}(\vartheta_1) \|_{\mathrm{tr}}=\|
M^{Z}_{N}(\vartheta_1) \|_{\mathrm{tr}}+n \|
M^{w}_{d}(\vartheta_1) \|_{\mathrm{tr}}+O_P\Big(\big(n
\| M^{w}_{d}(\vartheta_1)
\|_{\mathrm{tr}}\big)^{1/2}\Big).\] The trace statistics of the
periodic signal for alternatives (A), (B) and (C) are, respectively,
$\frac{d}{d-1}n\varrho_n^2$, $dn\varrho^2_n$ and $\frac{3dn\varrho^2_n
}{(d^{2}-1) \sin^{2}(\pi/d)}$. Consider, for example, alternative
(A). Then
\begin{align*}
P(\| M^{Y}_{N}(\vartheta_1) \|_{\mathrm{tr}}>q_{1-\alpha}^{(1)})&=P\left(\| M^{Z}_{N}(\vartheta_1) \|_{\mathrm{tr}}>q_{1-\alpha}^{(1)}-\frac{d}{d-1}n\varrho^2_n+O_P\bigg(\sqrt{\frac{d}{d-1}n\varrho^2_n}\bigg)\right).
\end{align*}
To verify claims \emph{1,2} and \emph{4} related to the trace based
statistic, it suffices to observe that this probability tends to
$\alpha$ if $n\varrho^2_n\to 0$ and to 1 if $n\varrho^2_n\to
\infty$. The statements under alternatives (B) and (C$^\prime$) are
proven analogously.

Concerning statistic $T^{\tt FAV}$  we first note that by the assumption
$\mathrm{MSS}_\mathrm{sig}=\varrho_n^2$ it readily follows that
\[
\frac{n}{d}\sum_{k=1}^d\langle w_k,\overline{Z}_k-\overline{Z}\rangle=\frac{n}{d}\sum_{k=1}^d\langle w_k,\overline{Z}_k\rangle =O_P\big(\sqrt{n\varrho_n^2/d}\big).
\]
With this one can easily prove that
\[
P( T^{\tt FAV} >q_{1-\alpha}^2)=P\Big( T^{\tt FAV}(Z) >q_{1-\alpha}^2-n\varrho_n^2+O_P\big(\sqrt{ n\varrho_n^2/d}\big)\Big).
\]
Here $T^{\tt FAV}(Z)$ is the ANOVA statistic computed from the noise.
Claims \emph{1} and \emph{2} are immediate.

We show claim \emph{3}. Let us impose for simplicity that Assumption~\ref{ass:d} holds. Then, since by Gaussianity the distribution of $T^{\tt FAV}(Z)$ is independent of $n$, we get by Corollary~\ref{c:anova} that

\begin{align*}
&P( T^{\tt FAV}(Y)>q_{1-\alpha}^2)=P\Big(\frac{2}{d}\sum_{k=1}^q\Xi_k> q_{1-\alpha}^\text{FAV} - n\varrho_n^2+O_P(\sqrt{n\varrho_n^2/d})\Big)\\
&\,=P\left(\frac{1}{\sqrt{q}}\sum_{k=1}^q(\Xi_k-E\Xi_1)> q_{1-\alpha}\left[\frac{1}{\sqrt{q}}\sum_{k=1}^q(\Xi_k-E\Xi_1)\right] -\frac{d n\varrho_n^2}{2\sqrt{q}}+O_P(\sqrt{n\varrho_n^2d/q})\right).
\end{align*}
By the central limit theorem $\frac{1}{\sqrt{q}}\sum_{k=1}^q(\Xi_k-E\Xi_1)\stackrel{d}{\to} N(0,\mathrm{Var}(\Xi_1))$ and hence the quantiles $q_{1-\alpha}\left[\frac{1}{\sqrt{q}}\sum_{k=1}^q(\Xi_k-E\Xi_1)\right]$ will converge to the corresponding normal quantile for $d\to\infty$. From this it is easy to conclude.
\end{proof}

\end{document}